\newcolumntype{P}[1]{w{c}{#1}}
\Crefname{lemmaapx}{Lemma}{Lemmata}
\crefname{section}{\S}{\S\S}
\title{Around Classical and Intuitionistic Linear Processes} 
\titlerunning{Around Classical and Intuitionistic Linear Processes} 
\author{Juan C. Jaramillo}{Unversity of Groningen, The Netherlands}{j.c.jaramillo.londono@rug.nl}{https://orcid.org/0009-0003-0973-4123}{Ministry of Science of Colombia (Minciencias).}\author{Dan Frumin}{Unversity of Groningen, The Netherlands}{dan@groupoid.moe}{https://orcid.org/0000-0001-5864-7278}{}
\author{Jorge A. P\'erez}{Unversity of Groningen, The Netherlands}{j.a.perez@rug.nl}{https://orcid.org/0000-0002-1452-6180}{Support of the Dutch Research Council (NWO) under project No.016.Vidi.189.046 (Unifying Correctness for Communicating Software) is gratefully acknowledged.}
\authorrunning{J.\,C.\,Jaramillo, D.\,Frumin, and J.\,A.\,P\'erez} 
\keywords{Process calculi, session types, linear logic} 
\begin{document}
\maketitle

\begin{abstract}
 Curry-Howard correspondences between Linear Logic (\LL) and session types
 provide a firm foundation for concurrent processes. 
As the correspondences hold for intuitionistic and classical versions of \LL (\ILL and \CLL), we obtain two different  families of type systems for concurrency. 
An open question remains: \emph{how do these two families exactly relate to each other?} 
 Based upon a translation from \CLL to \ILL due to Laurent, we provide two complementary answers, in the form of full abstraction results based on a typed observational equivalence due to Atkey. Our results elucidate hitherto missing formal links between seemingly related yet different type systems for concurrency. 
\end{abstract}

\section{Introduction}
\label{sec: introduction}

We address an open question on the logical foundations of concurrency, as resulting from Curry-Howard correspondences between linear logic (\LL) and session types. 
These correspondences, often referred to as \emph{\PaS}, connect
\LL propositions and session types, 
proofs in \LL and $\pi$-calculus processes, as well as 
cut-elimination  in \LL and process synchronization.
The result is type systems that elegantly ensure important forms of communication correctness for processes. 
The correspondence was discovered by Caires and Pfenning, who relied on an \emph{intuitionistic} presentation of \LL~(\ILL)~\cite{DBLP:conf/concur/CairesP10};   Wadler later presented it using  \emph{classical} \LL~(\CLL)~\cite{PropositionsAsSessions}. These two works triggered the emergence of multiple type systems for concurrency with firm logical foundations, based on (variants of) \ILL (e.g.,~\cite{DBLP:conf/ppdp/ToninhoCP11,DBLP:conf/fossacs/PfenningG15,DBLP:journals/pacmpl/BalzerP17,DBLP:conf/concur/CairesPPT19}) and \CLL (e.g.,~\cite{DBLP:conf/esop/CairesP17,DBLP:conf/fossacs/DardhaG18,DBLP:journals/pacmpl/KokkeMP19,DBLP:conf/concur/0001KDLM21,DBLP:journals/pacmpl/QianKB21}). While key differences between these two families of type systems,  intuitionistic and classical, have been observed~\cite{PiUll}, in this paper we ask:
can we formally relate them from the standpoint of (typed) process calculi?

From a logical standpoint, the mere existence of two different families of type systems may seem surprising---after all, the relationship between \ILL and \CLL is well understood~\cite{journal/jlc/Schellinx91,JudgementalAnalysis,Laurent}.
Laurent has given a thorough account of these relationships, including a translation from \CLL to \ILL~\cite{Laurent}.
A central insight in our work is the following: while translations from \CLL to \ILL are useful, they alone do not entail formal results for typed processes, and a satisfactory answer from the \PaS perspective must include process calculi considerations.

Let us elaborate.
Given some context $\Delta$ and a formula $A$, let us write 
$\CPJud \Delta$
and 
$\Delta \PiDiLLJud A$
to denote sequents in \CLL and \ILL, respectively.
Under 
the concurrent interpretation induced by 
\PaS, these sequents are annotated as
$P \CPJud \Delta$
and
$\Delta \PiDiLLJud P :: x:A$, respectively, 
where $P$ is a process, $x$ is a name, and  $\Delta$ is now a finite collection of assignments  $x_1 : A_1, \ldots, x_n : A_n$.
An assignment specifies a name's intended session protocol. This way, e.g., `$x: A \Tensor B$' (resp. `$x: A \parrType B$') says that   $x$ \emph{outputs} (resp. \emph{inputs}) a name of type $A$ before continuing as described by $B$. 
Also, `$x: \unit$' says that $x$ has completed its protocol and has no observable behavior.
The judgment $\Delta \PiDiLLJud P :: x:A$ has a rely-guarantee flavor: 
``$P$ \emph{relies} on the behaviors described by $\Delta$  to \emph{offer} a protocol $A$ on $x$''. 
Hence, the assignment $x:A$ in the right-hand side plays a special role: this is \emph{the} only observation made about the behavior of $P$. 
Differently, the judgment $P \CPJud \Delta$ simply reads as ``$P$ implements the behaviors described by $\Delta$''; as such, all assignments in $\Delta$ are equally relevant for observing the behavior of $P$.

Unsurprisingly, these differences between intuitionistic and classical processes arise in their associated   (typed) behavioral equivalences~\cite{LogicalRelations,DBLP:journals/pacmpl/KokkeMP19,ObservationsCP,DBLP:conf/lics/DerakhshanBJ21}.
For intuitionistic processes, theories of \emph{logical relations}~\cite{LogicalRelations,DBLP:conf/lics/DerakhshanBJ21} induce contextual equivalences in which only the right-hand side assignment matters in comparisons; the assignments in $\Delta$ are used to construct appropriate contexts.
For classical processes, we highlight Atkey's \emph{observed communication semantics}~\cite{ObservationsCP}, whose induced  observational equivalence  accounts  for the entire typing context.

Laurent's \emph{negative translation}  from \CLL to \ILL~\cite{Laurent}, denoted $\TranLauR{(-)}$,
translates formulas using the parameter $\myR$ (an arbitrary formula in \ILL) as a ``residual'' element. We have, e.g.,:
\[
\TranLauR{(A\otimes B)}=((\TranLauR{A}\multimap \myR)\otimes(\TranLauR{B}\multimap \myR))\multimap \myR
\]
As  
\cref{f:Transformations} shows, using $\TranLauR{(-)}$ we can transform  $\CPJud \Delta$ into $\TranLauR{(\Delta)}\PiDiLLJud \myR$.
Now, from the view of `propositions-as-sessions', we see that $\TranLauR{(-)}$ increases the size of formulas/protocols and that fixing $\myR = \unit$ results into the simplest residual protocol. 
Given this, Laurent's translation transforms $P \CPJud \Delta$
into $\TranLauR{(\Delta)}\PiDiLLJud \TranLauP{P} :: w : \unit$ (for some fresh $z$), 
where   $\TranLauP{P}$ is a process that reflects the translation.
The translation has an unfortunate effect, however: a classical process $P$ with observable behavior given by $\Delta$ is transformed into an intuitionistic process $\TranLauP{P}$ \emph{without observable behavior} (given by $w: \unit$). We conclude that, independently of the chosen $\myR$, the translation $\TranLauR{(-)}$ alone does not adequately  relate the concurrent interpretations of \CLL and \ILL, as it does not uniformly account for  observable behavior in $P$ and $\TranLauP{P}$. 

Our goal is to complement the scope of Laurent's translation, in a way that is consistent with existing theories of (typed) behavioral equivalence for logic-based processes.

\begin{figure}[t!]
\begin{tikzpicture}[thick,scale=0.9, every node/.style={scale=0.9}]
		\node  (0) at (2, 0) {$\boxed{\CPJud \Delta}$};
		\node (1) at (4.5, -2.25) {$\boxed{\TranLauR{\Delta}\PiDiLLJud \myR}$};
		\node (2) at (7, 0) {$\boxed{\CPJud \DualTranLauR{\Delta},\myR}$};
         \node(3) at (8.75,-0.65){\textcolor{darkgray}{\CLL}};
         \node(4) at (8.75,-1.35){\textcolor{darkgray}{\ILL}};
        \draw [dotted] (1.45,-1) -- (9.25,-1);
		\draw[thick,->] (0) -- (1) node[midway, below left]{$\TranLauR{(-)}$~\cite{Laurent}};
		\draw[thick, dashed,->] (0) -- (2)node[midway, above]{$\DualTranLauR{(-)}$~~[\Cref{sec:Laurent's denotations}]};
		\draw[thick,->]  (1) -- (2)node[midway, below right]{$\Dual{(-)}$};
\end{tikzpicture}
    \centering
\caption{Translations between \CLL and \ILL. In this paper, we shall fix $\myR = \unit$.}
\label{f:Transformations}
\vspace{-2mm}
\end{figure}
We proceed in two steps, shown in \cref{f:Transformations}. In the following,  we shall fix $\myR = \unit$ and omit `$\myR$' when clear from the context.
First,  there is a well-known {translation} from \ILL to \CLL, denoted $\Dual{(-)}$, under which a sequent 
$\Delta \PiDiLLJud A$
is transformed into
$\CPJud \Dual{\Delta}, A$. 
Our observation is that $\DualTranLau{(-)}$ (the composition of the two translations) goes from \CLL into  itself,
translating $\CPJud \Delta$ into $\CPJud \DualTranLau{\Delta}, \unit$.
At the level of processes, this allows us to consider the corresponding processes $P$ and $\TranLauP{P}$ in the common setting of classical processes.
To reason about their observable behavior we employ Atkey's  observational  equivalence~\cite{ObservationsCP}, denoted~`$\ObsEquivAtk$'.

Our second step leads to our \textbf{main contributions}: two full abstraction results that connect  
$\CPJud \Delta$
and
$\CPJud \DualTranLau{\Delta}, \unit$
from the perspective of \PaS.
\begin{itemize}
\item 
The first result, given in \secRef{sec:Laurent's denotations}, adopts a \emph{denotational} approach to ensure that $P$~(typable with $\CPJud \Delta$) and $\TranLauP{P}$ (typable with \emph{both} $\TranLau{\Delta} \PiDiLLJud \unit$ \emph{and} $\CPJud \DualTranLau{\Delta}, \unit$) are behaviorally equivalent. This full abstraction result ensures that 
$P \ObsEquivAtk Q$ iff $\TranLauP{P}\ObsEquivAtk \TranLauP{Q}$ (\Cref{cor:fullabsden}).

\item The second result, given in \secRef{sec:Transformers}, is an \emph{operational} bridge between $\CPJud \Delta$ and $\CPJud \DualTranLau{\Delta}, \unit$: \Cref{corollary: full abstraction transformers} ensures that $P \ObsEquivAtk Q$ iff $C[P] \ObsEquivAtk C[Q]$, where $C$ is a so-called \emph{transformer context}, which ``adapts'' observable behavior in processes using types in $\Delta$. 
\end{itemize}

\noindent Next, we recall \CP (Wadler's concurrent interpretation of \CLL), Atkey's observational equivalence, and Laurent's translation.
\secRef{sec:Laurent's denotations} and \secRef{sec:Transformers} develop our full abstraction results.
\secRef{s:conc} further discusses our contributions; in particular, we discuss how they are related to the \emph{locality} principle---one of the known distinguishing features between typed processes based on \PaS~\cite{PiUll,LinearLogicPropositions,DBLP:journals/jfp/Wadler14}. 

\section{Background}
\label{s:back}

\subparagraph*{Propositions-as-Sessions / Classical Processes (\CP).}
We shall work with \emph{classical processes}~(\CP) as proposed by  Wadler~\cite{PropositionsAsSessions}. 
Assuming an infinite set of \emph{names} ($x, y, z,\dots$), the set of \emph{processes} ($P, Q,\dots$) is defined as follows:
\begin{align*}
    P,Q::= \inact&\Sep (\nu x)P\Sep P\para Q \Sep \forward{x}{y}\Sep  \boutt{y}{x}(P\para Q) \Sep \impp{x}{y}P \Sep \Server{x}{y}P \Sep \Client{x}{y}P \\
    &  \Sep \choice{x}{i}P \Sep \case{x}{P}{Q}   \Sep \Emptyoutt{x}\Sep \Emptyimpp{x}P   \quad\quad \text{for } i\in \{1,2\}  
\end{align*}
\noindent
We write $P\Substitution{x}{y}$ to denote the capture-avoiding substitution of $y$ for $x$ in $P$.
We have usual constructs for inaction, restriction, and parallel composition. 
The forwarder $\forward{x}{y}$ equates $x$ and $y$.
We then have 
$\boutt{y}{x}(P\para Q)$ (send the restricted name $y$ along $x$, proceed as $P \para Q$)
and 
$\impp{x}{y}P$ (receive a name $z$ along $x$, proceed as $P\subst{z}{y}$).
Processes $\Server{x}{y}P$ and $\Client{x}{y}P$ denote a replicated input (server) and a client request, respectively.
Process $\choice{x}{i}P$ denotes the selection of one of the two alternatives of a corresponding branching process $\case{x}{P}{Q}$.
Processes $\Emptyoutt{x}$ and $\Emptyimpp{x}P$ enable coordinated closing of the session along $x$. In a statement, a name is \emph{fresh} if it is not among the names of the objects of the statement  (e.g., processes).

In $(\nu x)P$, name $x$ is bound in $P$; also, in $\boutt{y}{x}(P\para Q)$, $\impp{x}{y}P$, $\Server{x}{y}P$, and $\Client{x}{y}P$, name $y$ is bound in $P$ but not in $Q$.
The types are assigned to names and correspond to the following formulas of \CLL:
\[A,B::= \unit \Sep  \bot \Sep  A\Tensor B \Sep  A\parrType B\Sep  A\choiceType B\Sep  A\caseType B \Sep  \ServerTypeSingle{A}\Sep  \ClientTypeSingle{A}\]
The assignment $x:A$ says that the session protocol through $x$ goes as described by~$A$.
As we have seen, $x: A\Tensor B$ and $x:A\parrType B$ are read as sending and receiving along $x$, respectively.
Also, $x:A\choiceType B$ denotes the selection of either $A$ or $B$ along $x$, whereas $x:A\caseType B$ denotes the offer of $A$ and $B$ along $x$.
Finally, $x:\ServerTypeSingle{A}$ and $x:\ClientTypeSingle{A}$ assign server and client behaviors to $x$, respectively.
There then is a clear duality in the interpretation of the following pairs: $\Tensor$ and $\parrType$; $\choiceType$ and $\caseType$; and $\ServerTypeSingle{\empty}$ and $\ClientTypeSingle{\empty}$. It reflects reciprocity between the behavior of a name: when a process on one side sends, the process on the opposite side must receive, and vice versa.
 Formally, the dual type of $A$, denoted $\Dual{A}$, is defined as

\begin{tabular}{c c c c}
    $\Dual{\unit}:= \bot$ & $\Dual{(A\Tensor B)}:=\Dual{A} \parrType \Dual{B}$ & $\Dual{(A\caseType B)}:= \Dual{A} \choiceType \Dual{B}$ & $\Dual{(\ServerTypeSingle{A})}:=\ClientTypeSingle{\Dual{A}}$\\
    $\Dual{\bot}:=\unit$ & $\Dual{(A\parrType B)}:= \Dual{A} \Tensor \Dual{B}$  & $\Dual{(A \choiceType B)}:=  \Dual{A} \caseType \Dual{B}$ & $\Dual{(\ClientTypeSingle{A})}:=\ServerTypeSingle{\Dual{A}}$\\
\end{tabular}

Duality is an involution, i.e., $\Dual{(\Dual{A})}=A$. 
We write $\Delta, \Gamma$ to denote \emph{contexts}, a finite collection of assignments $x:A$.
The empty context is denoted `$\,\cdot\,$'.
The typing judgments are then of the form $P \CPJud \Delta$, with typing rules as in \Cref{fig:CP Rules}.
For technical convenience, we shall consider mix principles (rules \MixZero and \MixTwo, not included in~\cite{PropositionsAsSessions}), which enable the typing of useful forms of process composition.
This way, in \secRef{sec:Laurent's denotations}
we will consider \CPMixZero: the variant of \CP with \MixZero; 
in \secRef{sec:Transformers} we will consider \CPMixZeroTwo: the extension of  \CPMixZero with \MixTwo.

\begin{figure}
    \centering
\begin{mathpar}
    \inferrule*[Right=Id]{ }{\forward{x}{y}\CPJud x:A, y:A^\bot} 
    \and
    \inferrule*[Right=1]{ }{\Emptyoutt{x} \CPJud x:\unit}
    \and
    \inferrule*[Right=$\bot$]{P\CPJud \Gamma}{\Emptyimpp{x}P\CPJud \Gamma,x:\Lbot}\\
    \and
    \inferrule*[Right=$\Tensor$]{P\CPJud\Gamma,y:A \and Q\CPJud \Delta,x:B}{\boutt{y}{x}(P\para Q)\CPJud \Gamma,\Delta,x:A\Tensor B}
    \and
    \inferrule*[Right=$\parrType$]{P\CPJud \Gamma ,y:A, x:B}{\impp{x}{y}P\CPJud \Gamma, x:A \parrType B}
    \and
 \inferrule*[Right=$\choiceType_i$]{P\CPJud \Gamma,x:A_i}{\choice{x}{i}P\CPJud \Gamma,x:A_1\choiceType A_2}
 \and
 \inferrule*[Right=$\caseType$]{P\CPJud\Gamma,x:A_1 \and Q\CPJud \Gamma,x:A_2}{\case{x}{P}{Q}\CPJud \Gamma,x:A\caseType B}
 \and
 \inferrule*[Right=$\oc$]{P\CPJud \ClientTypeSingle{\Delta},y:A}{\Server{x}{y}P\CPJud\ClientTypeSingle{\Delta},x:\ServerTypeSingle{A}}\\
 \and
  \inferrule*[Right=$\textcolor{Black}{?}$]{P\CPJud \Delta,x:A}{\Client{x}{y}P\CPJud\Delta,x:\ClientTypeSingle{A}}
 \and
   \inferrule*[Right=C]{P\CPJud\Delta,x_1:\ClientTypeSingle{A},x_2:\ClientTypeSingle{A}}{P\Substitution{x_1}{x_2}\CPJud\Delta,x_1:\ClientTypeSingle{A}}
 \and
    \inferrule*[Right=W]{P\CPJud \Gamma}{P\CPJud \Gamma,x:\ClientTypeSingle{A}}
\and
    \inferrule*[Right=Cut]{P\CPJud \Gamma,x:A \and Q\CPJud \Delta,x:A^\bot}{\cut{x}{P}{Q}\CPJud \Gamma,\Delta}
    \and 
    \inferrule*[Right=\MixTwo]{P \CPJud \Delta \and  Q \CPJud \Gamma}{P \para Q \CPJud \Delta, \Gamma}
    \and     
            \inferrule*[Right=\MixZero]{ }{\inact\CPJud \cdot}    
    \end{mathpar}
    \caption{Typing rules. $\CP$ does not include `mix'. $\CPMixZero$ is $\CP + \MixZero$, $\CPMixZeroTwo$ is $\CPMixZero + \MixTwo$.}
    \label{fig:CP Rules}
\end{figure}

Note that the type system $\CPMixZero$ corresponds exactly to the sequent calculus for \CLL if we ignore name and process annotations.
This correspondence goes beyond typing: an important aspect of \PaS is that the dynamic behavior of processes (process reductions) corresponds to simplification of proofs (cut elimination).
In the following we will not need this reduction semantics, which can be found in, e.g., \cite{PropositionsAsSessions}.
Rather, we will use the denotational semantics of $\CPMixZero$ as defined by Atkey~\cite{ObservationsCP}, which we recall next.

\subparagraph*{Denotational Semantics for \CP.}
We adopt Atkey's denotational semantics for \CPMixZero~\cite{ObservationsCP}, which allows us to reason about observational equivalence of processes.
Here we recall the notions of configurations, observations, and denotations as needed for our purposes.
\begin{figure}[t]  
   \begin{mathpar}
        \inferrule[C:Proc]{P\CPJud \Gamma}{P\ConfJud{\Gamma}{\cdot}}
        \and
        \inferrule[C:Cut]{C_1\ConfJud{\Gamma_1, x:A}{\Theta_1}
          \and C_2 \ConfJud{\Gamma_2, x:A^\bot}{\Theta_2}}
        {C_1 \paraCut{x} C_2 \ConfJud{\Gamma_1,\Gamma_2}{\Theta_1,\Theta_2,x:A}}
        \and
        \inferrule[C:0]{ }{\ZeroConf\ConfJud{\cdot}{\cdot}}        
        \and
        \inferrule[C:W]{C\ConfJud{ \Gamma} { \Theta}}{C\ConfJud{ \Gamma, x:\ClientTypeSingle{A}} {\Theta}}
        \and
        \inferrule[C:Con]{C\ConfJud{\Gamma,x_1:\ClientTypeSingle{A},x_2:\ClientTypeSingle{A}}{\Theta}}{C\Substitution{x_1}{x_2}\ConfJud{\Gamma,x_1:\ClientTypeSingle{A}}{\Theta}}
    \end{mathpar}
    \caption{Classical Processes: Configurations.}
    \label{fig: Configurations}
\end{figure}
\begin{figure}[t]
  \begin{mathpar}
    \inferrule*[right=Stop]{ }{\ZeroConf\Downarrow ()}
    \and\inferrule*[right=Link]{\ConfigurationC{\Configuration{C'}{\Substitution{x}{y}}}\ObservationUpdate{x\mapsto a}}
    {\ConfigurationC{\forward{x}{y}\paraCut{x} C'}\ObservationUpdate{x\mapsto a, y\mapsto a}}
    \and
    \inferrule*[right=Comm]{\ConfigurationC{P\paraCut{x}Q}\ObservationUpdate{x\mapsto a}}
    {\ConfigurationC{\cut{x}{P}{Q}}\Observation}
    \and
    \inferrule*[right=0]{\ConfigurationC{\ZeroConf}\Observation}
    {\ConfigurationC{\inact}\Observation}
    \and
    \inferrule*[right=\text{$\Tensor\parrType$}]{\ConfigurationC{P\paraCut{y}(Q\paraCut{x}R)}\ObservationUpdate{x\mapsto a, y\mapsto b}}
    {\ConfigurationC{\boutt{y}{x}(P\para Q)\paraCut{x}\impp{x}{y}R}\ObservationUpdate{x\mapsto(a,b)}}
    \and
        \inferrule*[right=\text{$1\bot$}]{\ConfigurationC{P}\Observation}
    {\ConfigurationC{\Emptyoutt{x}\paraCut{x}\Emptyimpp{x}P}\ObservationUpdate{x\mapsto *}}
    \and
    \inferrule*[right=$\choiceType\caseType$]{\ConfigurationC{P\paraCut{x}Q_i}\ObservationUpdate{x\mapsto a}}
    {\ConfigurationC{\choice{x}{i}P\paraCut{x}\case{x}{Q_0}{Q_1}}\ObservationUpdate{x\mapsto (i,a)}}
    \and
    \inferrule*[right=$!?$]{\ConfigurationC{P\paraCut{y}Q}\ObservationUpdate{y \mapsto a}}
    {\ConfigurationC{\Server{x}{y}P\paraCut{x}\Client{x}{y}Q}\ObservationUpdate{x\mapsto \bag{a}}}
    \and
    \inferrule*[right=!W]{\ConfigurationC{C'}\Observation}
    {\ConfigurationC{\Server{x}{y}P\paraCut{x}C'}\ObservationUpdate{x\mapsto \emptyset}}
    \and
    \inferrule*[right=!C]{\ConfigurationC{\Server{x_1}{y}P\paraCut{x_1}(\Server{x_2}{y}P\paraCut{x_2}C')}\ObservationUpdate{x_1\mapsto \alpha,x_2\mapsto \beta}}
    {\ConfigurationC{\Server{x_1}{y}P\paraCut{x_1}C'\Substitution{x_1}{x_2}}\ObservationUpdate{x_1\mapsto \alpha \uplus \beta}}  
    \and
    \inferrule*[right=$\equiv$]{C'\Observation \qquad C\equiv C'}
    {C\Observation}
  \end{mathpar}
  \caption{Classical Processes: Observations.}
  \label{fig:Observations}
\end{figure}

The observational equivalence on processes relies on the notion of \emph{configuration}, which is a process that has some of its names selected for the purposes of ``observations''.
Configurations, defined in \Cref{fig: Configurations}, are typed as $C \ConfJud{\Delta}{\Theta}$, where $\Delta$ contains free/unconnected names, and $\Theta$ contains the names that we intend to observe.
Rule \rulenamestyle{C:Cut}, for example, states that we can compose two configurations along a name $x$, and make the name observable.

Observations for configurations are given in \Cref{fig:Observations}.
The observation relation $C \Downarrow \theta$ is defined for closed configurations $C \ConfJud{\cdot}{\Theta}$, and the shape observation $\theta \in \AtkeyDenotations{\Theta}$ is defined based on the shape of types in $\Theta$.
For a type $A$, the set of observations $\AtkeyDenotations{A}$ is defined as
\begin{align*}
    \AtkeyDenotations{\unit} = \AtkeyDenotations{\bot} &= \{*\} &
      \AtkeyDenotations{\ServerTypeSingle{A}} =  \AtkeyDenotations{\ClientTypeSingle{A}}  &= \mathcal{M}_f(\AtkeyDenotations{A}) \\
    \AtkeyDenotations{A\otimes B}= \AtkeyDenotations{A\parrType B}  &= \AtkeyDenotations{A}\times \AtkeyDenotations{B} &
    \AtkeyDenotations{A_0\oplus A_1} = \AtkeyDenotations{A_0\caseType A_1}  &= \sum_{i\in \{0,1\}} \AtkeyDenotations{A_i}
\end{align*}
and we set $\AtkeyDenotations{\Theta}=\AtkeyDenotations{x_1:A_1,\cdots,x_n:A_n}=\AtkeyDenotations{A_1}\times \cdots\times\AtkeyDenotations{A_n}$.
Here $\mathcal{M}_f(X)$ denotes finite multisets with elements from $X$.
We use the standard notations $\emptyset$, $\uplus$, and $\bag{a_1,\dots,a_n}$ to denote the empty multiset, multiset union, and multiset literals, respectively. 

If $\theta \in \AtkeyDenotations{x_1:A_1,\cdots,x_n:A_n}$, then we write $\theta[x_{i}\mapsto \theta_{i}]$ for the observation which is identical to $\theta$, except that its $i$th component is set to $\theta_{i}$.
In \Cref{fig:Observations}, Rule Stop says that $\inact$ has no observations; Rule $\Tensor\parrType$ collects observations $a$ and $b$ into a single observation $(a,b)$.

Using these notions, there is an immediate canonical notion of observational equivalence.
In the following, we write $P,Q\CPJud\Gamma$ whenever 
$P \CPJud\Gamma$ and $Q\CPJud\Gamma$ hold.
\begin{definition}[Observational equivalence~\cite{ObservationsCP}]\label{def: observational equivalence}
Let $P,Q\in \CPMixZero$ such that $P,Q\CPJud\Gamma$. They are observationally equivalent, written $P\ObsEquivAtk Q$, if for all configurations-process context $C[-]$ where $C[P],C[Q]\CPJud \cdot\para\Theta$, and all $\theta\in \AtkeyDenotations{\Theta}$, $C[P]\Downarrow\theta\;\Leftrightarrow \;C[Q]\Downarrow\theta$. 
\end{definition}
We take this notion of observational equivalence as \emph{the} equivalence of \CPMixZero, sometimes writing $P \ObsEquivAtk Q \CPJud \Gamma$ to emphasize the typing of processes we are comparing.
Establishing observational equivalence of two processes directly is complicated, due to the universal quantification over all potential configurations $C$.
To establish equivalence in a compositional way, we recall Atkey's notion of denotational semantics for \CP  in \Cref{fig:DenotationalSemantics}: it assigns to each process $P \CPJud \Delta$ a denotation $\AtkeyDenotations{P \CPJud \Delta}$ as a subset of observations $\AtkeyDenotations{\Delta}$ on its names.
When the typing of a process $P$ is clear from the context we simply write $\AtkeyDenotations{P} \subseteq \AtkeyDenotations{\Delta} $.
\begin{figure}[t]
  \begin{mathpar}
    \AtkeyDenotations{\forward{x}{y}\CPJud x:A,y:A^\bot}=\{(a,a)\para a\in \AtkeyDenotations{A}\}
    \and
   \AtkeyDenotations{\Emptyoutt{x}\CPJud x:\unit}=\{(*)\}
   \and
   \AtkeyDenotations{\inact\CPJud}=\{()\}
   \and
   \AtkeyDenotations{\Emptyimpp{x}P\CPJud \Gamma,x:\bot}=\{(\gamma,*)\para\gamma \in \AtkeyDenotations{P\CPJud \Gamma}\}
   \and
    \begin{aligned}[t]
    \AtkeyDenotations{\cut{x}{P}{Q}\CPJud\Gamma,\Delta}
    =\{(\gamma,\delta)\para (\gamma,a)\in \AtkeyDenotations{P\CPJud\Gamma,x:A}, (\delta,a)\in \AtkeyDenotations{Q\CPJud \Delta,x:A^\bot}\}
    \end{aligned}
    \and
    \AtkeyDenotations{\boutt{y}{x}(P\para Q)\CPJud\Gamma,\Delta, x:A\Tensor B}
    =\left\{(\gamma,\delta,(a,b))\para
    \begin{gathered}[c]
      (\gamma,a)\in \AtkeyDenotations{P\CPJud\Gamma,y:A},\\[-1.5em]
      (\delta,b)\in \AtkeyDenotations{Q\CPJud \Delta,x:B}      
    \end{gathered}
\right\}
    \and
    \AtkeyDenotations{\impp{x}{y}P\CPJud \Delta, x: A\parrType B}
      =\{(\gamma,\delta,(a,b))\para (\gamma,a,b)\in \AtkeyDenotations{P\CPJud \Delta,y:A,x:B}\}
    \and
    \AtkeyDenotations{\choice{x}{i}P\CPJud \Gamma, x:A_1\choiceType A_2}=\{(\gamma,(i,a))\para (\gamma,a)\in \AtkeyDenotations{P\CPJud \Gamma, x:A_i}\}
    \and
    \AtkeyDenotations{\case{x}{P_1}{P_2}\CPJud \Gamma, x: A_1\caseType A_2}
      =\bigcup_{i\in\{1,2\}}\{(\gamma,(i,a))\para (\gamma,a)\in \AtkeyDenotations{P_i\CPJud \Gamma, x: A_i}\}
    \and
    \AtkeyDenotations{\Server{x}{y}P\CPJud \ClientTypeSingle{\Delta},x: \ServerTypeSingle{A}}
    =\left\{
      \begin{aligned}
        &(\uplus_{j=1}^k\alpha_j^1,\dots,\uplus_{j=1}^k\alpha_j^n,\bag{a_1,\dots,a_n})\\[-1.5em]
         \para & \forall i\in \{1,\dots,k\}. (\alpha_i^1,\dots,\alpha_i^k,a_i)\in \AtkeyDenotations{P\CPJud \ClientTypeSingle{\Delta},y:A}
      \end{aligned}
\right\}
    \and
    \AtkeyDenotations{\Client{x}{y}P\CPJud \Gamma, x:\ClientTypeSingle{A}}=\{(\gamma,\bag{a})\para (\gamma,a) \in \AtkeyDenotations{P\CPJud \Gamma, y:A} \}
    \and

    \AtkeyDenotations{P\CPJud \Gamma, x:\ClientTypeSingle{A}}=\{(\gamma,\emptyset)\para \gamma \in \AtkeyDenotations{P\CPJud \Gamma}\}\\
    \and
    \AtkeyDenotations{P\Substitution{x_1}{x_2}\CPJud \Gamma,x_1:\ClientTypeSingle{A}}
      =\{(\gamma,\alpha_1\uplus\alpha_2)\para (\gamma,\alpha_1,\alpha_2)
          \in \AtkeyDenotations{P\CPJud \Gamma,x_1:\ClientTypeSingle{A},x_2:\ClientTypeSingle{A}}
                \}
  \end{mathpar}
  \caption{Classical Processes: Denotational Semantics}
  \label{fig:DenotationalSemantics}
\end{figure}
The denotational semantics are sound and complete w.r.t. the observations:
\begin{theorem}[Adequacy~\cite{ObservationsCP}]\label{theorem: adequacy}
If $C\CPJud \cdot \para \Theta$, then $C\Downarrow \theta$ iff $\theta \in\AtkeyDenotations{C\CPJud \cdot\para \Theta}$.
\end{theorem}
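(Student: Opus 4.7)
The plan is to prove the two directions separately by structural inductions, after first extending Atkey's denotational semantics from processes to configurations. A configuration $C \ConfJud{\Gamma}{\Theta}$ is essentially a process together with a distinguished collection of cut-names marked as observable, so the configuration-formation rules in \Cref{fig: Configurations} admit the obvious denotational clauses: rule \rulenamestyle{C:Proc} reuses $\AtkeyDenotations{P \CPJud \Gamma}$; rule \rulenamestyle{C:Cut} pairs witnesses on $x$ exactly as the process-level cut does but keeps the shared value $a$ as an extra component; and rules \rulenamestyle{C:0}, \rulenamestyle{C:W}, \rulenamestyle{C:Con} mirror the corresponding process-level clauses in \Cref{fig:DenotationalSemantics}. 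Before the main induction, I would also prove a congruence lemma: $C \equiv C'$ implies $\AtkeyDenotations{C} = \AtkeyDenotations{C'}$, by induction on $\equiv$.

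For soundness ($C \Downarrow \theta$ implies $\theta \in \AtkeyDenotations{C}$), I would induct on the derivation of $C \Downarrow \theta$. Each rule of \Cref{fig:Observations} corresponds to a clause in \Cref{fig:DenotationalSemantics} composed with the \rulenamestyle{C:Cut} denotation. For instance, for $\Tensor\parrType$ the premise gives an observation containing $x \mapsto a$ and $y \mapsto b$ in the denotation of $P \paraCut{y}(Q \paraCut{x} R)$; by the tensor and par clauses these rearrange into $x \mapsto (a,b)$ in the denotation of $\boutt{y}{x}(P \para Q) \paraCut{x} \impp{x}{y} R$. Rule \rulenamestyle{Comm} matches the fact that cut-elimination does not alter the denotation, and \rulenamestyle{$\equiv$} is handled by the congruence lemma.

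For completeness ($\theta \in \AtkeyDenotations{C}$ implies $C \Downarrow \theta$), I would induct on the configuration-typing derivation of $C \ConfJud{\cdot}{\Theta}$. Base cases (\rulenamestyle{C:0} and a process with empty $\Theta$) are immediate from \rulenamestyle{Stop} and \rulenamestyle{0}. For \rulenamestyle{C:Cut}, the denotational clause for cut supplies a witness $a$ on the cut-name that both sides agree on; I then perform case analysis on the head connective of $A$ to apply the matching observation rule (e.g.\ $\Tensor\parrType$, $1\bot$, $\choiceType\caseType$), reducing to a smaller configuration to which the induction hypothesis applies, and finally wrapping with \rulenamestyle{Comm}. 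The \rulenamestyle{Link} case uses the identity shape of $\AtkeyDenotations{\forward{x}{y}}$ to rewrite to a substitution.

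The main obstacle is the exponential fragment. The denotational clause for $\Server{x}{y}P$ existentially quantifies over an arbitrary arity $k$ together with a family $(\alpha_i^1,\dots,\alpha_i^n,a_i)$ whose context components are combined by multiset union, whereas the observation semantics builds up this multiset incrementally through the rules $!?$, \rulenamestyle{!W} (base case, empty multiset), and \rulenamestyle{!C} (inductive step, duplicating the server). Reconciling the two requires an auxiliary induction on $k$ showing that the $k$-fold unfolding via \rulenamestyle{!C} followed by one application of \rulenamestyle{!W} yields precisely the multiset-union shape of the denotation; symmetrically, in the soundness direction, the merger of two observations via \rulenamestyle{!C} must be shown to agree with pointwise multiset union. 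A secondary subtlety is that in the completeness direction one has to permute cuts into the canonical shape expected by each observation rule, which is where the congruence lemma for $\equiv$ is reused.
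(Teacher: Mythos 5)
First, a point of order: the paper does not prove this statement at all --- it is recalled verbatim from Atkey~\cite{ObservationsCP} (it even carries the citation in its header), so there is no in-paper proof to compare your attempt against. Judged instead against the argument it would have to reconstruct, namely Atkey's original one, your overall architecture is the right one: denotations extended to configurations, a congruence lemma for $\equiv$, soundness by induction on the derivation of $C \Downarrow \theta$, and completeness by repeatedly firing observation rules.

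The genuine gap is in the completeness direction, and it is not where you locate it. First, induction ``on the configuration-typing derivation'' is not a well-founded measure for this argument: firing an observation rule does not yield a premise of the typing derivation you started from. Rule $\Tensor\parrType$ turns one configuration cut into two, Rule Comm turns a process-level cut into a configuration-level one, and Rule $!$C \emph{duplicates} the server body, so neither the typing derivation nor raw process size decreases; one needs a cut-elimination-style measure (e.g.\ a multiset of cut types in which $\ServerTypeSingle{A}$ dominates $A$, combined lexicographically with process size). Second, and more importantly, your step ``case analysis on the head connective of $A$ to apply the matching observation rule'' silently assumes that the two processes joined at the chosen cut both have their principal action on that very name at top level. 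In a closed configuration this need not hold for an arbitrary cut: each leaf process may be blocked on a different one of its cut names. One must first prove a progress/readiness lemma --- exploiting the fact that configuration cuts form a tree and following the chain of ``wanted'' names until two adjacent processes point at each other --- to locate a cut at which some observation rule actually applies. That lemma, not the exponential bookkeeping, is the crux of adequacy; your remark about ``permuting cuts into the canonical shape'' gestures at it but treats it as a side issue dischargeable by $\equiv$ alone, which it is not. The soundness direction and the $k$-fold $!$C/$!$W reconciliation you describe are fine as sketched.
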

Hence, we can use denotational semantics to prove observational equivalence:
\begin{corollary}[\cite{ObservationsCP}]\label{ObservationalEquivalenceCorollary}
    If $P,Q\CPJud \Gamma$ and $\AtkeyDenotations{P}=\AtkeyDenotations{Q}$, then $P\ObsEquivAtk Q$.
\end{corollary}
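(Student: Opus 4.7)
The plan is to reduce the corollary to two ingredients: (i) \emph{compositionality} of the denotational semantics with respect to configuration-process contexts, and (ii) the adequacy theorem (\Cref{theorem: adequacy}) which bridges denotations and observations.

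First, I would unfold the definition of $\ObsEquivAtk$ (\Cref{def: observational equivalence}). Given $P, Q \CPJud \Gamma$ with $\AtkeyDenotations{P} = \AtkeyDenotations{Q}$, fix an arbitrary context $C[-]$ such that $C[P], C[Q] \ConfJud{\cdot}{\Theta}$, and fix $\theta \in \AtkeyDenotations{\Theta}$. By adequacy, $C[P] \Downarrow \theta$ iff $\theta \in \AtkeyDenotations{C[P] \ConfJud{\cdot}{\Theta}}$, and symmetrically for $Q$. So the whole corollary reduces to showing the implication
\[
\AtkeyDenotations{P \CPJud \Gamma} = \AtkeyDenotations{Q \CPJud \Gamma} \;\Longrightarrow\; \AtkeyDenotations{C[P] \ConfJud{\cdot}{\Theta}} = \AtkeyDenotations{C[Q] \ConfJud{\cdot}{\Theta}}
\]
for every configuration-process context $C[-]$, after which a straightforward chain of equivalences using adequacy on both sides closes the argument.

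Second, I would prove this compositionality claim by structural induction on the configuration context $C[-]$, following the inductive definition of configurations in \Cref{fig: Configurations}. The base case \rulenamestyle{C:Proc} (where $C[-] = [-]$) is immediate from the hypothesis $\AtkeyDenotations{P} = \AtkeyDenotations{Q}$. The cases \rulenamestyle{C:0} and \rulenamestyle{C:W} are essentially trivial as the denotation either does not depend on the hole or is extended by a constant component (the empty multiset). The case \rulenamestyle{C:Con} follows because contraction is interpreted by multiset union applied to the underlying denotation, which is preserved under equality. The main case is \rulenamestyle{C:Cut}: here the denotation of $C_1 \paraCut{x} C_2$ is given by the same pairing pattern as $\nu$-cut in \Cref{fig:DenotationalSemantics}, so if the hole lies in $C_1$ (say), the induction hypothesis gives $\AtkeyDenotations{C_1[P]} = \AtkeyDenotations{C_1[Q]}$ and the outer denotation is a set-builder expression in terms of these, which is immediately preserved.

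The main obstacle I anticipate is not conceptual but bookkeeping: one must make sure that the notion of ``configuration-process context'' used in \Cref{def: observational equivalence} is exactly captured by one-hole configurations whose hole is typed $[-] \CPJud \Gamma$, so that the induction above covers all contexts quantified over in the definition of $\ObsEquivAtk$. Once this is fixed, the proof is a routine induction combining \Cref{theorem: adequacy} with the compositional definition of $\AtkeyDenotations{-}$ in \Cref{fig:DenotationalSemantics}, and the corollary follows.
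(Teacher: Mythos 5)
Your proof is correct and is exactly the argument the paper intends: the corollary is imported from Atkey's work and the paper derives it directly from \Cref{theorem: adequacy} together with the fact that $\AtkeyDenotations{-}$ is defined compositionally, so equality of denotations propagates through any configuration-process context. Your only caveat (that the induction must also cover process-level constructors surrounding the hole before Rule \rulenamestyle{C:Proc} is applied) is handled by the same compositionality argument, since every clause of \Cref{fig:DenotationalSemantics} is a set-builder expression in the denotations of the immediate subprocesses.
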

Above, the condition $P,Q\CPJud \Gamma$ is important, as there are processes with different types that have the same denotations. Examples are
$\choice{x}{1}\Emptyoutt{x}\CPJud x:\unit\choiceType \unit $ and $\case{x}{\Emptyoutt{x}}{\Emptyoutt{x}}\CPJud x:\unit\caseType \unit$.
\begin{remark}
	\label{theorem: observational equivalences}
	Atkey shows that $\ObsEquivAtk$ captures many equalities on processes induced by proof transformations, such as cut permutations and commuting conversions; see~\cite[Sect.\,5]{ObservationsCP}.
\end{remark}

\subparagraph{Laurent's Translation  $\TranLauR{(-)}$.}
\label{sec:Laurent Transformation}
As mentioned above, Laurent gives a parametric  translation from $\CLL$ to $\ILL$. 
Here we recall this translation following \cite[\S\,2.1]{Laurent}, 
considering only the class of formulas needed for our purposes.
The formulas of $\ILL$ are built using the grammar:
\[I,J::= \unit  \Sep I\Tensor J \Sep I\multimap J\Sep I\choiceType J\Sep I\caseType J \Sep \ServerTypeSingle{I}\]
The sequent calculus for $\ILL$ (omitted for space reasons) works on the judgments of the form $\Delta \PiDiLLJud I$. 
Let $\myR$ be a fixed but arbitrary formula in $\ILL$. 
We have the following derivable rules:
    \begin{center}
\begin{tabular}{c c}
    \begin{minipage}{5cm}
        \begin{prooftree}
            \AxiomC{$\Gamma,I\PiDiLLJud \myR$}
            \RightLabel{$\myR_R$}
            \UnaryInfC{$\Gamma\PiDiLLJud I \multimap \myR$}
        \end{prooftree}
    \end{minipage}   
&
    \begin{minipage}{5cm}
    \begin{prooftree}
        \AxiomC{$\Gamma\PiDiLLJud I$}
        \AxiomC{$\myR \PiDiLLJud \myR$}
        \RightLabel{$\myR_L$}
        \BinaryInfC{$\Gamma, I\multimap \myR \PiDiLLJud \myR$}  
         \end{prooftree}
    \end{minipage}
\end{tabular}
\end{center}

By using Rule $\myR_R$, the formula $I$ in the left-hand side of $\PiDiLLJud$ becomes $I\multimap\myR$ on the right-hand side. 
 Similarly, by using Rule $\myR_L$, the formula $I$ on the right-hand side of $\PiDiLLJud$ becomes $I\multimap \myR$ on the left-hand side. 
 Moving the formula $I$ from one side to the other of $\PiDiLLJud$ results in $I\multimap \myR$, which allows us to mimic in $\ILL$ the one-sided sequents of $\CLL$.
The translation in $\ILL$ of a $\CLL$ formula $F$, denoted $\TranLauR{(F)}$, is inductively defined using this movement of formulas; see \Cref{Translation and its dual} (second column).
\begingroup
\setlength{\extrarowheight}{1.5pt}
\begin{table}[t]
\begin{center}
\begin{tabular}{|P{1cm}| P{3.5cm}| P{3.7cm}| P{3.7cm}|}
\hline
    $F$ & $\TranLauR{F}$ (\ILL)  & $\TranLauR{F}$ (\CLL) & $\DualTranLauR{F}$\\
\hline \hline
      $\bot$& $\unit $ &$\unit$& $\bot$\\
      \hline
       $\unit$ & $\unit\multimap \myR $&$\bot\parrType \myR$ & $\unit\otimes\Dual{\myR}$
       \\
       \hline
       $A\otimes B $&$
       \begin{array}[c]{c}
         ((\TranLauR{A}\multimap \myR)\otimes(\TranLauR{B}\multimap \myR))\\
         \multimap \myR
       \end{array}
       $ &$
       \begin{array}[c]{c}
          ((\DualTranLauR{A}\parrType \myR)\otimes(\DualTranLauR{B}\parrType \myR))^\bot\\
          \parr\, \myR
       \end{array}
       $ &
       $\begin{array}[c]{c}
            ((\DualTranLauR{A}\parrType \myR)\otimes(\DualTranLauR{B}\parrType \myR))\\
            \otimes\, \Dual{\myR}
       \end{array}
       $
        \\
        \hline
       $A \parrType B$ &$\TranLauR{A}\otimes\TranLauR{B}$ &$\TranLauR{A}\otimes\TranLauR{B}$& $\DualTranLauR{A} \parrType \DualTranLauR{B}$
       \\
       \hline
         $A\oplus B $& $
         \begin{array}{c}
           ((\TranLauR{A}\multimap \myR)\oplus(\TranLauR{B}\multimap \myR))\\
           \multimap \myR 
         \end{array}$ &
                       $\begin{array}{c}
                         ((\DualTranLauR{A}\parrType \myR)\oplus(\DualTranLauR{B}\parrType \myR))^\bot\\
                         \parr\, \myR
                       \end{array}$
                        &$\begin{array}{c}
                        ((\DualTranLauR{A}\parrType \myR)\oplus(\DualTranLauR{B}\parrType \myR))\\
                        \otimes\, \Dual{\myR}
                        \end{array}
                        $
    \\
       \hline
      $A\caseType B$ &$\TranLauR{A}\oplus\TranLauR{B}$ &$\TranLauR{A}\oplus\TranLauR{B}$& $\DualTranLauR{A} \caseType \DualTranLauR{B}$
      \\
\hline
 $\ServerTypeSingle{A}$ & $!(\TranLauR{A}\multimap \myR)\multimap\myR $ & $(!(\DualTranLauR{A}\parrType \myR))^\bot\parrType \myR$ & $!(\DualTranLauR{A}\parrType \myR)\otimes\Dual{\myR}$
 \\
 \hline
 $\ClientTypeSingle{A}$ & $ !((\TranLauR{A}\multimap \myR)\multimap \myR)$&$!((\DualTranLauR{A}\parrType \myR)^\bot\parrType \myR)$ & $?((\DualTranLauR{A}\parrType \myR)\otimes\Dual{\myR})$
 \\
    \hline
\end{tabular}
\end{center}
    \caption{Translations $\TranLauR{(-)}$ and  $\DualTranLauR{(-)}$.}
    \label{Translation and its dual}
\end{table}
\endgroup

The amount of (nested) occurrences of 
`$\multimap \myR$' indicates how many times a formula has to be moved. Not all connectives require such transformations; we will expand on this in \cref{sec:Laurent's denotations}.
This translation extends to contexts as expected; it is correct, in the following sense:
\begin{theorem}[\cite{Laurent}]\label{theorem: transformation laurent}
    If $\CPJud \Delta$ is provable in $\CLL$ then $\TranLauR{\Delta}\PiDiLLJud \myR$ is provable in $\ILL$.
\end{theorem}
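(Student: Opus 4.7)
The plan is to proceed by induction on the derivation of $\CPJud \Delta$ in $\CLL$, showing that for each CLL rule there is a corresponding derivation in ILL of $\TranLauR{\Delta} \PiDiLLJud \myR$. The central tools in every step are the derived rules $\myR_R$ and $\myR_L$ introduced above, which let us move a formula between the two sides of an intuitionistic sequent at the cost of an extra $\multimap \myR$. These are exactly what is needed to simulate the symmetry of one-sided classical sequents inside the asymmetric intuitionistic setting; the translation table in \Cref{Translation and its dual} has been designed precisely so that this simulation closes on itself for every connective.

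For the logical rules the induction steps are largely mechanical once the translation is unfolded clause by clause. For $\Tensor$, from $P \CPJud \Gamma, y:A$ and $Q \CPJud \Delta, x:B$ the induction hypothesis yields $\TranLauR{\Gamma}, \TranLauR{A} \PiDiLLJud \myR$ and $\TranLauR{\Delta}, \TranLauR{B} \PiDiLLJud \myR$; applying $\myR_R$ to both and then the ILL right-tensor rule produces $\TranLauR{\Gamma}, \TranLauR{\Delta} \PiDiLLJud (\TranLauR{A} \multimap \myR) \Tensor (\TranLauR{B} \multimap \myR)$, and a single $\myR_L$ reintroduces the translated formula on the left in the expected shape $\TranLauR{A \Tensor B}$. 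The $\parrType$ rule corresponds directly to the left-tensor rule of ILL since $\TranLauR{A \parrType B} = \TranLauR{A} \Tensor \TranLauR{B}$; the $\choiceType$ and $\caseType$ rules and the exponentials follow the same pattern, with $\myR_R$ and $\myR_L$ inserted exactly where the table places a $\multimap \myR$. The unit rules require a single $\myR_R$ or $\myR_L$; the structural rules (weakening, contraction, mix) transfer immediately, since none of them alters the shape of the translated formulas.

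The cases requiring genuine care are the axiom and the cut. For the axiom $\forward{x}{y} \CPJud x:A, y:A^\bot$ we must provide a derivation of $\TranLauR{A}, \TranLauR{A^\bot} \PiDiLLJud \myR$ for arbitrary $A$, which I would obtain by a side induction on $A$. The key observation is a polarity invariant visible in the table: for every $A$, exactly one of $\TranLauR{A}$, $\TranLauR{A^\bot}$ has the shape $B \multimap \myR$, so that $\myR_L$ together with the inductive axioms on the immediate subformulas discharges the right-hand $\myR$. The cut rule is handled by the same invariant: given $\TranLauR{\Gamma}, \TranLauR{A} \PiDiLLJud \myR$ and $\TranLauR{\Delta}, \TranLauR{A^\bot} \PiDiLLJud \myR$, applying $\myR_R$ to the polarity-appropriate side yields an intuitionistic implication into $\myR$, which is then composed with the other side via an ordinary ILL cut. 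The step most likely to cause bookkeeping trouble is the treatment of the exponentials $\ServerTypeSingle{(-)}$ and $\ClientTypeSingle{(-)}$, where the extra $\multimap \myR$ is nested inside a $!$; checking that the polarity invariant still propagates, and that the promotion, dereliction, weakening and contraction rules of ILL suffice to close the corresponding induction steps, is the point where Laurent's careful choice of translation clauses does the real heavy lifting.
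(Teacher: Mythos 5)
Your overall strategy is the right one and is how this translation is standardly established: induction on the \CLL derivation, with the derived rules $\myR_R$ and $\myR_L$ inserting a `$\multimap\myR$' each time a formula crosses the turnstile, plus a side induction on $A$ for the axiom. (The paper itself only cites this theorem from Laurent and gives no proof; its own analogous rule-by-rule induction, for the composed translation in \Cref{transformation of proof cll to cll}, follows the same pattern.) Your treatment of the logical rules, the units, the exponentials and the axiom is correct, and the polarity observation---that exactly one of $\TranLauR{A}$, $\TranLauR{A^\bot}$ has the shape $B\multimap\myR$---is true and does close the axiom case.

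The genuine gap is the cut case. From the premise $\TranLauR{\Delta},\TranLauR{A^\bot}\PiDiLLJud\myR$, Rule $\myR_R$ yields $\TranLauR{\Delta}\PiDiLLJud\TranLauR{A^\bot}\multimap\myR$; composing this with $\TranLauR{\Gamma},\TranLauR{A}\PiDiLLJud\myR$ ``via an ordinary ILL cut'' requires the \emph{syntactic} identity $\TranLauR{A}=\TranLauR{A^\bot}\multimap\myR$, and your polarity invariant does not deliver it: having the shape $B\multimap\myR$ is not the same as having the shape $\TranLauR{A^\bot}\multimap\myR$. Concretely, $\TranLauR{\bot}=\unit$ whereas $\TranLauR{\unit}\multimap\myR=(\unit\multimap\myR)\multimap\myR$, so already for a cut formula $A=\unit\Tensor B$ the body of $\TranLauR{A}$ is $((\unit\multimap\myR)\multimap\myR)\otimes(\TranLauR{B}\multimap\myR)$ while $\TranLauR{A^\bot}=\unit\otimes\TranLauR{B^\bot}$: no instance of the cut rule applies. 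Worse, for an arbitrary parameter $\myR$ the mismatched formulas are not even interprovable ($(\unit\multimap\myR)\multimap\myR\PiDiLLJud\unit$ fails), so the step cannot be repaired by inserting a coercion of the form $\TranLauR{A}\multimap\myR\PiDiLLJud\TranLauR{A^\bot}$. Composing the two translated premises is exactly the point where Laurent needs a separate, nontrivial ingredient---the provability, for every $A$, of $\CPJud\TranLau{A}\Tensor\bot,\TranLau{\Dual{A}}\Tensor\bot,\unit$, which the present paper reads off as the synchronizer processes of \Cref{def: synchronizers} precisely because $\Dual{(\DualTranLau{A})}\neq\DualTranLau{\Dual{A}}$ blocks direct composition---or, alternatively, an appeal to cut elimination in \CLL so that the main induction only ever meets cut-free proofs. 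A smaller slip: the mix rules do not ``transfer immediately'' (\MixZero needs $\PiDiLLJud\myR$ and \MixTwo needs $\myR\otimes\myR\PiDiLLJud\myR$, which is why the paper restricts the mix extension to residuals satisfying $\bigotimes_n\myR\PiDiLLJud\myR$), though mix is not part of the plain \CLL covered by this statement.
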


Given an $\myR$ such that $\bigotimes_n \myR\PiDiLLJud \myR$ (for all $n>0$), the theorem extends to $\CLLMixZeroTwo$---$\CLL$ with the corresponding $\mbox{\MixZero}$ and $\mbox{\MixTwo}$ rules (obtained from \CPMixZeroTwo in \Cref{fig:CP Rules}).
The following result considers the case $\myR=\unit$; it will be useful in \Cref{sec:Transformers}, where we use $\CPMixZeroTwo$.
\begin{lemma}[\cite{Laurent}]
\label{theorem: laurent theorem 2.14}
  Let $\myR=\unit$.  $\CPJud \Delta$ is provable in $\CLLMixZeroTwo$ iff $\TranLauR{\Delta}\PiDiLLJud \myR$ is provable in $\ILL$.
\end{lemma}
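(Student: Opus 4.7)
The forward direction extends \Cref{theorem: transformation laurent} to handle the two additional rules $\MixZero$ and $\MixTwo$, while the backward direction is more delicate and exploits the fact that, for $\myR = \unit$, the decorations introduced by $\TranLauR{(-)}$ can be absorbed inside $\CLLMixZeroTwo$.

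For the forward direction, I proceed by induction on the $\CLLMixZeroTwo$ derivation of $\CPJud \Delta$. All rules of plain $\CLL$ are handled by \Cref{theorem: transformation laurent}. The case $\MixZero$, with $\Delta$ empty, reduces to proving $\PiDiLLJud \unit$ in $\ILL$, which is immediate from the $\unit$-right rule. For $\MixTwo$, the induction hypothesis supplies $\ILL$ proofs of $\TranLauR{\Delta_1} \PiDiLLJud \unit$ and $\TranLauR{\Delta_2} \PiDiLLJud \unit$; composing them via a cut against the derivable sequent $\unit, \unit \PiDiLLJud \unit$ (two applications of $\unit$-left on a $\unit$-right axiom) yields $\TranLauR{\Delta_1}, \TranLauR{\Delta_2} \PiDiLLJud \unit$. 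This is precisely the instantiation at $\myR = \unit$ of the general condition $\bigotimes_n \myR \PiDiLLJud \myR$ mentioned in the paragraph preceding the lemma.

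For the backward direction, I first dualize the $\ILL$ derivation of $\TranLauR{\Delta} \PiDiLLJud \unit$ into a $\CLL$ derivation of $\CPJud \DualTranLauR{\Delta}, \unit$, using the standard embedding $\Dual{(-)}$ depicted on the right of \Cref{f:Transformations}. I then prove an ``untranslation'' lemma by induction on a $\CLL$ formula $A$: with $\myR = \unit$, the sequent $\CPJud \TranLauR{A}, A$ (reading $\TranLauR{A}$ as a $\CLL$ formula via the third column of \Cref{Translation and its dual}) is derivable in $\CLLMixZeroTwo$. Since $\Dual{\DualTranLauR{A}} = \TranLauR{A}$, each such untranslation can be cut against the $\CLL$ derivation of $\CPJud \DualTranLauR{\Delta}, \unit$ --- one cut per assignment in $\Delta$ --- producing $\CPJud \Delta, \unit$, from which $\CPJud \Delta$ follows by a final cut against $\Emptyimpp{z}\inact \CPJud z : \bot$ (whose derivation itself relies on $\MixZero$).

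The principal obstacle is the untranslation lemma. Its inductive proof requires one case per connective, and in each case the translated formula contains nested $\parrType \unit$ and $\otimes \bot$ wrappings that must be peeled off using the $\unit$/$\bot$ rules, while the subformula proofs supplied by the IH have to be recombined with $\MixZero$ and $\MixTwo$; small cases like $\CPJud \bot \parrType \unit, \unit$ (for $A = \unit$) already illustrate the essential use of mix. The cases for $\ServerTypeSingle{A}$ and $\ClientTypeSingle{A}$ look the most delicate, since their decorations occur under an exponential and force interaction with the structural rules $\mathsf{W}$ and $\mathsf{C}$. It is precisely the choice $\myR = \unit$ that makes all these reductions go through: for an arbitrary $\myR$ the decorations cannot be uniformly discarded, which is why the iff is stated only in this special case.
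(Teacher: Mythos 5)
The paper does not prove this lemma at all: it is imported verbatim from Laurent's work (his Theorem~2.14, specialized to $\myR=\unit$), so there is no in-paper proof to compare against. Your reconstruction is nevertheless sound and follows the standard route. The forward direction is fine: \rulenamestyle{Mix$_0$} becomes $\PiDiLLJud\unit$ and \rulenamestyle{Mix$_2$} is absorbed by cutting against $\unit,\unit\PiDiLLJud\unit$, which is exactly the instantiation of the side condition $\bigotimes_n\myR\PiDiLLJud\myR$ that the paper states just before the lemma. For the backward direction, the decomposition into (i) the embedding $\Dual{(-)}$ giving $\CPJud\DualTranLauR{\Delta},\unit$, (ii) the untranslation lemma $\CPJud\TranLauR{A},A$ in $\CLLMixZeroTwo$, (iii) one cut per assignment using $\Dual{(\DualTranLauR{A})}=\TranLauR{A}$, and (iv) a final cut against $\CPJud\bot$ is correct; I checked the untranslation lemma case by case and each connective goes through as you describe (e.g.\ the $\otimes$ and $\oplus$ cases need $\CPJud\bot$, hence \rulenamestyle{Mix$_0$}, and the $\unit$ and $\ServerTypeSingle{A}$ cases need \rulenamestyle{Mix$_2$} to smuggle the residual $\unit$ past the promotion rule). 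One small correction: the exponential cases do not actually require the structural rules \rulenamestyle{W} and \rulenamestyle{C} --- promotion, dereliction and \rulenamestyle{Mix$_2$} suffice --- so that part of your difficulty assessment is overstated. The only thing keeping this from being a complete proof is that the untranslation lemma is sketched rather than carried out, but the sketch is accurate and completable.
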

As already mentioned, since we interpret propositions as sessions, we pick the simplest residual formula/protocol that satisfies the premise of \Cref{theorem: laurent theorem 2.14}, i.e., we fix $\myR = \unit$.
Considering this, in the remainder of the paper we refer to the translation simply as $\TranLau{(-)}$.
\section{A Denotational Characterization of Laurent's Translation}
\label{sec:Laurent's denotations}
Here we study the effect of Laurent's  translation $\TranLau{(-)}$ on  processes typed under $\CPMixZero$.
We prove our first full abstraction result (\Cref{cor:fullabsden}) by lifting $\TranLau{(-)}$ to the level of denotations.
\subparagraph{The Composed Translation.}
As discussed in \secRef{sec: introduction},  we wish to compare processes in the uniform setting of $\CLL$. 
We know that  if $\Delta\PiDiLLJud A$ is provable in $\ILL$, then $\CPJud\Delta^\bot,A$ is provable in $\CLL$ (see, e.g.,~\cite{Laurent}). 
Hence, we can interpret sequents in $\ILL$ as sequents in $\CLL$. 
Notice that formulas in $\ILL$ can be treated as formulas in $\CLL$ by letting $A\multimap B := A^\bot\parrType B$.
Using this transformation within the composition of $\TranLau{(-)}$ and $\Dual{(-)}$, we obtain the desired transformation on $\CLL$ proofs.   From now on, we shall write $\DualTranLau{(-)}$ to denote the translation 
given in \Cref{Translation and its dual} (rightmost column).
\begin{theoremapxrep}\label{transformation of proof cll to cll}
    If $\CPJud \Delta$ is provable in $\CLLMixZero$, then $\CPJud \DualTranLau{\Delta},\unit$ is provable in $\CLLMixZero$.
\end{theoremapxrep}
\begin{proof}
Given a provable sequent $\CPJud \Delta$ in \CLL, then taking $\myR=\unit$ in $\TranLauR{(-)}$  $\TranLau{\Delta}\PiDiLLJud \unit$, thus taking formulas in $A\multimap B $ in \ILL as $\Dual{A}\parrType B$ in \CLL, we can prove $\CPJud\DualTranLau{\Delta},\unit$. The details of the transformation can be found in \Cref{ProofTransformation}.

\begin{table}
    \centering
\begin{tabular}{|w{l}{3cm} w{c}{3cm} w{c}{8cm}|}
\hline
    \begin{minipage}{3cm}
    \begin{prooftree}
    \AxiomC{\empty}
        \UnaryInfC{$\CPJud \cdot$}
    \end{prooftree}
\end{minipage}&
$\ProofTran$&
\begin{minipage}{10cm}
    \begin{prooftree}
    \AxiomC{\empty}
        \UnaryInfC{$\CPJud \unit$}
    \end{prooftree}
\end{minipage}\\
& & \\
    \begin{minipage}{3cm}
   \begin{prooftree}
        \AxiomC{$\CPJud \unit$}
    \end{prooftree}
\end{minipage}&
$\ProofTran$&
\begin{minipage}{10cm}
    \begin{prooftree}
        \AxiomC{$\CPJud \unit$}
        \AxiomC{$\CPJud {\unit},{\bot}$}
        \BinaryInfC{$\CPJud x' :\unit\Tensor \bot, w:\unit$}
    \end{prooftree}
\end{minipage}\\
& & \\
\begin{minipage}{3cm}
      \begin{prooftree}
                \AxiomC{$\CPJud A^\bot,A$}
            \end{prooftree}
\end{minipage}& 
$\ProofTran$&
\begin{minipage}{10cm}
                \begin{prooftree}
             \AxiomC{$\CPJud A^\bot,A$}
                \AxiomC{$\CPJud {\unit}, .{\bot}$}
                \BinaryInfC{$\CPJud A\Tensor \bot,A^\bot,\unit$}
                 \end{prooftree}  
\end{minipage}\\
& & \\
\begin{minipage}{3cm}
    \begin{prooftree}
        \AxiomC{$\CPJud \Delta$}
        \UnaryInfC{$\CPJud \Delta,\bot$}
    \end{prooftree}       
\end{minipage}&
$\ProofTran$&
\begin{minipage}{10cm}
    \begin{prooftree}
        \AxiomC{$\CPJud \DualTranLau{\Delta},{\unit}$}
        \UnaryInfC{$\CPJud \DualTranLau{\Delta},\bot,{\unit}$}
    \end{prooftree}
\end{minipage}\\
& & \\
\begin{minipage}{3cm}
    \begin{prooftree}
        \AxiomC{$\CPJud \Delta,A,B$}
        \UnaryInfC{$\CPJud \Delta,A\parrType B$}
    \end{prooftree}
\end{minipage}&
$\ProofTran$&
\begin{minipage}{10cm}
    \begin{prooftree}
        \AxiomC{$\CPJud \DualTranLau{\Delta},\DualTranLau{A},\DualTranLau{B},{\unit}$}
        \UnaryInfC{$\CPJud \DualTranLau{\Delta},\DualTranLau{A}\parrType\DualTranLau{B},{\unit}$}
    \end{prooftree}
\end{minipage}\\
& & \\

\begin{minipage}{3cm}
    \begin{prooftree}
        \AxiomC{$\CPJud \Delta,A$}
        \AxiomC{$\CPJud \Delta,B$}
        \BinaryInfC{$\CPJud \Delta,A\caseType B$}
    \end{prooftree}
\end{minipage}&
$\ProofTran$&
\begin{minipage}{10cm}
    \begin{prooftree}
        \AxiomC{$\CPJud \DualTranLau{\Delta},\DualTranLau{A},\unit$}
        \AxiomC{$\CPJud \DualTranLau{\Delta},\DualTranLau{B},\unit$}
        \BinaryInfC{$\CPJud \DualTranLau{\Delta},\DualTranLau{A}\caseType \DualTranLau{B},\unit$}
    \end{prooftree}
\end{minipage}\\
& & \\
\begin{minipage}{3cm}
{\footnotesize
    \begin{prooftree}
        \AxiomC{$\CPJud \Delta,A$}
        \AxiomC{$\CPJud \Gamma,B$}
        \BinaryInfC{$\CPJud \Delta, A\Tensor B$}
    \end{prooftree}
    }
\end{minipage}&
$\ProofTran$&
\begin{minipage}{10cm}
{\footnotesize
    \begin{prooftree}
        \AxiomC{$\CPJud \DualTranLau{\Delta},\DualTranLau{A},{\unit}$}
        \UnaryInfC{$\CPJud \DualTranLau{\Delta},\DualTranLau{A}\parrType {\unit}$}
          \AxiomC{$\CPJud \DualTranLau{\Gamma},\DualTranLau{B},{\unit}$}
        \UnaryInfC{$\CPJud \DualTranLau{\Gamma},\DualTranLau{B}\parrType {\unit}$}
        \BinaryInfC{$ \CPJud \DualTranLau{\Delta},\DualTranLau{\Gamma},(\DualTranLau{A}\parrType {\unit})\otimes(\DualTranLau{B}\parrType {\unit})$}
        \AxiomC{$\CPJud {\unit},{\bot}$}
        \BinaryInfC{ $\CPJud {\DualTranLau{\Delta}}, \DualTranLau{(A\Tensor B)}, \unit$}
    \end{prooftree}
    }
\end{minipage}
\\
& & \\
\begin{minipage}{3cm}
{\footnotesize
    \begin{prooftree}
        \AxiomC{$\CPJud \Delta,A$}
        \UnaryInfC{$\CPJud \Delta,\Gamma, A\oplus B$}
    \end{prooftree}
    }
\end{minipage}&
$\ProofTran$&
\begin{minipage}{10cm}
{\footnotesize
   \begin{prooftree}
       \AxiomC{$\CPJud\DualTranLau{\Delta},\DualTranLau{A},\unit$}
       \UnaryInfC{$\CPJud\DualTranLau{\Delta},\DualTranLau{A}\parrType \unit$}
       \UnaryInfC{$\CPJud\DualTranLau{\Delta},(\DualTranLau{A}\parrType \unit)\oplus(\DualTranLau{B}\parrType \unit)$}
       \AxiomC{$\CPJud \bot,\unit$}
       \BinaryInfC{$\CPJud\DualTranLau{\Delta},((\DualTranLau{A}\parrType \unit)\oplus(\DualTranLau{B}\parrType \unit))\otimes \bot,\unit$}
   \end{prooftree}
    }
\end{minipage}\\[5mm]
\begin{minipage}{3cm}
    \begin{prooftree}
        \AxiomC{$\CPJud \ClientTypeSingle{\Delta},A$}
        \UnaryInfC{$\Server{x}{y}P\CPJud\ClientTypeSingle{\Delta},x:\ServerTypeSingle{A}$}
    \end{prooftree}
\end{minipage}&
$\ProofTran$&
\begin{minipage}{10cm}
    \begin{prooftree}
        \AxiomC{$\CPJud ?((\DualTranLau{\Delta}\parrType \unit)\otimes\bot),\DualTranLau{A},\unit$}
        \UnaryInfC{$\CPJud ?((\DualTranLau{\Delta}\parrType \unit)\otimes\bot),\DualTranLau{A}\parrType \unit$}
        \UnaryInfC{$\CPJud ?((\DualTranLau{\Delta}\parrType \unit)\otimes\bot),!(\DualTranLau{A}\parrType \unit)$}
        \AxiomC{$\CPJud \bot,\unit$}
        \BinaryInfC{$\CPJud\DualTranLau{(\ClientTypeSingle{\Delta})},\DualTranLau{(\ServerTypeSingle{A})}, \unit$}
    \end{prooftree}
\end{minipage}\\[5mm]
    \begin{minipage}{3cm}
    \begin{prooftree}
        \AxiomC{$\CPJud \Delta,A$}
        \UnaryInfC{$\CPJud \Delta,\ClientTypeSingle{A}$}
    \end{prooftree}
\end{minipage}&
$\ProofTran$&
\begin{minipage}{10cm}
   \begin{prooftree}
  \AxiomC{$\CPJud\DualTranLau{\Delta},\DualTranLau{A},\unit$}
  \UnaryInfC{$\CPJud\DualTranLau{\Delta},\DualTranLau{A}\parrType \unit$}
  \AxiomC{$\CPJud \bot,\unit$}
  \BinaryInfC{$\CPJud \DualTranLau{\Delta},(\DualTranLau{A}\parrType \unit)\otimes\bot,\unit$}
  \UnaryInfC{$\CPJud \DualTranLau{\Delta},?((\DualTranLau{A}\parrType \unit)\otimes\bot),\unit$}
    \end{prooftree}
\end{minipage}\\[5mm]
    \begin{minipage}{3cm}
    \begin{prooftree}
       \AxiomC{$\CPJud\Delta$}
       \UnaryInfC{$\CPJud\Delta, \ClientTypeSingle{A}$}
    \end{prooftree}
\end{minipage}&
$\ProofTran$&
\begin{minipage}{10cm}
    \begin{prooftree}
        \AxiomC{$\CPJud \DualTranLau{\Delta},\unit$}
        \UnaryInfC{$\CPJud \DualTranLau{\Delta},?((\DualTranLau{A}\parrType\unit)\otimes\bot),\unit$}
    \end{prooftree}
\end{minipage}\\[5mm]
    \begin{minipage}{3cm}
    \begin{prooftree}
       \AxiomC{$\CPJud\Delta,\ClientTypeSingle{A},\ClientTypeSingle{A}$}
       \UnaryInfC{$\CPJud\Delta, \ClientTypeSingle{A}$}
    \end{prooftree}
\end{minipage}&
$\ProofTran$&
\begin{minipage}{10cm}
    \begin{prooftree}
        \AxiomC{$\CPJud \DualTranLau{\Delta},?((\DualTranLau{A}\parrType\unit)\otimes\bot),?((\DualTranLau{A}\parrType\unit)\otimes\bot),\unit$}
        \UnaryInfC{$\CPJud \DualTranLau{\Delta},?((\DualTranLau{A}\parrType\unit)\otimes\bot),\unit$}
    \end{prooftree}
\end{minipage}\\[5mm]
\hline
\end{tabular}
    \caption{Proof transformations and proof equivalences}
    \label{ProofTransformation}
\end{table}
\end{proof}
We shall write $A\in \CPMixZero$, when is clear from the context that $A$ is a type. Similarly, $A \in \CLLMixZero$ says that $A$ is a formula in $\CLLMixZero$. 

\subparagraph{The Translation on Processes.}
  $\DualTranLau{(-)}$ induces a translation on processes, denoted $\TranLauP{-}$, which is defined inductively on typing derivations  (\Cref{def: Tranformation of processes}).
This translation  is the computational interpretation of the composition of the two steps in \Cref{f:Transformations}.
Before detailing its definition, we examine two illustrative cases: output and input. 
Let us first consider the process $P=\boutt{y}{x}({P_1}\para {P_2})$, which is  typed as follows:
\begin{prooftree}
    \AxiomC{${P_1}\CPJud \Delta, y:A $}
    \AxiomC{${P_2}\CPJud \Gamma, x:B$}
    \RightLabel{$\Tensor$}
    \BinaryInfC{$\boutt{y}{x}({P_1}\para {P_2})\CPJud \Delta,\Gamma,x:A\Tensor B$}
\end{prooftree}
From the standpoint of the \PaS interpretation, by \Cref{theorem: transformation laurent} 
there  exists a process $\TranLauP{P}$ and fresh names $z$ and $w$ 
such that 
$\TranLau{\Delta},z:\TranLau{(A\Tensor B)}\PiDiLLJud \TranLauP{P}::w:\unit$. 
As \Cref{Translation and its dual} (second column) shows, $\TranLau{(A\Tensor B)}=((\TranLau{A}\multimap \unit)\otimes(\TranLau{B}\multimap \unit))\multimap \unit$. To determine the shape of $\TranLauP{P}$, we can reason inductively and apply \Cref{theorem: transformation laurent} to the judgments ${P_1}\CPJud \Delta, y:A $ and ${P_2}\CPJud \Gamma, x:B$. This gives us $\TranLau{\Delta},y:\TranLau{A} \PiDiLLJud \TranLauP{{P_1}}:: z_1:\unit$ and $\TranLau{\Gamma},x:\TranLau{B} \PiDiLLJud \TranLauP{{P_2}}:: z_2:\unit$, respectively. We can then obtain the following typing derivation (and shape) for $\TranLauP{P}$:
\begin{prooftree}
    \AxiomC{$\TranLau{\Delta},y:\TranLau{A} \PiDiLLJud \TranLauP{{P_1}}:: z_1:\unit$}
    \UnaryInfC{$\TranLau{\Delta} \PiDiLLJud \impp{z_1}{y} \TranLauP{{P_1}}:: z_1:\TranLau{A}\LinearArrow \unit$}
     \AxiomC{$\TranLau{\Gamma},x:\TranLau{B} \PiDiLLJud \TranLauP{{P_2}}:: z_2:\unit$}
     \UnaryInfC{$\TranLau{\Gamma} \PiDiLLJud \impp{z_2}{x} \TranLauP{{P_2}}:: z_2:\TranLau{B}\LinearArrow \unit$}
    \BinaryInfC{$\TranLau{\Delta},\TranLau{\Gamma}\PiDiLLJud \boutt{z_1}{z_2}(\impp{z_1}{y} \TranLauP{{P_1}}\para \impp{z_2}{x} \TranLauP{{P_2}}):: z_2:(\TranLau{A}\LinearArrow \unit)\Tensor(\TranLau{B}\LinearArrow \unit)$}
    \AxiomC{$(\star)$}
    \BinaryInfC{$\TranLau{\Delta},\TranLau{\Gamma}, x':\TranLau{(A\Tensor B)}\PiDiLLJud
     \underbrace{ \boutt{z_2}{x'}( \boutt{z_1}{z_2}(\impp{z_1}{y} \TranLauP{{P_1}}\para \impp{z_2}{x} \TranLauP{{P_2}})\para \fwd{x'}{w})}_{\TranLauP{P}}::
       w:\unit$} 
\end{prooftree}
 where $(\star)$ stands for  $x':\unit\PiDiLLJud \forward{x'}{w}::w:\unit$.
Above, we see how each  nested `$\multimap \unit$' induced by Laurent's translation entails extra actions on the level of processes, due to the interpretation of $\multimap$ as input (when  introduced on the right, as in this case): moving $y : \TranLau{A}$ and $x : \TranLau{B}$ to the right-hand side induces the inputs along $z_1$ and $z_2$, respectively. Subsequently, we use the $\Tensor$ rule on the right, which produces the output on $z_2$; we then move the resulting assignment for $z_{2}$ back to the left, finally obtaining $x' : \TranLau{(A \Tensor B)}$. 
This last movement adds the final `$\multimap \unit$': because it is introduced on the left, we obtain an output along $x'$. At this point, we can return to the classical setting by applying the translation $(-)^\bot$ to the derivation above, 
      which leads to the following typing derivation for $\TranLauP{P}$ in \CPMixZero:
    \begin{prooftree}
    \AxiomC{$\TranLauP{{P_1}} \CPJud {\DualTranLau{\Delta}},y:\DualTranLau{A},  z_1: \unit$}
    \UnaryInfC{$\impp{z_1}{y} \TranLauP{{P_1}}\CPJud {\DualTranLau{\Delta}},  z_1:{\DualTranLau{A}}\parrType \unit$}
     \AxiomC{$\TranLauP{{P_2}}\CPJud {\DualTranLau{\Gamma}},x:{\DualTranLau{B}} , z_2: \unit$}
      \UnaryInfC{$\impp{z_2}{x} \TranLauP{{P_2}} \CPJud {\DualTranLau{\Gamma}}, z_2:{\DualTranLau{B}}\parrType \unit$}
      \BinaryInfC{$\boutt{z_1}{z_2}(\impp{z_1}{y} \TranLauP{{P_1}}\para \impp{z_2}{x} \TranLauP{{P_2}})\CPJud {\DualTranLau{\Delta}},{\DualTranLau{\Gamma}} , z_2:({\DualTranLau{A}}\parrType \unit)\Tensor({\DualTranLau{B}}\parrType \unit)$}
    \AxiomC{$(\star \star )$}
 \BinaryInfC{$\underbrace{\boutt{z_2}{x'}( \boutt{z_1}{z_2}(\impp{z_1}{y} \TranLauP{{P_1}}\para \impp{z_2}{x} \TranLauP{{P_2}})\para \fwd{x'}{w})
}_{\TranLauP{P}}\CPJud {\DualTranLau{\Delta}},{\DualTranLau{\Gamma}}, x':\DualTranLau{(A\Tensor B)},
           w:\unit$}
           \end{prooftree}
            where $(\star \star )$ stands for $\forward{x'}{w}\CPJud x':\bot,w:\unit$. Importantly, while the translation $(-)^{\bot}$ modifies the types for $\TranLauP{P}$, it does not change its shape. Also, it is worth noticing how the output on $x$ in $P$ is mimicked by $\TranLauP{P}$ through the output on $z_2$, not by the output  on $x'$.
      
Now consider the case when $Q=\impp{x}{y}Q_1$, which is typed as:
    \begin{prooftree}
        \AxiomC{$Q_1\CPJud \Delta,y:A,x:B$}
        \UnaryInfC{$\impp{x}{y}Q_1\CPJud \Delta,x:A\parrType B$}
    \end{prooftree}
  In this case, we expect to obtain a process $\TranLauP{Q}$ such that 
$\TranLau{\Delta},x':\TranLau{(A\parrType B)}\PiDiLLJud \TranLauP{Q}::w:\unit$, where $\TranLau{(A\parrType B)}=\TranLau{A}\otimes\TranLau{B}$ (cf.   \Cref{Translation and its dual}, second column). By reasoning inductively on   $Q_1\CPJud \Delta,y:A,x:B$, we obtain  $\TranLau{\Delta}, y:\TranLau{A},x':\TranLau{B}\PiDiLLJud \TranLauP{Q_1}:: w:\unit$, which enables us to obtain the following derivation:
\begin{prooftree}
   \AxiomC{$\TranLau{\Delta}, y:\TranLau{A},x':\TranLau{B}\PiDiLLJud \TranLauP{Q_1}:: w:\unit$}
        \UnaryInfC{$\TranLau{\Delta}, x':\TranLau{A}\otimes\TranLau{B}\PiDiLLJud \underbrace{\impp{x'}{y}\TranLauP{Q_1}}_{\TranLauP{Q}}:: w:\unit$}
    \end{prooftree}
    Differently from the case of $\Tensor$, here the transformation $\TranLau{(-)}$ does not add any `$\multimap\unit$'. This is relevant, because it ensures that the process $\TranLauP{Q}$ does not have input/output actions in front of the input on $x'$, which mimics the input on $x$ in $Q$.
By applying the translation $(-)^\bot$ to the derivation above, we obtain the  following typing derivation for $\TranLauP{Q}$ in \CPMixZero:
        \begin{prooftree}
        \AxiomC{$\TranLauP{Q_1}\CPJud \DualTranLau{\Delta},y:\DualTranLau{A},x':\DualTranLau{B},w:\unit$}
        \UnaryInfC{$\underbrace{\impp{x'}{y}\TranLauP{Q_1}}_{\TranLauP{Q}}\CPJud \DualTranLau{\Delta},x':\DualTranLau{(A\parrType B)},w:\unit$}
    \end{prooftree}
    Once again, notice that the translation $(-)^{\bot}$ does not modify the shape of $\TranLauP{Q}$.

A key observation is that although 
$P=\boutt{y}{x}({P_1}\para {P_2})$
and 
$Q=\impp{x}{y}Q_1$
are compatible (i.e., they have complementary actions on $x$), their translations $\TranLauP{P}$ and $\TranLauP{Q}$
are {not}. 
In general, given two composable processes
       $P\CPJud \Delta,x:A$ and $Q\CPJud \Gamma,x:\Dual{A}$, we have: 
\[
	\TranLauP{P} \CPJud\DualTranLau{\Delta},x':\DualTranLau{A},w:\unit
	\qquad 
	\TranLauP{Q} \CPJud\DualTranLau{\Gamma},x':\DualTranLau{\Dual{A}},z:\unit
\]
and so 
$\TranLauP{P}$ and $\TranLauP{Q}$ cannot be composed directly: the types of $x'$  are not dual ($\Dual{(\DualTranLau{A})}\neq \DualTranLau{\Dual{A}}$).
To circumvent this difficulty, we shall consider  
\emph{synchronizer processes} $\Synchronizer{z,w}{A}$ such that 
$$\Synchronizer{z,w}{A}\CPJud w:\TranLau{A}\Tensor\bot,z:\TranLau{\Dual{A}}\Tensor\bot,s:\unit$$
Using synchronizers, a mediated composition between $\TranLauP{P}$ and $\TranLauP{Q}$ is then possible: 
\begin{align*}
        \cut{w}{\cut{z}{\impp{z}{y}\TranLauP{Q}}{\Synchronizer{z,w}{A}}}{\impp{w}{x'}\TranLauP{P}}
\end{align*}
Synchronizer processes have a purely logical origin: 
Laurent~\cite{Laurent} shows that for any $A$ the sequent $\CPJud  \TranLau{A}\Tensor\bot,\TranLau{A^\bot}\Tensor \bot,\unit$ is provable; using this result, the definition of synchronizers (given next) arises by reading off the process associated with this proof.

\begin{definition}[Synchronizer]\label{def: synchronizers}
Given $F \in \CPMixZero$ and names $z$, $w$, and $s$, we define 
the synchronizer process $\Synchronizer{z,w}{A}$, satisfying
 $\Synchronizer{z,w}{A}\CPJud z: \TranLau{A}\Tensor\bot,w :\TranLau{\Dual{A}}\Tensor\bot, s:\unit$, by recursion on $A$.
\end{definition}

Armed with the notion of synchronizer processes, we can finally define:
\begin{definition}[Laurent's translation on processes]\label{def: Tranformation of processes}
    Let $\Synchronizer{z,w}{A}$  be a synchronizer as in \Cref{def: synchronizers}.
Given a typed process $P\CPJud \Delta$, we define $\TranLauP{P}$ inductively in \Cref{f: Tranformation of processes}. 
\end{definition}

\begin{figure}[!t]
    \begin{align*}
    \TranLauP{\fwd{x}{y}}&=\boutt{x}{x'}(\fwd{x}{y}\para \fwd{x'}{w})
    \\
    \TranLauP{\Emptyimpp{x}P}&=\Emptyimpp{x'}\TranLauP{P}
\\
\TranLauP{\Emptyoutt{x}}&=\boutt{x}{x'}(\Emptyoutt{x}\para \fwd{x'}{w})
\\
\TranLauP{\impp{x}{y}P}&= \impp{x'}{y}\TranLauP{P}
\\
\TranLauP{{\boutt{y}{x}(\textcolor{Black}{P}\para \textcolor{Black}{Q})}}&={\boutt{z_2}{x'}( \boutt{z_1}{z_2}(\impp{z_1}{y} \TranLauP{\textcolor{Black}{P}}\para\impp{z_2}{x} \TranLauP{\textcolor{Black}{Q}})\para \forward{x'}{w})}
\\	
\TranLauP{\case{x}{P_1}{P_2}}&= \case{x'}{\TranLauP{P_1}}{\TranLauP{P_2}}
    \\
    \TranLauP{\choice{x}{i}P}&=\boutt{z}{x'}(\choice{z}{i}\impp{z}{y}\TranLauP{P}\para \fwd{x'}{w})
           \\
         \TranLauP{\Server{x}{y}P}&=\boutt{x}{x'}(\Server{x}{v}\impp{v}{y}\TranLauP{P}\para \fwd{x'}{w})
    \\
            \TranLauP{\Client{x}{y}P}&=\Client{x'}{m}\boutt{v}{m}(\impp{v}{y}\TranLauP{P}\para \fwd{m}{w})
           \\
    \TranLauP{\cut{x}{P}{Q}}
    &=\cut{w}{\cut{z}{\impp{z}{x'}\TranLauP{Q}}{\Synchronizer{z,w}{A}}}{\impp{w}{x'}\TranLauP{P}}   
\end{align*}
\caption{Laurent's translation on \CP processes (\Cref{def: Tranformation of processes}). \label{f: Tranformation of processes}}
\end{figure}

The  next lemma  ensures that for a given \CPMixZero process $P$, $\TranLauP{P}$ is well-typed.
\begin{lemmaapxrep}\label{lemma: transformation on processes}
   Let $P\CPJud \Delta$ be a process in $\CPMixZero$, then $\TranLauP{P}\CPJud \DualTranLau{\Delta},w:\unit$.
\end{lemmaapxrep}
     
\begin{example}
     To illustrate mediated composition, consider the 
 processes $\Emptyoutt{x}\CPJud x:\unit$ and $\Emptyimpp{x}P\CPJud \Delta,x:\bot$.
  By \Cref{lemma: transformation on processes}, we have $\boutt{x}{x'}(\Emptyoutt{x}\para \fwd{x'}{w})\CPJud x':\DualTranLau{\unit}, w:\unit$ and $ \impp{z}{m}\Emptyimpp{m}\TranLauP{P}\CPJud \DualTranLau{\Delta}, m:\bot, z:\unit$, respectively.
  These two processes can be composed with the synchronizer for $A = \unit$:
  $$ \Synchronizer{w,z}{\unit}=\boutt{x'}{w}(\impp{x'}{x}\boutt{m_1}{z}(\forward{m_1}{x}\para\forward{x'}{z} )\para \forward{w}{s})
$$
By expanding \Cref{def: Tranformation of processes}, we obtain the following observational equivalence:
 \begin{align*}
     \TranLauP{\cut{x}{\Emptyoutt{x}}{\Emptyimpp{x}P}}&=\cut{z}{\cut{w}{\impp{w}{x'}\TranLauP{\Emptyoutt{x}}}{\Synchronizer{z,w}{\unit}}}{\impp{z}{m}\TranLauP{\Emptyimpp{x}P}}
     \\
     &=(\nu z)((\nu w)(\impp{w}{x'}\boutt{x}{x'}(\Emptyoutt{x}\para \fwd{x'}{w})
     \\
     &\qquad\para \boutt{x'}{w}(\impp{x'}{x}\boutt{m}{z}(\forward{m}{x}\para\forward{x'}{z} )\para \forward{w}{s}))
     \\
     &\qquad\para \impp{z}{m}\Emptyimpp{m}\TranLauP{P})\\
     &\ObsEquivAtk\TranLauP{P}\Substitution{s}{z} 
 \end{align*}
\end{example}

\subparagraph{Properties.}
The logical translations  strongly suggest that $P$ and $\TranLauP{P}$ should be equivalent in some sense. How to state this relation?
Our technical insight is to bring $\DualTranLau{(-)}$ to the level of denotations: 
we define the function $\FDenotationsInd{A}{-}:\,\AtkeyDenotations{ A}\to \AtkeyDenotations{   \DualTranLau{A}}$, which   ``saturates''  $\AtkeyDenotations{A}$ by adding as many `$*$' (the observation of $\unit$ and $\bot$)
as residual $\bot$s and $\unit$s are induced by $\DualTranLau{(-)}$. 
\begin{figure}[!t]
\begin{align*}
\FDenotationsInd{\bot}{*}&=* 
&
\FDenotationsInd{\ServerTypeSingle{A}}{\bag{a_1,a_2}}&= (\bag{(\FDenotationsInd{A}{a_1}, *),  (\FDenotationsInd{A}{a_2}, *)},*)
\\
\FDenotationsInd{\unit}{*}&=(*,*) 
& 
\FDenotationsInd{\ClientTypeSingle{A}}{\bag{a_1,a_2}}&=\bag{((\FDenotationsInd{A}{a_1},*),*), ((\FDenotationsInd{A}{a_2},*),*)}
\\
\FDenotationsInd{A\parrType B}{(a,b)}&= (\FDenotationsInd{A}{a},\FDenotationsInd{B}{b}) 
&
\FDenotationsInd{A\Tensor B}{(a,b)}&= ((\FDenotationsInd{A}{a},*),(\FDenotationsInd{B}{b},*),*) 
\\
 \FDenotationsInd{A_1\caseType A_2}{(i,a)}&=(i,\FDenotationsInd{A_i}{a})
& 
\FDenotationsInd{A_1\oplus A_2}{(i,a)}&=((i,(\FDenotationsInd{A_i}{a},*)),*)
\end{align*}	
\caption{Transformation on denotations induced by Laurent's translation (\Cref{def: transformation on denotations}).
The generalized definitions $\FDenotationsInd{\ServerTypeSingle{A}}{\bag{a_1,\ldots, a_n}}$ and $\FDenotationsInd{\ClientTypeSingle{A}}{\bag{a_1,\ldots, a_n}}$ arise as expected. 
 \label{t:transden}}
\end{figure}

\begin{definition}[Transformations on Denotations]\label{def: transformation on denotations}
Given $A\in\CPMixZero$, 
$\FDenotationsInd{A}{-}:\AtkeyDenotations{A}\mapsto\AtkeyDenotations{\DualTranLau{A}}$
is  defined in \Cref{t:transden}. 
Given $\Delta=x_1:A_1,\dots,x_n:A_n$, 
we define 
$\FDenotations{\Delta}{-}:\AtkeyDenotations{\Delta} \mapsto \AtkeyDenotations{\DualTranLau{\Delta},1}$
as
       
    \[\FDenotations{\Delta}{(a_1,\dots,a_n)}= (\FDenotationsInd{A_1}{a_1},\dots,\FDenotationsInd{A_n}{a_n},*)\]  
\end{definition}
\noindent
Our goal is to show that 
$\FDenotations{\Delta}{\AtkeyDenotations{P\CPJud\Delta}}=\AtkeyDenotations{\TranLauP{P}\CPJud\DualTranLau{\Delta},w:\unit}$
(\Cref{theorem: laurent transformation on denotations}).
In particular we use synchronizers (and their observations) to prove the property for processes with cut. 
Also, the following two properties will be useful:

\begin{lemmaapxrep}\label{lemma: injectivity Function L formula}
    For any $A$ and $\Delta$ in $\CPMixZero$, 
 both $\FDenotationsInd{A}{-}$ and $\FDenotations{\Delta}{-}$ are injective. 
\end{lemmaapxrep}
\begin{proof}

The first part, on $\FDenotationsInd{A}{-}$, proceeds by induction on the structure of $A$.

    The cases for $\unit$ and $\bot$ follow trivially.
    The rest of the cases follow by inductive hypothesis.
    
Thus, for the case $\ServerTypeSingle{A}$, given $\bag{a_1,\dots,a_n},\bag{a'_1,\dots,a'_n}\in \AtkeyDenotations{\ServerTypeSingle{A}}$
\begin{align*}
    \FDenotationsInd{\ServerTypeSingle{A}}{\bag{a_1,\dots,a_n}}\quad&=\quad \FDenotationsInd{\ServerTypeSingle{A}}{\bag{a'_1,\dots,a'_n}}\\
   \Leftrightarrow\quad (\bag{(\FDenotationsInd{A}{a_1}, *), \dots, (\FDenotationsInd{A}{a_n}, *)},*)\quad&=\quad (\bag{(\FDenotationsInd{A}{a'_1}, *), \dots, (\FDenotationsInd{A}{a'_n}, *)},*) \tag{by \Cref{def: transformation on denotations}}\\
    \Leftrightarrow\quad \bag{(\FDenotationsInd{A}{a_1}, *), \dots, (\FDenotationsInd{A}{a_n}, *)}\quad&=\quad \bag{(\FDenotationsInd{A}{a'_1}, *), \dots, (\FDenotationsInd{A}{a'_n}, *)}\\
    \Leftrightarrow\quad \FDenotationsInd{A}{a_1}=\FDenotationsInd{A}{a'_1}\quad &\dots\quad \FDenotationsInd{A}{a_n}=\FDenotationsInd{A}{a'_n}\\
    \Rightarrow\quad a_1=a'_1\quad &\dots\quad  a_n=a'_n\tag{by the I.H.}\\
    \Rightarrow\quad \bag{a_1,\dots,a_n}\quad&=\quad\bag{a'_1,\dots,a'_n}\\
\end{align*}

The second part, on $\FDenotations{\Delta}{-}$, follows using the first part.
    Let $\Delta =A_1,\dots,A_n$, and $(a_1,\dots,a_n),(a'_1,\dots,a'_n)\in\AtkeyDenotations{\Delta}$, then
        \begin{align*}
            \FDenotations{\Delta}{a_1,\dots,a_n}\quad&=\quad\FDenotations{\Delta}{a'_1,\dots,a'_n}\\
            \Leftrightarrow\quad (\FDenotationsInd{A_1}{a_1},\dots,\FDenotationsInd{A_n}{a_n},*)\quad&=\quad(\FDenotationsInd{A_1}{a'_1},\dots,\FDenotationsInd{A_n}{a'_n},*)\\
        \Leftrightarrow\quad \FDenotationsInd{A_1}{a_1}=\FDenotationsInd{A_1}{a'_1}\quad&\dots \quad\FDenotationsInd{A_n}{a_n}=\FDenotationsInd{A_n}{a'_n}\\
        \Rightarrow\quad a_1=a'_1 \quad&\dots\quad a_n=a'_n\tag{by \Cref{lemma: injectivity Function L formula}}\\
        \Rightarrow\quad (a_1,\dots,a_n)\quad&=\quad(a'_1,\dots,a'_n)\\
        \end{align*}
    \end{proof}
\begin{lemmaapxrep}\label{lemma: union of multisets}
    Let $\ClientTypeSingle{A}\in\CPMixZero$. Suppose $\alpha_j\in\AtkeyDenotations{\ClientTypeSingle{A}}$ for all $j=1,\dots,k$. Then
    $\FDenotationsInd{\ClientTypeSingle{A}}{\uplus^k_{j=1}\alpha_j}= \uplus^k_{j=1}\FDenotationsInd{\ClientTypeSingle{A}}{\alpha_j}$.
\end{lemmaapxrep}
\begin{proofsketch}
    It follows by \Cref{def: transformation on denotations} and definition of $\uplus$.
\end{proofsketch}
\begin{proof}
    $\alpha_i=\bag{a_1^i,\dots,a_{m_i}^i}$ for all $i=1,\dots,k$, then 

\begin{align*}
    \FDenotationsInd{\ClientTypeSingle{A}}{\uplus^k_{j=1}\alpha_j}&=\FDenotationsInd{\ClientTypeSingle{A}}{\bag{a_1^1,\dots,a_{m_1}^1,\dots,a_1^k,\dots,a_{m_k}^k}}&\text{by def. of $\uplus$}\\
                                                &=\bag{((\FDenotationsInd{A}{a_1^1},*),*),\dots,((\FDenotationsInd{A}{a_{m_k}^k},*),*)}&\text{by def.~\ref{def: transformation on denotations}}\\
                                                &=\uplus^k_{j=1}\bag{((\FDenotationsInd{A}{a_1^j},*),*),\dots,((\FDenotationsInd{A}{a_{m_j}^j},*),*)}&\text{by def. of $\uplus$}\\
                                                 &=\uplus^k_{j=1}\FDenotationsInd{\ClientTypeSingle{A}}{\alpha_j}&\text{by def.~\ref{def: transformation on denotations}}
\end{align*}
\end{proof}
As explained above, synchronizers mediate between the translation of two processes. The following lemma ensures that a synchronizer 
$\Synchronizer{w,z}{A}$ acts as a forwarder:
\begin{lemmaapxrep}\label{lemma: denotations of synchronizers}
Let $\Delta = z: \TranLau{A}\Tensor\bot,w :\TranLau{\Dual{A}}\Tensor\bot, s:\unit$, for some 
    $A\in \CPMixZero$.
    Then \\$\AtkeyDenotations{\Synchronizer{z,w}{A}\CPJud \Delta}=\{((\FDenotationsInd{A}{a},*),(\FDenotationsInd{\Dual{A}}{a},*),*) \para a\in \AtkeyDenotations{A}\}$.
 \end{lemmaapxrep}

    \begin{proofsketch}
        The proof follows by induction on $A$.
    \end{proofsketch}

\begin{proof}
    The proof follows by induction on the structure of  $A$.
    \begin{itemize}
        \item Base case: $\unit$.
          By \Cref{def: synchronizers}, \[\Synchronizer{w,z}{\unit}=\boutt{m_2}{w}(\impp{m_2}{z_1}\boutt{m_2'}{z}(\forward{m_2'}{m_2}\para\forward{z_1}{z} )\para \forward{w}{s})\]
 By \Cref{fig:DenotationalSemantics}:

\begin{align*}
    \AtkeyDenotations{\forward{m_2'}{m_2}\CPJud m_2':\unit, m_2:\bot}&=
    \AtkeyDenotations{\forward{z_1}{z}\CPJud z_1:\unit, z:\bot}\\
    &=\AtkeyDenotations{\forward{w}{s}\CPJud s:\unit, w:\bot}\\
     &=\{(*,*)\}
     \end{align*}
 We calculate the denotations of $\Synchronizer{w,z}{\unit}$  by calculating the denotations of its subproceses as follows:
     \begin{align*}
(1)\quad &\AtkeyDenotations{\boutt{m_2'}{z}(\forward{m_2'}{m_2}\para\forward{z_1}{z} )\CPJud z_1:\unit, m_2:\bot, z:\unit\Tensor\bot}=\{((*,*),*,*)\}\\
(2)\quad &\AtkeyDenotations{\underbrace{\impp{m_2}{z_1}\boutt{m_2'}{z}(\forward{m_2'}{m_2}\para\forward{z_1}{z} )}_{R}\CPJud m_2:\bot\parrType\unit, z:\unit\Tensor\bot}=\{((*,*),(*,*))\}\\
(3)\quad &\AtkeyDenotations{\boutt{m_2}{w}(R\para \forward{w}{s})\CPJud w:(\bot\parrType\unit)\Tensor\bot,z:\unit\Tensor\bot,s:\unit}=\{(((*,*),*),(*,*),*)\}\\
\phantom{\Rightarrow}\quad &\phantom{\AtkeyDenotations{\boutt{m_2}{w}(R\para \forward{w}{s})\CPJud w:(\bot\parrType\unit)\Tensor\bot,z:\unit\Tensor\bot,s:\unit}}=\{((\FDenotationsInd{\unit}{*},*),(\FDenotationsInd{\bot}{*},*),*)\}
\end{align*}
   
\item Case $\ServerTypeSingle{A}$.
By \Cref{def: synchronizers}:
\begin{align*}
    \Synchronizer{w,z}{\ServerTypeSingle{A}}=\boutt{m}{z}(\impp{m}{m_2}\boutt{w_2'}{w}(\Server{w_2'}{w'}\Client{m_2}{m_2'}\impp{w'}{m_1'}\Synchronizer{m_1',m_2'}{A}\para \forward{w}{m})\para \forward{z}{s})
\end{align*}

By I.H.:
\begin{align*}
        &\AtkeyDenotations{\Synchronizer{m_1',m_2'}{A}\CPJud m_2':\TranLau{A}\Tensor\bot,m_1':\TranLau{\Dual{A}}\Tensor\bot, w':\unit}\\
        &=\{((\FDenotationsInd{A}{a},*),(\FDenotationsInd{\Dual{A}}{a},*),*) \para a\in \AtkeyDenotations{A}\}
    \end{align*}
Thus,  we calculate the denotations of $\Synchronizer{w,z}{\ServerTypeSingle{A}}$ by repeatedly applying the rules in \Cref{fig:DenotationalSemantics}. Essentially, in each step we calculate the denotations of a larger subprocess of $\Synchronizer{w,z}{\ServerTypeSingle{A}}$. We proceed as follows:
\begin{align*}
 (1)
\quad  &\AtkeyDenotations{\impp{w'}{m_1'}\Synchronizer{m_1',m_2'}{A}\CPJud w':(\TranLau{\Dual{A}}\Tensor\bot)\parrType\unit, m_2':\TranLau{A}\Tensor\bot}\\
    &=\{((\FDenotationsInd{A}{a},*),((\FDenotationsInd{\Dual{A}}{a},*),*)) \para \\
    &\qquad((\FDenotationsInd{A}{a},*),(\FDenotationsInd{\Dual{A}}{a},*),*) \in \AtkeyDenotations{\Synchronizer{m_1',m_2'}{A}\CPJud m_2':\TranLau{A}\Tensor\bot,m_1':\TranLau{\Dual{A}}\Tensor\bot, w':\unit}\}\\
(2)
\quad&\AtkeyDenotations{\underbrace{\Client{m_2}{m_2'}\impp{w'}{m_1'}\Synchronizer{m_1',m_2'}{A}}_{R_1}\CPJud w':(\TranLau{\Dual{A}}\Tensor\bot)\parrType\unit, m_2:\ClientType{\TranLau{A}\Tensor\bot}}\\
 &=\{(\bag{(\FDenotationsInd{A}{a},*)},((\FDenotationsInd{\Dual{A}}{a},*),*)) \para \\
  &\qquad((\FDenotationsInd{A}{a},*),((\FDenotationsInd{\Dual{A}}{a},*),*)) \in 
  \AtkeyDenotations{\impp{w'}{m_1'}\Synchronizer{m_1',m_2'}{A}\CPJud w':(\TranLau{\Dual{A}}\Tensor\bot)\parrType\unit, m_2':\TranLau{A}\Tensor\bot}\}\\
  (3)
\quad&\AtkeyDenotations{\Server{w_2'}{w'}R_1\CPJud w_2':\ServerType{(\TranLau{\Dual{A}}\Tensor\bot)\parrType\unit},m_2:\ClientType{\TranLau{A}\Tensor\bot} }\\
 &=\{ (\uplus_{j=1}^k(\bag{(\FDenotationsInd{A}{a_j},*)}, \bag{(\FDenotationsInd{\Dual{A}}{a_1},*),*),\dots,(\FDenotationsInd{\Dual{A}}{a_k},*),*)})\para\\
&\qquad\forall i\in \{1,\dots,k\}. (\bag{(\FDenotationsInd{A}{a_i},*)},a_i)\in \AtkeyDenotations{R_1\CPJud\CPJud w':(\TranLau{\Dual{A}}\Tensor\bot)\parrType\unit, m_2:\ClientType{\TranLau{A}\Tensor\bot}}\}\\
(4)
\quad&\AtkeyDenotations{\underbrace{\boutt{w_2'}{w}(\Server{w_2'}{w'}R_1\para \forward{w}{m})}_{R_2}\CPJud w:(\ServerType{(\TranLau{\Dual{A}}\Tensor\bot)\parrType\unit})\Tensor\bot,m_2:\ClientType{\TranLau{A}\Tensor\bot},m:\unit }\\
 &=\{ (\uplus_{j=1}^k(\bag{(\FDenotationsInd{A}{a},*)}, (\bag{(\FDenotationsInd{\Dual{A}}{a_1},*),*),\dots,(\FDenotationsInd{\Dual{A}}{a_k},*),*)},*),*)\para\\
&\qquad(\uplus_{j=1}^k(\bag{(\FDenotationsInd{A}{a},*)}, \bag{(\FDenotationsInd{\Dual{A}}{a_1},*),*),\dots,(\FDenotationsInd{\Dual{A}}{a_k},*),*)})\\
&\qquad\in \AtkeyDenotations{\Server{w_2'}{w'}R_1\CPJud w_2':\ServerType{(\TranLau{\Dual{A}}\Tensor\bot)\parrType\unit},m_2:\ClientType{\TranLau{A}\Tensor\bot} }\}\\
(5)
\quad&\AtkeyDenotations{\impp{m}{m_2}R_2\CPJud w:(\ServerType{(\TranLau{\Dual{A}}\Tensor\bot)\parrType\unit})\Tensor\bot,m:(\ClientType{\TranLau{A}\Tensor\bot})\parrType\unit}\\
& =\{  (((\uplus_{j=1}^k\bag{(\FDenotationsInd{A}{a},*)},*), (\bag{(\FDenotationsInd{\Dual{A}}{a_1},*),*),\dots,(\FDenotationsInd{\Dual{A}}{a_k},*),*)},*),*)\para \\
&\qquad \in \AtkeyDenotations{R_2\CPJud w:(\ServerType{(\TranLau{\Dual{A}}\Tensor\bot)\parrType\unit})\Tensor\bot,m_2:\ClientType{\TranLau{A}\Tensor\bot},m:\unit}\}\\
(6)
\quad& \AtkeyDenotations{\boutt{m}{z}(\impp{m}{m_2}R_2\para \forward{z}{s})\CPJud w:(\ServerType{(\TranLau{\Dual{A}}\Tensor\bot)\parrType\unit})\Tensor\bot,z:((\ClientType{\TranLau{A}\Tensor\bot})\parrType\unit)\bot,s:\unit}\\
&=\{  ((((\uplus_{j=1}^k\bag{(\FDenotationsInd{A}{a_j},*)},*),*), (\bag{(\FDenotationsInd{\Dual{A}}{a_1},*),*),\dots,(\FDenotationsInd{\Dual{A}}{a_k},*),*)},*),*)\para\\
 &\qquad(((\uplus_{j=1}^k\bag{(\FDenotationsInd{A}{a_j},*)},*), (\bag{(\FDenotationsInd{\Dual{A}}
{a_1},*),*),\dots,(\FDenotationsInd{\Dual{A}}{a_k},*),*)},*),*) \\
&\qquad\in  
 \AtkeyDenotations{\impp{m}{m_2}R_2\CPJud w:(\ServerType{(\TranLau{\Dual{A}}\Tensor\bot)\parrType\unit})\Tensor\bot,m:(\ClientType{\TranLau{A}\Tensor\bot})\parrType\unit}\}\\
 &=\{((((\uplus_{j=1}^k\bag{(\FDenotationsInd{A}{a_j},*)},*),*), (\bag{(\FDenotationsInd{\Dual{A}}{a_1},*),*),\dots,(\FDenotationsInd{\Dual{A}}{a_k},*),*)},*),*)\para\\
 &\qquad \forall j\in\{1,\dots,k\}.a_j\in \AtkeyDenotations{A}\}\\
 &=\{  ((((\FDenotationsInd{\ServerTypeSingle{A}}{\bag{a_1,\dots,a_k}},*) ((\FDenotationsInd{\ClientType{\Dual{A}}}{\bag{a_1,\dots,a_k}},*),*)\para\\
 &\qquad \bag{a_1,\dots,a_k} \in \AtkeyDenotations{\ServerTypeSingle{A}}\}\tag{by \Cref{def: transformation on denotations,lemma: union of multisets}}
\end{align*}

What concludes the proof since $\Synchronizer{w,z}{\ServerTypeSingle{A}}=\boutt{m}{z}(\impp{m}{m_2}R_2\para \forward{z}{s})$.

The other cases proceed similarly.
     \end{itemize}
\end{proof}
\noindent    
The next lemma is crucial to ensure that the denotations of a composed process correspond  (in the sense of  \Cref{def: transformation on denotations}) with those of its translation:  
\begin{lemmaapx}[Synchronizers are well-behaved]\label{theorem: synchronizers are well behaved}
Let $P,P',Q,Q'\in\CPMixZeroTwo$, such that
\begin{align*}
    \AtkeyDenotations{P'\CPJud\Delta',x':\DualTranLau{A},w:\unit}&=\FDenotations{\Delta,x:A}{\AtkeyDenotations{P\CPJud\Delta,x:A}}\\
    \AtkeyDenotations{Q'\CPJud\Gamma',x':\DualTranLau{\Dual{A}},z:\unit}&=\FDenotations{\Gamma,y:\Dual{A}}{\AtkeyDenotations{Q\CPJud\Gamma, x:\Dual{A}}}
\end{align*}
Then:
 $(\delta,\gamma)\in\AtkeyDenotations{\cut{x}{P}{Q}} \; \Leftrightarrow \; (\FDenotationsInd{\Delta}{\delta},\FDenotationsInd{\Gamma}{\gamma},*)\in \AtkeyDenotations{\cut{w}{\impp{w}{x'}P'}{\cut{z}{\impp{z}{x'}Q'}{\Synchronizer{z,w}{A}}}}$.
\end{lemmaapx}
\begin{proof}
    \begin{align*}
        &(\delta,\gamma)\in \AtkeyDenotations{\cut{x}{P}{Q}}\\
        &\;\Leftrightarrow\; (\delta,a)\in \AtkeyDenotations{P}\wedge (\gamma,a)\in \AtkeyDenotations{Q}\tag{by \Cref{fig:DenotationalSemantics}}\\
        &\;\Leftrightarrow\; (\FDenotationsInd{\Delta}{\delta},\FDenotationsInd{A}{a},*)\in\AtkeyDenotations{P'}\wedge (\FDenotationsInd{\Gamma}{\gamma},\FDenotationsInd{\Dual{A}}{a},*) \in \AtkeyDenotations{Q'}\tag{by assumption}\\
          &\;\Leftrightarrow\; (\FDenotationsInd{\Delta}{\delta},(\FDenotationsInd{A}{a},*))\in\AtkeyDenotations{\impp{w}{x'}P'}\wedge (\FDenotationsInd{\Gamma}{\gamma},(\FDenotationsInd{\Dual{A}}{a},*)) \in \AtkeyDenotations{\impp{z}{x'}Q'}\tag{by \Cref{fig:DenotationalSemantics}}\\
          &\;\Leftrightarrow\; (\FDenotationsInd{\Delta}{\delta},(\FDenotationsInd{\Dual{A}}{a},*),* ) \in \AtkeyDenotations{\cut{w}{\impp{w}{x'}P'}{\Synchronizer{z,w}{A}}}\\
          &\quad\quad\quad\quad\quad\wedge (\FDenotationsInd{\Gamma}{\gamma},(\FDenotationsInd{\Dual{A}}{a},*)) \in \AtkeyDenotations{\impp{z}{x'}Q'}\tag{by \Cref{lemma: denotations of synchronizers}}\\
          & \; \Leftrightarrow \; (\FDenotationsInd{\Delta}{\delta},\FDenotationsInd{\Gamma}{\gamma},*)\in \AtkeyDenotations{\cut{w}{\impp{w}{x'}P'}{\cut{z}{\impp{z}{x'}Q'}{\Synchronizer{z,w}{A}}}}\tag{by \Cref{fig:DenotationalSemantics}}
    \end{align*}
\end{proof}
\noindent The next result, \Cref{theorem: laurent transformation on denotations}, states that the lifting of Laurent's transformation $\DualTranLau{(-)}$ 
to the level of denotations is correct.

\begin{theoremapxrep}\label{theorem: laurent transformation on denotations} 
Let $P\CPJud \Gamma$ be a $\CPMixZero$  process. Then 
    $\FDenotations{\Gamma}{\AtkeyDenotations{P\CPJud\Gamma}}=\AtkeyDenotations{\TranLauP{P}\CPJud\DualTranLau{\Gamma},w:\unit}$.
\end{theoremapxrep}
\begin{proofsketch}
    By induction on the structure of $P\CPJud \Gamma$, with a case analysis in the last rule applied. 
    We give a representative case. Consider $\Server{x}{y}P\CPJud\ClientTypeSingle{\Delta},x:\ServerTypeSingle{A}$, with $\Delta=x_1:A_1,\dots,x_n:A_n$.
    In one direction, we apply \Cref{lemma: union of multisets,def: transformation on denotations} to show
    \begin{align*}
           &\FDenotations{\ClientTypeSingle{\Delta},x:\ServerTypeSingle{A}}{\AtkeyDenotations{\Server{x}{y}P\CPJud{\ClientTypeSingle{\Delta}},x:\ServerTypeSingle{A}}}\\
           &=\{\FDenotations{\ClientTypeSingle{\Delta},x:\ServerType{A}}{\uplus^k_{j=1}\alpha_j^1,\cdots,\uplus^k_{j=1}\alpha_j^n,\Lbag a_1,,\cdots,a_k\Rbag}\para\\
           &\qquad\forall i\in \{1,\cdots,k\}. (\alpha_i^1,\cdots,\alpha_i^n,a_i)\in \AtkeyDenotations{P\CPJud \ClientTypeSingle{\Delta},y:A}\}\\
           &=\{(\uplus^k_{j=1}\beta_j^1,\cdots,\uplus^k_{j=1}\beta_j^n,(\Lbag (b_1,*),\cdots,(b_k,*)\Rbag,*),*)\para
           \tag*{with $\FDenotationsInd{A_i}{\alpha^i}=\beta^i$,}\\ 
     &\qquad \quad \forall i\in \{1,\cdots,k\}. (\alpha_i^1,\cdots,\alpha_i^n,a_i,*)\in \AtkeyDenotations{P\CPJud\ClientTypeSingle{\Delta},y:A}\} \tag{and $\FDenotationsInd{A}{a}=b$}
    \end{align*}
In the other direction, by the I.H., we obtain:
\begin{align*}
     &\AtkeyDenotations{\TranLauP{\Server{x}{y}P}\CPJud\DualTranLau{(\ClientTypeSingle{\Delta})}, m:\DualTranLau{(\ServerTypeSingle{A})},w:\unit}\\
      &\quad=\{(\uplus^k_{j=1}\beta_j^1,\cdots,\uplus^k_{j=1}\beta_j^n,(\Lbag (b_1,*),\cdots,(b_k,*)\Rbag,*),*)\para\\ 
     &\qquad \quad \forall i\in \{1,\cdots,k\}. (\alpha_i^1,\cdots,\alpha_i^n,a_i,*)\in \AtkeyDenotations{P\CPJud{\ClientTypeSingle{\Delta}},y:A}\}
\end{align*}
when the last rule is cut, we rely on I.H. and \Cref{theorem: synchronizers are well behaved}.
\end{proofsketch}
\begin{proof}

        By induction on the structure of On the typing derivation by case analysis in the last rule applied.

        \begin{itemize}
        \item Case $\Emptyoutt{x}\CPJud x:\unit$.
        By \Cref{def: Tranformation of processes},
        $\TranLauP{\Emptyoutt{x}}=\boutt{x}{m}(\Emptyoutt{x}\para \fwd{m}{n})$, and 
        by \Cref{fig:DenotationalSemantics}:
        \begin{align*}
        \AtkeyDenotations{\Emptyoutt{x}\CPJud x:\unit}&=\{(*)\} \\
        \AtkeyDenotations{\fwd{m}{n}\CPJud m:\bot,n:\unit}&=\{(*,*)\} 
        \end{align*}
    Thus, the equality is checked as follows: 
        \begin{align*}
            \AtkeyDenotations{\boutt{x}{m}(\Emptyoutt{x}\para \fwd{m}{n})\CPJud m :\unit\Tensor \bot, n:\unit}
            &=\{((*,*),*)\}\\
            &=\{(\FDenotationsInd{1}{*},*)\}\\
            &=\{\FDenotations{1}{*}\}\\
            &=\FDenotations{1}{\AtkeyDenotations{\Emptyoutt{x}\CPJud x:\unit}}
            \end{align*}
            
            \item Case  $\fwd{x}{y}\CPJud x:A, y:A^\bot$. 
By \Cref{def: Tranformation of processes}, $\TranLauP{\fwd{x}{y}} =\boutt{x}{m}(\fwd{x}{y}\para \fwd{m}{n})$. By \Cref{fig:DenotationalSemantics} we have:
            \begin{align*}
            \AtkeyDenotations{\fwd{x}{y}\CPJud x:A^\bot,y:A}&=\{(a,a)\para a\in\AtkeyDenotations{A}\}\\
              \AtkeyDenotations{\fwd{m}{n}\CPJud m:\bot,n:\unit}&=\{(*,*)\}
            \end{align*}
Thus, we calculate the denotations of $\boutt{x}{m}(\fwd{x}{y}\para \fwd{m}{n})$ as follows:
            \begin{align*}
            &\AtkeyDenotations{\TranLauP{\fwd{x}{y}}\CPJud m: A\Tensor \bot, y:A^\bot,n:\unit}\\
                &=\AtkeyDenotations{\boutt{x}{m}(\fwd{x}{y}\para \fwd{m}{n})\CPJud m: A\Tensor \bot, y:A^\bot,n:\unit}\\
                &=\{(a,(a,*),*) \para (a,a) \in \AtkeyDenotations{\fwd{x}{y}\CPJud x:A^\bot,y:A},(*,*)\in \AtkeyDenotations{\fwd{m}{n}\CPJud m:\bot,n:\unit}\}\\
                &=\{(\FDenotationsInd{A^\bot}{a},(\FDenotationsInd{A}{a},*)\para a\in\AtkeyDenotations{A}\}\\
               &=\FDenotations{x:^\bot,y:A}{\AtkeyDenotations{\fwd{x}{y}\CPJud x:A^\bot,y:A}} \tag{by \Cref{def: transformation on denotations}}
            \end{align*}

\item Case $\Emptyimpp{x}P$.
By \Cref{def: Tranformation of processes}, $\TranLauP{\Emptyimpp{x}P}=\Emptyimpp{x}\TranLauP{P}$.
We calculate the denotations of $\Emptyimpp{x}\TranLauP{P}$ as follows:
 \begin{align*}
     &\AtkeyDenotations{\Emptyimpp{x}\TranLauP{P}\CPJud\DualTranLau{\Delta},x:\bot,w:\unit}\\
     &=\{(\FDenotationsInd{\Delta}{\delta},*,*) \para (\FDenotationsInd{\Delta}{\delta},*) \in \AtkeyDenotations{\TranLauP{P}\CPJud\DualTranLau{\Delta},w:\unit}\}\\
     &=\{(\FDenotationsInd{\Delta}{\delta},\FDenotationsInd{\bot}{*},*) \para (\FDenotationsInd{\Delta}{\delta},*) \in \AtkeyDenotations{\TranLauP{P}\CPJud\DualTranLau{\Delta},w:\unit}\}\tag{by I.H.}\\
     &=\{(\FDenotationsInd{\Delta}{\delta},\FDenotationsInd{\bot}{*},*) \para {\delta}\in\AtkeyDenotations{P\CPJud \Delta}\}\\
      &=\{(\FDenotationsInd{\Delta}{\delta},\FDenotationsInd{\bot}{*},*) \para ({\delta},*)\in\AtkeyDenotations{\Emptyimpp{x}P\CPJud \Delta}\}\\
      &=\FDenotations{\Delta,x:\bot}{\AtkeyDenotations{\Emptyimpp{x}P\CPJud\Delta,x:\bot}}
 \end{align*}
    
\item Case $\boutt{y}{x}(\textcolor{Black}{P_1} \para \textcolor{Black}{P_2})\CPJud\Delta,\Gamma, x:A\Tensor B$.
By \Cref{def: Tranformation of processes}, we have: 
\[\TranLauP{\boutt{y}{x}(\textcolor{Black}{P_1}\para \textcolor{Black}{P_2})}=\boutt{z_2}{z}(\boutt{z_1}{z_2}(\impp{z_1}{y} \TranLauP{\textcolor{Black}{P_1}}\para\impp{z_2}{x} \TranLauP{\textcolor{Black}{P_2}})\para \fwd{z}{w})\]
By I.H., we have:
                 \begin{align*}
    \FDenotations{\Gamma,y:A}{\AtkeyDenotations{\textcolor{Black}{P_1}\CPJud\Gamma, y:A}}&=\AtkeyDenotations{\TranLauP{\textcolor{Black}{P_1}}\CPJud\DualTranLau{\Gamma},y:\DualTranLau{A},\unit}\\
    \FDenotations{\Delta,x:B}{\AtkeyDenotations{\textcolor{Black}{P_2}\CPJud\Delta,x:B}}&=\AtkeyDenotations{\TranLauP{\textcolor{Black}{P_2}}\CPJud\DualTranLau{\Delta},x:\DualTranLau{B},\unit}
\end{align*}
            Thus, for $D=(({\DualTranLau{A}}\parrType 1)\Tensor({\DualTranLau{B}}\parrType 1))\Tensor \bot$, by repeatedly applying the rules in \Cref{fig:DenotationalSemantics} and I.H. we have: 
               \begin{align*}
      &\AtkeyDenotations{\TranLauP{P}\CPJud \DualTranLau{\Delta},\DualTranLau{\Gamma},z:D,w:\unit}\\
      &=\{(\FDenotationsInd{\Gamma}{\gamma},\FDenotationsInd{\Delta}{\delta},(((\FDenotationsInd{A}{a},*),(\FDenotationsInd{B}{b},*)),*),*) \para\\
       &\qquad(\FDenotationsInd{\Gamma}{\gamma},\FDenotationsInd{A}{a},*)\in\AtkeyDenotations{\TranLauP{\textcolor{Black}{P_1}}\CPJud \DualTranLau{\Delta}, y:\DualTranLau{A},\unit},\\
            &\qquad(\FDenotationsInd{\Delta}{\delta},\FDenotationsInd{B}{b},*)\in \AtkeyDenotations{\TranLauP{\textcolor{Black}{P_2}}\CPJud \DualTranLau{\Gamma}, x:\DualTranLau{B},\unit}\}\tag{by \Cref{fig:DenotationalSemantics}}
            \end{align*}
Now we show that those denotations coincide with those in $$\FDenotations{\Delta,\Gamma,A\Tensor B}{ \AtkeyDenotations{\boutt{y}{x}(\textcolor{Black}{P_1}\para \textcolor{Black}{P_2})\CPJud\Gamma,\Delta,x: A\Tensor B}}$$ We proceed as follows:
      \begin{align*}
      &\AtkeyDenotations{\TranLauP{P}\CPJud \DualTranLau{\Delta},\DualTranLau{\Gamma},z:D,w:\unit}\\
      &=\{(\FDenotationsInd{\Gamma}{\gamma},\FDenotationsInd{\Delta}{\delta},(((\FDenotationsInd{A}{a},*),(\FDenotationsInd{B}{b},*)),*),*) \para\\
       &\qquad(\FDenotationsInd{\Gamma}{\gamma},\FDenotationsInd{A}{a},*)\in\AtkeyDenotations{\TranLauP{\textcolor{Black}{P_1}}\CPJud \DualTranLau{\Delta}, y:\DualTranLau{A},\unit},
       \\
            &\qquad(\FDenotationsInd{\Delta}{\delta},\FDenotationsInd{B}{b},*)\in \AtkeyDenotations{\TranLauP{\textcolor{Black}{P_2}}\CPJud \DualTranLau{\Gamma}, x:\DualTranLau{B},\unit}\}\tag{by \Cref{fig:DenotationalSemantics}}
            \\
             &=\{(\FDenotationsInd{\Gamma}{\gamma},\FDenotationsInd{\Delta}{\delta},(((\FDenotationsInd{A}{a},*),(\FDenotationsInd{B}{b},*)),*),*) \para\\
       &\qquad(\FDenotationsInd{\Gamma}{\gamma},\FDenotationsInd{A}{a},*)\in \FDenotations{\Gamma,y:A}{\AtkeyDenotations{\textcolor{Black}{P_1}\CPJud\Gamma, y:A}} ,\\
            &\qquad(\FDenotationsInd{\Delta}{\delta},\FDenotationsInd{B}{b},*)\in \FDenotations{\Delta,x:B}{\AtkeyDenotations{\textcolor{Black}{P_2}\CPJud\Delta,x:B}} \}\tag{by the I.H.}\\
            &=\{(\FDenotationsInd{\Gamma}{\gamma},\FDenotationsInd{\Delta}{\delta},\FDenotationsInd{A\Tensor B}{a,b},*)\para\\
       &\qquad(\FDenotationsInd{\Gamma}{\gamma},\FDenotationsInd{A}{a},*)\in \FDenotations{\Gamma,y:A}{\AtkeyDenotations{\textcolor{Black}{P_1}\CPJud\Gamma, y:A}} ,\\
            &\qquad(\FDenotationsInd{\Delta}{\delta},\FDenotationsInd{B}{b},*)\in \FDenotations{\Delta,x:B}{\AtkeyDenotations{\textcolor{Black}{P_2}\CPJud\Delta,x:B}} \}\tag{by \Cref{def: transformation on denotations}}\\
        &=\FDenotations{\Delta,\Gamma,A\Tensor B}{ \AtkeyDenotations{\boutt{y}{x}(\textcolor{Black}{P_1}\para \textcolor{Black}{P_2})\CPJud\Gamma,\Delta,x: A\Tensor B}}
\end{align*}

\item Case $\impp{x}{y}P\CPJud\Delta,x:A\parrType B$. By \Cref{def: Tranformation of processes}, $\TranLauP{\impp{x}{y}P}=\impp{x}{y}\TranLauP{P}$.
          We calculate the denotations of $\TranLauP{\impp{x}{y}P}$ as follows:
          \begin{align*}
    &\AtkeyDenotations{\impp{x}{y}\TranLauP{P}\CPJud x:\DualTranLau{B}\parrType\DualTranLau{A},\DualTranLau{\Delta},w:\unit}\\
&=\{(\FDenotationsInd{\Delta}{\delta},(\FDenotationsInd{A}{a},\FDenotationsInd{B}{b}),*)\para  (\FDenotationsInd{\Delta}{\delta},\FDenotationsInd{A}{a},\FDenotationsInd{B}{b})\in \AtkeyDenotations{\TranLauP{P}\CPJud x:\DualTranLau{B}, y:\DualTranLau{A},\DualTranLau{\Delta},w:\unit} \}\tag{by \Cref{fig:DenotationalSemantics} }\\
&=\{(\FDenotationsInd{\Delta}{\delta},(\FDenotationsInd{A}{a},\FDenotationsInd{B}{b}),*)\para \FDenotations{\Delta,x:B,y:A}{\AtkeyDenotations{P\CPJud x:B,y:A,\Delta}} \}\tag{by the I.H.}\\
&= \FDenotations{\Delta,x:A\parrType B}{\AtkeyDenotations{\impp{x}{y}P\CPJud \Delta,x:A\parrType B}}
          \end{align*}

\item Case $\Server{x}{y}P\CPJud\ClientTypeSingle{\Delta},x:\ServerTypeSingle{A}$.
    By \Cref{def: Tranformation of processes},
    \[\TranLauP{\Server{x}{y}P}=\boutt{x}{m}(\Server{x}{w}\impp{w}{y}\TranLauP{P}\para \fwd{m}{n})\]
By the I.H.:
\[\AtkeyDenotations{\TranLauP{P}\CPJud \DualTranLau{(\ClientTypeSingle{\Delta})},y:\DualTranLau{A},\unit}=\FDenotations{\ClientTypeSingle{\Delta},y:A}{\AtkeyDenotations{P\CPJud\ClientTypeSingle{\Delta},y:A}}\]
where $\Delta =x_1:A_1,\dots,x_n:A_n$. Thus:
\begin{align*}
    \AtkeyDenotations{\TranLauP{P} \CPJud \DualTranLau{(\ClientTypeSingle{\Delta})},y:\DualTranLau{A},\unit} & =\{(\FDenotationsInd{A_1}{\alpha^1},\dots,\FDenotationsInd{A_n}{\alpha^n},\FDenotationsInd{A}{a},*) \para \\
   & \qquad (\alpha^1,\dots,\alpha^n,a)\in \AtkeyDenotations{P\CPJud\ClientTypeSingle{\Delta},y:A}\}
    \end{align*}
First we calculate the denotations of $\TranLauP{\Server{x}{y}P}$ by calculating the denotations of its subprocesses $\impp{w}{y}\TranLauP{P}$ and $\Server{x}{w}\impp{w}{y}\TranLauP{P}$ as follows:

      \begin{align*}
        &\AtkeyDenotations{\impp{w}{y}\TranLauP{P}\CPJud,\DualTranLau{?(\Delta)},w:\DualTranLau{A}\parrType1}\\
        &=\{(\beta^1,\cdots,\beta^n,(b,*)) \para (\beta^1,\cdots,\beta^n,b,*)\in  \AtkeyDenotations{\TranLauP{P}\CPJud,?\DualTranLau{\Delta},y:\DualTranLau{A}, w:\unit}\}
             \end{align*}
  where $\FDenotationsInd{A_i}{\alpha^i}=\beta^i$ and $\FDenotationsInd{A}{a}=b$;

   \begin{align*}
            &\AtkeyDenotations{\Server{x}{w}\impp{w}{y}\TranLauP{P}\CPJud,?\DualTranLau{\Delta},x:\ServerType{\DualTranLau{A}\parrType\unit}}\\
            &=\{(\uplus^k_{j=1}\beta_j^1,\cdots,\uplus^k_{j=1}\beta_j^n,\Lbag (b_1,*),\cdots,(b_k,*)\Rbag)\para \\
            &\qquad\forall i\in \{1,\cdots,k\}. (\beta_i^1,\cdots,\beta_i^n,(b_i,*))\in \AtkeyDenotations{\impp{w}{y}\TranLauP{P}\CPJud,?\DualTranLau{\Delta},w:\DualTranLau{A}\parrType\unit}\}
            \end{align*}
 where $\uplus^k_{j=1}\beta_j^i=\bag{\beta_1^i,\dots,\beta_k^i}$ for all $i=1,\dots,n$.

\smallskip

Finally,  the denotations of $\TranLauP{\Server{x}{y}P}$ are calculated as follows::
                    \begin{align*}
    &\AtkeyDenotations{\TranLauP{\Server{x}{y}P}\CPJud\DualTranLau{(\ClientTypeSingle{\Delta})}, m:\DualTranLau{(\ServerTypeSingle{A})},w:\unit}&\\
    &=\AtkeyDenotations{\boutt{x}{m}(\Server{x}{w}\impp{w}{y}\TranLauP{P}\para \fwd{m}{n})\CPJud\DualTranLau{(\ClientTypeSingle{\Delta})}, m:\DualTranLau{(\ServerTypeSingle{A})}, n:\unit} &\text{by \Cref{def: Tranformation of processes}} \\
    &=\{(\uplus^k_{j=1}\beta_j^1,\cdots,\uplus^k_{j=1}\beta_j^n,(\Lbag (b_1,*),\cdots,(b_k,*)\Rbag,*),*)\para&\\ 
    &\qquad(\uplus^k_{j=1}\beta_j^1,\cdots,\uplus^k_{j=1}\beta_j^n,\Lbag (b_1,*),\cdots,(b_k,*)\Rbag)&\\
    &\qquad\in \AtkeyDenotations{\Server{x}{w}\impp{w}{y}\TranLauP{P}\CPJud\DualTranLau{(\ClientTypeSingle{\Delta})},x:\ServerType{\DualTranLau{A}\parrType\unit}},\\
    & \qquad(*,*)\in \AtkeyDenotations{\fwd{m}{n}\CPJud m:\bot,n:\unit}\} &\text{by \Cref{fig:DenotationalSemantics}} \\
     &=\{(\uplus^k_{j=1}\beta_j^1,\cdots,\uplus^k_{j=1}\beta_j^n,(\Lbag (b_1,*),\cdots,(b_k,*)\Rbag,*),*)\para&\\ 
     &\qquad\forall i\in \{1,\cdots,k\}. (\beta_i^1,\cdots,\beta_i^n,b_i,*)\in \AtkeyDenotations{\TranLauP{P}\CPJud ?\DualTranLau{\Delta},y:\DualTranLau{A}, w:\unit}\}&\\
&=\{(\uplus^k_{j=1}\beta_j^1,\cdots,\uplus^k_{j=1}\beta_j^n,(\Lbag (b_1,*),\cdots,(b_k,*)\Rbag,*),*)\para&\\ 
&\qquad\forall i\in \{1,\cdots,k\}. (\alpha_i^1,\cdots,\alpha_i^n,a_i,*)\in \AtkeyDenotations{P\CPJud \ClientTypeSingle{\Delta},y:A}\}&\text{by the I.H.}
\end{align*}
Now we calculate $\FDenotations{\ClientTypeSingle{\Delta},x:\ServerTypeSingle{A}}{\AtkeyDenotations{\Server{x}{y}P\CPJud\ClientTypeSingle{\Delta},x:\ServerTypeSingle{A}}}$ to show that we obtain the same set of denotations. 
\begin{align*}
           &\FDenotations{\ClientTypeSingle{\Delta},x:\ServerTypeSingle{A}}{\AtkeyDenotations{\Server{x}{y}P\CPJud\ClientTypeSingle{\Delta},x:\ServerTypeSingle{A}}}\\
           &=\{\FDenotations{\ClientTypeSingle{\Delta},x:\ServerTypeSingle{A}}{\uplus^k_{j=1}\alpha_j^1,\cdots,\uplus^k_{j=1}\alpha_j^n,\Lbag a_1,,\cdots,a_k\Rbag}\para\\
           &\qquad\forall i\in \{1,\cdots,k\}. (\alpha_i^1,\cdots,\alpha_i^n,a_i)\in \AtkeyDenotations{P\CPJud \ClientTypeSingle{\Delta},y:A}\}\\
           &=\{(\FDenotations{\ClientTypeSingle{A}'_1}{\uplus^k_{j=1}\alpha_j^1},\cdots,\FDenotationsInd{\ClientTypeSingle{A}'_n}{\uplus^k_{j=1}\alpha_j^n},\FDenotationsInd{\ServerTypeSingle{A}}{\bag{a_1,\cdots,a_k}},*)\para\\
           &\qquad\forall i\in \{1,\cdots,k\}. (\alpha_i^1,\cdots,\alpha_i^n,a_i)\in \AtkeyDenotations{P\CPJud \ClientTypeSingle{\Delta},y:A}\} \tag{by def~\ref{def: transformation on denotations}}\\
        &=\{(\uplus^k_{j=1}\FDenotations{\ClientTypeSingle{A}'_1}{\alpha_j^1},\cdots,\uplus^k_{j=1}\FDenotationsInd{\ClientTypeSingle{A}'_n}{\alpha_j^n},\FDenotationsInd{\ServerTypeSingle{A}}{\bag{a_1,\cdots,a_k}},*)\para\\
           &\qquad\forall i\in \{1,\cdots,k\}. (\alpha_i^1,\cdots,\alpha_i^n,a_i)\in \AtkeyDenotations{P\CPJud \ClientTypeSingle{\Delta},y:A}\} \tag{by lemma~\ref{lemma: union of multisets}}\\
           &=\{(\uplus^k_{j=1}\FDenotations{\ClientTypeSingle{A}'_1}{\alpha_j^1},\cdots,\uplus^k_{j=1}\FDenotationsInd{\ClientTypeSingle{A}'_n}{\alpha_j^n},(\Lbag (\FDenotationsInd{A}{a_1},*),\cdots,(\FDenotationsInd{A}{a_k},*)\Rbag,*),*)\para\\
           &\qquad\forall i\in \{1,\cdots,k\}. (\alpha_i^1,\cdots,\alpha_i^n,a_i)\in \AtkeyDenotations{P\CPJud \ClientTypeSingle{\Delta},y:A}\}  \tag{by def~\ref{def: transformation on denotations}}\\
           &=\{(\uplus^k_{j=1}\beta_j^1,\cdots,\uplus^k_{j=1}\beta_j^n,(\Lbag (b_1,*),\cdots,(b_k,*)\Rbag,*),*)\para
           \tag*{since $\FDenotationsInd{A_i}{\alpha^i}=\beta^i$,}\\ 
     &\qquad\forall i\in \{1,\cdots,k\}. (\alpha_i^1,\cdots,\alpha_i^n,a_i,*)\in \AtkeyDenotations{P\CPJud\ClientTypeSingle{\Delta},y:A}\} \tag{and $\FDenotationsInd{A}{a}=b$}
\end{align*}
Thus, we have proved that:
\begin{align*}
     \AtkeyDenotations{\TranLauP{\Server{x}{y}P}\CPJud\DualTranLau{(\ClientTypeSingle{\Delta})}, m:\DualTranLau{(\ServerTypeSingle{A})},w:\unit}=\FDenotations{\ClientTypeSingle{\Delta},x:\ServerTypeSingle{A}}{\AtkeyDenotations{\Server{x}{y}P\CPJud\ClientTypeSingle{\Delta},x:\ServerTypeSingle{A}}}
\end{align*}
which concludes this case.

\item Case $\Client{x}{y}P$.
By \Cref{def: Tranformation of processes}, $\TranLauP{\Client{x}{y}P}=\Client{x}{m}\boutt{w}{m}(\impp{w}{y}\TranLauP{P}\para \fwd{m}{n})$.

By I.H. we have:
\begin{align*}
\FDenotations{\Delta, y: A}{\AtkeyDenotations{P\CPJud \Delta, y: A}}
&=\{ (\FDenotationsInd{\Delta}{\delta},\FDenotationsInd{A}{a},*)\para (\delta,a)\in \AtkeyDenotations{P\CPJud \Delta , y:A}\}\\
 &=\AtkeyDenotations{\TranLauP{P}\CPJud \DualTranLau{\Gamma}, y:\DualTranLau{A},w:\unit}
\end{align*}
Now we calculate the denotations for $\TranLauP{\Client{x}{y}P}$. First, the denotations of the subprocesses $\impp{w}{y}\TranLauP{P}$ and $\boutt{w}{m}(\impp{w}{y}\TranLauP{P}\para \fwd{m}{n})$ are respectively:
\begin{align*}
    (1)\quad&\AtkeyDenotations{\impp{w}{y}\TranLauP{P}\CPJud \DualTranLau{\Delta}, w: \DualTranLau{A}\parrType 1}\\
    &=\{(\FDenotationsInd{\Delta}{\delta},(\FDenotationsInd{A}{a},*))\para (\FDenotationsInd{\Delta}{\delta},\FDenotationsInd{A}{a},*) \in \AtkeyDenotations{\TranLauP{P}\CPJud \DualTranLau{\Gamma}, y:\DualTranLau{A},w:\unit}  \}\tag{by \Cref{fig:DenotationalSemantics}}
    \\
    (2)\quad
    &\AtkeyDenotations{\boutt{w}{m}(\impp{w}{y}\TranLauP{P}\para \fwd{m}{n})\CPJud \DualTranLau{\Delta}, m: (\DualTranLau{A}\parrType 1)\Tensor \bot, n:\unit}\\
    &=\{(\FDenotationsInd{\Delta}{\delta},((\FDenotationsInd{A}{a},*),*),*)\para (\FDenotationsInd{\Delta}{\delta},(\FDenotationsInd{A}{a},*) \in  \AtkeyDenotations{\impp{w}{y}\TranLauP{P}\CPJud \DualTranLau{\Delta}, w: \DualTranLau{A}\parrType 1},\\
    &\hspace{5cm}(*,*)\in \AtkeyDenotations{\fwd{m}{n}\CPJud m:\bot,n:\unit}\}\tag{by \Cref{fig:DenotationalSemantics}}
    \end{align*}
    Finally, we calculate the denotations of $\Client{x}{m}\boutt{w}{m}(\impp{w}{y}\TranLauP{P}\para \fwd{m}{n})$, as follows:
\begin{align*}
&\AtkeyDenotations{\Client{x}{m}\boutt{w}{m}(\impp{w}{y}\TranLauP{P}\para \fwd{m}{n})\CPJud \DualTranLau{\Delta}, x: \ClientType{(\DualTranLau{A}\parrType 1)\Tensor \bot}, n:\unit}\\
    &=\{(\FDenotationsInd{\Delta}{\delta},\bag{((\FDenotationsInd{A}{a},*),*)},*)\para (\FDenotationsInd{\Delta}{\delta},((\FDenotationsInd{A}{a},*),*),*) \in \\
    &\qquad\AtkeyDenotations{\boutt{w}{m}(\impp{w}{y}\TranLauP{P}\para \fwd{m}{n})\CPJud \DualTranLau{\Delta}, m: (\DualTranLau{A}\parrType 1)\Tensor \bot, n:\unit} \}\tag{by \Cref{fig:DenotationalSemantics}}
    \end{align*}
Thus, from the  previous calculation we obtain:
    \begin{align*}
    &\AtkeyDenotations{\Client{x}{m}\boutt{w}{m}(\impp{w}{y}\TranLauP{P}\para \fwd{m}{n})\CPJud \DualTranLau{\Delta}, x: \ClientType{(\DualTranLau{A}\parrType 1)\Tensor \bot}, n:\unit}\\
    &=\{(\FDenotationsInd{\Delta}{\delta},\bag{((\FDenotationsInd{A}{a},*),*)},*)\para (\FDenotationsInd{\Delta}{\delta},\FDenotationsInd{A}{a},*) \in\AtkeyDenotations{\TranLauP{P}\CPJud \DualTranLau{\Gamma}, y:\DualTranLau{A},w:\unit}\}\\
      &=\{(\FDenotationsInd{\Delta}{\delta},\bag{((\FDenotationsInd{A}{a},*),*)},*)\para (\FDenotationsInd{\Delta}{\delta},\FDenotationsInd{A}{a},*) \in \FDenotations{\Delta, y: A}{\AtkeyDenotations{P\CPJud \Delta, y: A}} \}\tag{by the I.H.}\\
      &=\{(\FDenotationsInd{\Delta}{\delta},\bag{((\FDenotationsInd{A}{a},*),*)},*)\para (\delta,a) \in \AtkeyDenotations{P\CPJud \Delta, y: A}\}\tag{by \Cref{def: transformation on denotations}}
\end{align*}
In the other direction we have:
\begin{align*}
    &\FDenotations{\Delta,x:\ClientTypeSingle{A}}{\AtkeyDenotations{\Client{x}{y}P\CPJud \Delta x:\ClientTypeSingle{A}}}\\
 &=\{ (\FDenotationsInd{\Delta}{\delta},\FDenotationsInd{\ClientTypeSingle{A}}{\bag{a}},*)\para (\delta,\bag{a})\in\AtkeyDenotations{\Client{x}{y}P\CPJud \Delta,x:\ClientTypeSingle{A}  }\}\\
  &=\{ (\FDenotationsInd{\Delta}{\delta},\FDenotationsInd{\ClientTypeSingle{A}}{\bag{a}},*)\para (\delta,a)\in\AtkeyDenotations{P\CPJud   \Delta,y:A }\}\\
   &=\{ (\FDenotationsInd{\Delta}{\delta},\bag{((\FDenotationsInd{A}{a},*),*)},*)\para (\delta,a)\in\AtkeyDenotations{P\CPJud   \Delta,y:A }\}
\end{align*}
Thus, we have proved that:
\begin{align*}
    \FDenotations{\Delta,x:\ClientTypeSingle{A}}{\AtkeyDenotations{\Client{x}{y}P\CPJud \Delta x:\ClientTypeSingle{A}}}=\AtkeyDenotations{\TranLauP{\Client{x}{y}P}\CPJud \DualTranLau{\Delta}, x: \ClientType{(\DualTranLau{A}\parrType 1)\Tensor \bot}, n:\unit}
\end{align*}
which concludes this case.

\item Case $\case{x}{P_1}{P_2}$. By \Cref{def: Tranformation of processes},  
$\TranLauP{\case{x}{\textcolor{Black}{P_1}}{\textcolor{Black}{P_2}}}=\case{x}{\textcolor{Black}{\TranLauP{P_1}}}{\textcolor{Black}{\TranLauP{P_2}}}$.
By I.H. we have:
\begin{align*}
    \FDenotations{\Delta,x:A_1}{\AtkeyDenotations{\textcolor{Black}{P_1}\CPJud \Delta, x:A_1}}
    &=\{(\FDenotationsInd{\Delta}{\delta},\FDenotationsInd{A_1}{a_1},*)\para(\delta,a_1) \in \AtkeyDenotations{\textcolor{Black}{P_1}\CPJud \Delta, x:A_1}\}\\
    &=\AtkeyDenotations{\TranLauP{P_1}\CPJud \DualTranLau{\Delta},x:\DualTranLau{A_1},w_1:\unit}\\
    &\\
   \FDenotations{\Delta,x:A_2}{\AtkeyDenotations{\textcolor{Black}{P_2}\CPJud \Delta, x:A_2}}
    &=\{(\FDenotationsInd{\Delta}{\delta},\FDenotationsInd{A_2}{a_2},*)\para (\delta,a_2)\in \AtkeyDenotations{\textcolor{Black}{P_2}\CPJud \Delta, x:A_2}\}\\
    &=\AtkeyDenotations{\TranLauP{P_2}\CPJud \DualTranLau{\Delta},x:\DualTranLau{A_2},w_1:\unit}
\end{align*}
We first calculate  the denotations of $\TranLauP{\case{x}{\textcolor{Black}{P_1}}{\textcolor{Black}{P_2}}}$ as follows:
          \begin{align*}
              &\AtkeyDenotations{\TranLauP{\case{x}{\textcolor{Black}{P_1}}{\textcolor{Black}{P_2}}}\CPJud \DualTranLau{\Delta},x:\DualTranLau{A_1}\caseType\DualTranLau{A_2},w_1:\unit}\\
              &=\AtkeyDenotations{\case{x}{\textcolor{Black}{\TranLauP{P_1}}}{\textcolor{Black}{\TranLauP{P_2}}}\CPJud \DualTranLau{\Delta},x:\DualTranLau{A_1}\caseType\DualTranLau{A_2},w_1:\unit}\\
              &=\bigcup_{i\in\{1,2\}}\{(\FDenotationsInd{\Delta}{\delta},(i,\FDenotationsInd{A_i}{a_i}),*)\para (\FDenotationsInd{\Delta}{\delta},\FDenotationsInd{A_i}{a_i},*)\in \AtkeyDenotations{\TranLauP{P_i}\CPJud \DualTranLau{\Delta},x:\DualTranLau{A_i},w_1:\unit} \}\tag{by \Cref{fig:DenotationalSemantics}}\\
              &=\bigcup_{i\in\{1,2\}}\{(\FDenotationsInd{\Delta}{\delta},(i,\FDenotationsInd{A_i}{a_i}),*)\para (\delta,a_i)\in \AtkeyDenotations{P_i\CPJud \Delta,x:A_i} \}\tag{by \Cref{def: transformation on denotations}}
          \end{align*}
In the other direction we have:
        \begin{align*}
            &\FDenotations{\Delta,x:A_1\caseType A_2}{\AtkeyDenotations{(\case{x}{\textcolor{Black}{P_1}}{\textcolor{Black}{P_2})}\CPJud \Delta,x:A_1\caseType A_2}}\\
            &=\{(\FDenotationsInd{\Delta}{\delta},(i,\FDenotationsInd{A_i}{a_i}),*)\para (\delta,(i,a_i))\in \AtkeyDenotations{(\case{x}{\textcolor{Black}{P_1}}{\textcolor{Black}{P_2})}\CPJud \Delta,x:A_1\caseType A_2} \} \tag{by \Cref{def: transformation on denotations}}\\
             &= \bigcup_{i\in\{1,2\}}\{(\FDenotationsInd{\Delta}{\delta},(i,\FDenotationsInd{A_i}{a_i}),*)\para (\delta,a_i)\in\AtkeyDenotations{P_i\CPJud \Delta,x:A_i} \}
        \end{align*}
Thus, we have proved that
\begin{align*}
    & \FDenotations{\Delta,x:A_1\caseType A_2}{\AtkeyDenotations{(\case{x}{\textcolor{Black}{P_1}}{\textcolor{Black}{P_2})}\CPJud \Delta,x:A_1\caseType A_2}}= \\
    & \qquad \AtkeyDenotations{\TranLauP{\case{x}{\textcolor{Black}{P_1}}{\textcolor{Black}{P_2}}}\CPJud \DualTranLau{\Delta},x:\DualTranLau{A_1}\caseType\DualTranLau{A_2},w_1:\unit}
\end{align*}
which concludes this case.

\item Case $\choice{x}{i}P$.
By \Cref{def: Tranformation of processes}, $\TranLauP{\choice{x}{i}P}=\boutt{w}{m}(\choice{w}{i}\impp{w}{y}\TranLauP{P}\para \fwd{m}{n})$.
 By I.H. we have:
\begin{align*}
    \FDenotations{\Delta,x:A_i}{\AtkeyDenotations{P\CPJud\DualTranLau{\Delta},x:A_i}}
    &=\{(\FDenotationsInd{\Delta}{\delta},\FDenotationsInd{A_i}{a},*)\para (\delta,a) \in \AtkeyDenotations{P\CPJud\DualTranLau{\Delta},x:A_i}\}\\
    &=\AtkeyDenotations{\TranLauP{P}\CPJud \DualTranLau{\Delta},y:\DualTranLau{A_i},w:\unit}
\end{align*}

Thus, we calculate the denotations of $\boutt{w}{m}(\choice{w}{i}\impp{w}{y}\TranLauP{P}\para \fwd{m}{n})$. We first calculate the denotations of its subprocesses $\impp{w}{y}\TranLauP{P}$ and $\choice{w}{i}\impp{w}{y}\TranLauP{P}$: 
\begin{align*}
&\AtkeyDenotations{\impp{w}{y}\TranLauP{P}\CPJud \DualTranLau{\Delta},w:\DualTranLau{A_i}\parrType1}\\
&=\{(\FDenotationsInd{\Delta}{\delta},(\FDenotationsInd{A_i}{a},*))\para (\FDenotationsInd{\Delta}{\delta},\FDenotationsInd{A_i}{a},*) \in \AtkeyDenotations{\TranLauP{P}\CPJud \DualTranLau{\Delta},y:\DualTranLau{A_i},w:\unit}\}\tag{by \Cref{fig:DenotationalSemantics}}
\\
&\AtkeyDenotations{\choice{w}{i}\impp{w}{y}\TranLauP{P}\CPJud \DualTranLau{\Delta},w:(\DualTranLau{A_1}\parrType1)\oplus(\DualTranLau{A_2}\parrType1)}
\\
&=\{(\FDenotationsInd{\Delta}{\delta},(i,(\FDenotationsInd{A_i}{a}),*))\para (\FDenotationsInd{\Delta}{\delta},(\FDenotationsInd{A_i}{a_i},*)) \in \AtkeyDenotations{\impp{w}{y}\TranLauP{P}\CPJud \DualTranLau{\Delta},w:\DualTranLau{A_i}\parrType1}\}\tag{by \Cref{fig:DenotationalSemantics}}
\end{align*}
Then we have:
\begin{align*}
&\AtkeyDenotations{\boutt{w}{m}(\choice{w}{i}\impp{w}{y}\TranLauP{P}\para \fwd{m}{n})\CPJud \DualTranLau{\Delta},m:((\DualTranLau{A_1}\parrType1)\oplus(\DualTranLau{A_2}\parrType1))\Tensor \bot, n:\unit}\\
&=\{((\FDenotationsInd{\Delta}{\delta},((i,(\FDenotationsInd{A_i}{a}),*),*)),*)\para\\
&\qquad(\FDenotationsInd{\Delta}{\delta},(i,(\FDenotationsInd{A_i}{a}),*)) \in \AtkeyDenotations{\choice{w}{i}\impp{w}{y}\TranLauP{P}\CPJud \DualTranLau{\Delta},w:(\DualTranLau{A_1}\parrType1)\oplus(\DualTranLau{A_2}\parrType1)},\\
&\qquad(*,*)\in \AtkeyDenotations{\fwd{m}{n}\CPJud m:\bot,n:\unit}\}\tag{by \Cref{fig:DenotationalSemantics}}\\
&=\{(\FDenotationsInd{\Delta}{\delta},\FDenotationsInd{A_1\choiceType A_2}{i,a},*)\para\\
&\qquad(\FDenotationsInd{\Delta}{\delta},(i,(\FDenotationsInd{A_i}{a}),*)) \in \AtkeyDenotations{\choice{w}{i}\impp{w}{y}\TranLauP{P}\CPJud \DualTranLau{\Delta},w:(\DualTranLau{A_1}\parrType1)\oplus(\DualTranLau{A_2}\parrType1)}\}
\end{align*}
Thus, from the previous calculation we obtain:
\begin{align*}
&\AtkeyDenotations{\boutt{w}{m}(\choice{w}{i}\impp{w}{y}\TranLauP{P}\para \fwd{m}{n})\CPJud \DualTranLau{\Delta},m:((\DualTranLau{A_1}\parrType1)\oplus(\DualTranLau{A_2}\parrType1))\Tensor \bot, n:\unit}\\
&=\{(\FDenotationsInd{\Delta}{\delta},\FDenotationsInd{A_1\choiceType A_2}{i,a},*)\para (\FDenotationsInd{\Delta}{\delta},\FDenotationsInd{A_i}{a},*) \in \AtkeyDenotations{\TranLauP{P}\CPJud \DualTranLau{\Delta},y:\DualTranLau{A_i},w:\unit}\}\\
&=\{(\FDenotationsInd{\Delta}{\delta},\FDenotationsInd{A_1\choiceType A_2}{i,a},*)\para (\delta,a) \in \AtkeyDenotations{P\CPJud\DualTranLau{\Delta},x:A_i}\}\tag{by the I.H.}\\
&=\FDenotations{\Delta,A_1\choiceType A_2}{\AtkeyDenotations{\choice{x}{i}P\CPJud \Delta,A_1\choiceType A_2}}
\end{align*}
which concludes this case.

\item Case $P\CPJud \Delta,x:\ClientTypeSingle{A}$ (weakening rule \rulenamestyle{W}).
\begin{prooftree}
    \AxiomC{$P\CPJud \Delta$}
    \UnaryInfC{$P\CPJud \Delta,x:\ClientTypeSingle{A}$}
\end{prooftree}

By \Cref{fig:DenotationalSemantics}, 
$\AtkeyDenotations{P\CPJud \Delta, x:\ClientTypeSingle{A}}=\{(\delta,\emptyset)\para \delta \in \AtkeyDenotations{P\CPJud \Delta}\}$
and by the I.H. we have:
\begin{align*}
    \FDenotations{\Delta}{\AtkeyDenotations{P\CPJud \Delta}}&=\{(\FDenotationsInd{\Delta}{\delta},*) \para \delta \in \AtkeyDenotations{P\CPJud \Delta}\}\\
  &=\AtkeyDenotations{\TranLauP{P}\CPJud\DualTranLau{\Delta},w:\unit}
\end{align*}

Thus, we calculate $\AtkeyDenotations{\TranLauP{P}\CPJud\DualTranLau{\Delta}, x: \ClientType{(\DualTranLau{A}}\parrType\unit)\Tensor\bot,w:\unit}$ as follows:
\begin{align*}
    &\AtkeyDenotations{\TranLauP{P}\CPJud\DualTranLau{\Delta}, x: \ClientType{(\DualTranLau{A}}\parrType\unit)\Tensor\bot,w:\unit}\\
    &=\{(\FDenotationsInd{\Delta}{\delta},\emptyset,*) \para (\FDenotationsInd{\Delta}{\delta},*) \in \AtkeyDenotations{\TranLau{P}\CPJud\DualTranLau{\Delta},w:\unit}\}\tag{by \Cref{fig:DenotationalSemantics}}\\
    &=\{(\FDenotationsInd{\Delta}{\delta},\emptyset,*) \para \delta \in \AtkeyDenotations{P\CPJud \Delta}\}
\end{align*}

In the other direction we have:
\begin{align*}
&\FDenotations{\Delta,\ClientTypeSingle{A}}{\AtkeyDenotations{P\CPJud \Delta,x:\ClientTypeSingle{A}}}\\
    &=\{(\FDenotationsInd{\Delta}{\delta},\FDenotationsInd{\ClientTypeSingle{A}}{a},*) \para (\delta,a)\in \AtkeyDenotations{P\CPJud \Delta,x:\ClientTypeSingle{A}}\}\tag{by \Cref{fig:DenotationalSemantics}}\\
      &=\{(\FDenotationsInd{\Delta}{\delta},\FDenotationsInd{\ClientTypeSingle{A}}{\emptyset},*) \para (\delta,\emptyset)\in \AtkeyDenotations{P\CPJud \Delta,x:\ClientTypeSingle{A}}\}\\
      &=\{(\FDenotationsInd{\Delta}{\delta},{\emptyset},*) \para (\delta,\emptyset)\in \AtkeyDenotations{P\CPJud \Delta,x:\ClientTypeSingle{A}}\}\tag{by \Cref{fig:DenotationalSemantics}}\\
       &=\{(\FDenotationsInd{\Delta}{\delta},{\emptyset},*) \para \delta\in \AtkeyDenotations{P\CPJud \Delta}\}
\end{align*}
and so we have shown that
\begin{align*}
    \FDenotations{\Delta,\ClientTypeSingle{A}}{\AtkeyDenotations{P\CPJud \Delta,x:\ClientTypeSingle{A}}}=\AtkeyDenotations{\TranLauP{P}\CPJud\DualTranLau{\Delta}, x: \ClientType{(\DualTranLau{A}}\parrType\unit)\Tensor\bot,w:\unit}
\end{align*}

\item Case $P\Substitution{x_1}{x_2}\CPJud x_1:\ClientTypeSingle{A}$. (contraction rule \rulenamestyle{C}).
\begin{prooftree}
    \AxiomC{$P\CPJud\Delta,x_1:\ClientTypeSingle{A},x_2:\ClientTypeSingle{A}$}
    \UnaryInfC{$P\Substitution{x_1}{x_2}\CPJud x_1:\ClientTypeSingle{A}$}
\end{prooftree}
By \Cref{def: Tranformation of processes}, $\TranLauP{P\Substitution{x_1}{x_2}}=\TranLauP{P}\Substitution{x'_1}{x'_2}$.
By \Cref{fig:DenotationalSemantics}:
\begin{mathpar}
   \AtkeyDenotations{P\Substitution{x_1}{x_2}\CPJud \Delta,x_1:\ClientTypeSingle{A}}
      =\left\{
      \begin{aligned}
        &(\delta,\alpha_1\uplus\alpha_2)\para (\delta,\alpha_1,\alpha_2)\\[-1.5em]
          & \in \AtkeyDenotations{P\CPJud \Delta,x_1:\ClientTypeSingle{A},x_2:\ClientTypeSingle{A}}
      \end{aligned}
\right\}
\end{mathpar}
and by the I.H.:
\begin{align*}
 &\FDenotations{\Delta,x_1:\ClientTypeSingle{A},x_2:\ClientTypeSingle{A}}{\AtkeyDenotations{P\CPJud\Delta,x_1:\ClientTypeSingle{A},x_2:\ClientTypeSingle{A}}}\\
    &=\{(\FDenotationsInd{\Delta}{\delta},\FDenotationsInd{\ClientTypeSingle{A}}{\alpha_1},\FDenotationsInd{\ClientTypeSingle{A}}{\alpha_2},*) \para (\delta,\alpha_1,\alpha_2) \in \AtkeyDenotations{P\CPJud\Delta,x_1:\ClientTypeSingle{A},x_2:\ClientTypeSingle{A}}\}\\
    &= \AtkeyDenotations{\TranLauP{P}\CPJud\DualTranLau{\Delta},x_1':\DualTranLau{(\ClientTypeSingle{A})},x_2':\DualTranLau{(\ClientTypeSingle{A})},w:\unit}
\end{align*}
We calculate the denotations of $\TranLauP{P}\Substitution{x'_1}{x'_2}$ as follows:
\begin{align*}
&\AtkeyDenotations{\TranLauP{P}\Substitution{x_1'}{x_2'}\CPJud \DualTranLau{\Delta},x_1':\DualTranLau{(\ClientTypeSingle{A})},w:\unit}\\
    &=\{(\FDenotationsInd{\Delta}{\delta},\FDenotationsInd{\ClientTypeSingle{A}}{\alpha_1}\uplus\FDenotationsInd{\ClientTypeSingle{A}}{\alpha_2},*) \para \\
    &\qquad(\FDenotationsInd{\Delta}{\delta},\FDenotationsInd{\ClientTypeSingle{A}}{\alpha_1},\FDenotationsInd{\ClientTypeSingle{A}}{\alpha_2},*) \in \AtkeyDenotations{\TranLauP{P}\CPJud\DualTranLau{\Delta},x_1':\DualTranLau{(\ClientTypeSingle{A})},x_2':\DualTranLau{(\ClientTypeSingle{A})},w:\unit}
    \}\tag{by \Cref{fig:DenotationalSemantics}}\\
      &=\{(\FDenotationsInd{\Delta}{\delta},\FDenotationsInd{\ClientTypeSingle{A}}{\alpha_1}\uplus\FDenotationsInd{\ClientTypeSingle{A}}{\alpha_2},*) \para 
      (\delta,\alpha_1,\alpha_2) \in \AtkeyDenotations{P\CPJud\Delta,x_1:\ClientTypeSingle{A},x_2:\ClientTypeSingle{A}}
    \}\\
       &=\{(\FDenotationsInd{\Delta}{\delta},\FDenotationsInd{\ClientTypeSingle{A}}{\alpha_1\uplus \alpha_2},*) \para 
      (\delta,\alpha_1,\alpha_2) \in \AtkeyDenotations{P\CPJud\Delta,x_1:\ClientTypeSingle{A},x_2:\ClientTypeSingle{A}}
    \}\tag{by \Cref{lemma: union of multisets}}\\
           &=\{(\FDenotationsInd{\Delta}{\delta},\FDenotationsInd{\ClientTypeSingle{A}}{\alpha_1\uplus \alpha_2},*) \para 
      (\delta,\alpha_1\uplus\alpha_2) \in \AtkeyDenotations{P\Substitution{x_1}{x_2}\CPJud\Delta,x_1:\ClientTypeSingle{A}}
    \}\tag{by \Cref{def: transformation on denotations}}\\
    &=\FDenotations{\Delta,x_1:\ClientTypeSingle{A}}{\AtkeyDenotations{P\Substitution{x_1}{x_2}\CPJud\Delta,x_1:\ClientTypeSingle{A}}}
\end{align*}
        \end{itemize}
        This concludes this case (and the proof).
        \end{proof}
By combining 
\Cref{theorem: laurent transformation on denotations,theorem: adequacy} and \Cref{lemma: injectivity Function L formula}, we obtain our first full abstraction result:
\begin{corollaryapx}[Full Abstraction (I)]
\label{cor:fullabsden}
    Suppose $P,Q\CPJud \Delta$.  Then $P \ObsEquivAtk Q$ iff $\TranLauP{P}\ObsEquivAtk \TranLauP{Q}$.
\end{corollaryapx}
\begin{proof} We have the following equivalences:
   \begin{align*}
     P \ObsEquivAtk Q \quad 
     & 
      \Leftrightarrow\qquad  \AtkeyDenotations{P\CPJud \Delta} =\AtkeyDenotations{Q\CPJud\Delta}\tag{by \Cref{theorem: adequacy}}
      \\
      & \Leftrightarrow\qquad \FDenotations{\Delta}{\AtkeyDenotations{P\CPJud \Delta}}=\FDenotations{\Delta}{\AtkeyDenotations{Q\CPJud \Delta}}\tag{by \Cref{lemma: injectivity Function L formula}}
       \\
& \Leftrightarrow\qquad \AtkeyDenotations{\TranLauP{P}\CPJud\DualTranLau{\Delta},w:\unit}=\AtkeyDenotations{\TranLauP{Q}\CPJud\DualTranLau{\Delta},w:\unit}\tag{by \Cref{theorem: laurent transformation on denotations}}
\\
     &  \Leftrightarrow\qquad  \TranLauP{P}\ObsEquivAtk \TranLauP{Q}\tag{by \Cref{theorem: adequacy}}
   \end{align*}
\end{proof}

\section{An Operational Characterization of Laurent's Translation}
\label{sec:Transformers}

\begin{figure}[t]
    \begin{mathpar}
    \inferrule*[Right=ObsMix]{C\Observation \and C'\Downarrow \gamma}
    {C\para C'\Downarrow (\sigma,\gamma)}
    \and
    \inferrule*[Right=cfgMix]{C_1 \ConfJud{\Gamma_1}{\Sigma_1}\and C_2\ConfJud{\Gamma_2}{\Sigma_2}}
    {C_1\para C_2 \ConfJud{\Gamma_1,\Gamma_2}{\Sigma_1, \Sigma_2}}
        \and
           \AtkeyDenotations{{P}\para{Q}\CPJud\Gamma,\Delta}=
       \left\{(\gamma,\delta)\para \gamma\in \AtkeyDenotations{P\CPJud\Gamma}, \delta\in \AtkeyDenotations{Q\CPJud \Delta}\right\}
         \vspace{-3mm}
  \end{mathpar}
  \caption{Extensions concerning \MixTwo.}
  \label{fig:mixtwo_extensions}
\end{figure}  
In this section, we show that the translation $\TranLauP{-}$ can be \emph{internalized} as an evaluation context.
That is, given $P \CPJud \Delta$, we can define a corresponding \emph{transformer context}, denoted $\TranProNames{-}{\Delta}$.
Using this context, we obtain a process denoted $\TranProNames{P}{\Delta}$, in which the behavior of $P$ is adapted following $\Delta$, so as to produce $\DualTranLau{\Delta}, w:\unit$.
This is clearly different from translating $P$ into $\TranLauP{P}$ by examining its structure.
We shall show that $\TranProNames{P}{\Delta}$ is equivalent to $\TranLauP{P}$ (\Cref{{coro:transformer_correct}}).
As in \secRef{sec:Laurent's denotations}, we will also show a full-abstraction result for $\TranProNames{-}{\Delta}$ (\Cref{corollary: full abstraction transformers}).

This strategy works in presence of Rule \MixTwo (cf.\,\Cref{fig:CP Rules}). 
Hence, in this section we work with typed processes in \CPMixZeroTwo.
Accordingly, we extend the denotational semantics (cf.~\Cref{s:back}) as given in \Cref{fig:mixtwo_extensions}.
It is easy to check that soundness and completeness  (\Cref{theorem: adequacy,ObservationalEquivalenceCorollary}) still hold for \CPMixZeroTwo.
In $\CPMixZeroTwo$, additional observational equivalences arise from permutation of Rule \MixTwo with other rules:
\begin{lemmaapxrep}\label{lemma: observational equivalences rule mix}
Given $P,Q,R\in \CPMixZeroTwo$, 
we have:  
${\case{x}{P}{Q}\para R}  \ObsEquivAtk\case{x}{P\para R}{Q\para R}$,
$\cut{x}{P}{Q}\para R  \ObsEquivAtk \cut{x}{P}{(Q\para R)}\ObsEquivAtk \cut{x}{(P\para R)}{Q}$   and  
$\boutt{y}{x}(P\para (Q\para R))\ObsEquivAtk\boutt{y}{x}(P\para Q)\para R$.
 \end{lemmaapxrep}
\begin{proof}
The proof follows by straightforwardly applying the rules from \Cref{fig:DenotationalSemantics}.
    \begin{itemize}
        \item 
         \begin{align*}
  &\AtkeyDenotations{\cut{x}{P}{Q}\para R\CPJud\Gamma,\Delta,\Sigma}\\
    &\quad=\{(\delta, \gamma,\sigma) \para (\delta,\gamma) \in \AtkeyDenotations{\cut{x}{P}{Q}\CPJud\Gamma,\Delta}, \sigma \in \AtkeyDenotations{R\CPJud\Sigma}   \}\\
     &\quad=\{(\delta, \gamma,\sigma) \para  (\gamma,a)\in \AtkeyDenotations{P\CPJud\Gamma,x:A}, (\delta,a)\in \AtkeyDenotations{Q\CPJud \Delta,x:A^\bot}, \sigma \in \AtkeyDenotations{R\CPJud\Sigma}   \}\\
     &\quad=\{(\delta, \gamma,\sigma) \para  (\gamma,a)\in \AtkeyDenotations{P\CPJud\Gamma,x:A}, (\delta,\sigma,a)\in \AtkeyDenotations{Q\para R\CPJud \Delta,\Sigma,x:A^\bot}   \}\\
    &\quad=\AtkeyDenotations{\cut{x}{P}{(Q\para R)}\CPJud \Gamma,\Delta,\Sigma}
        \end{align*}

    \item
    \begin{align*}
        &\AtkeyDenotations{\case{x}{P}{Q}\para R \CPJud \Gamma,\Sigma,x:A_1\caseType A_2}\\
        &\quad=\bigcup_{i\in\{1,2\}}\{(\gamma,\sigma,(i,a)\para (\gamma,(i,a))\in \AtkeyDenotations{\case{x}{P}{Q}\CPJud \Gamma, x: A_1\caseType A_2}, \sigma \in \AtkeyDenotations{R\CPJud\Sigma} \}\\
        &\quad=\bigcup_{i\in\{1,2\}}\{(\gamma,\sigma,(i,a)\para (\gamma,a)\in \AtkeyDenotations{{P}\CPJud \Gamma, x: A_1}, (\gamma,a)\in \AtkeyDenotations{{Q}\CPJud \Gamma, x: A_2},\sigma \in \AtkeyDenotations{R\CPJud\Sigma} \}\\
          &\quad=\bigcup_{i\in\{1,2\}}\{(\gamma,\sigma,(i,a)\para (\gamma,\sigma,a)\in \AtkeyDenotations{P\para R\CPJud \Gamma,\Sigma,x:A_1}, (\gamma,\sigma,a)\in \AtkeyDenotations{Q\para R\CPJud \Gamma,\Sigma,x:A_2}\}\\
        &\quad=\AtkeyDenotations{\case{x}{P\para R}{Q\para R}\CPJud \Gamma,\Sigma,x:A_1\caseType A_2}
    \end{align*}

    \item 

    \begin{align*}
        &\AtkeyDenotations{\boutt{y}{x}(P\para (Q\para R))\CPJud \Delta,\Gamma,\Sigma, x:A\Tensor B}\\
        &\quad=\{(\delta,\gamma,\sigma,(a,b))\para (\delta,a)\in \AtkeyDenotations{P\CPJud\Delta,y:A}, (\gamma,\sigma,b) \in \AtkeyDenotations{Q\para R\CPJud\Gamma,\Sigma,x:B}\}\\
        &\quad=\{(\delta,\gamma,\sigma,(a,b))\para (\delta,a)\in \AtkeyDenotations{P\CPJud\Delta,y:A}, (\gamma,b) \in \AtkeyDenotations{Q\CPJud\Gamma,x:B}, \sigma \in \AtkeyDenotations{R\CPJud \Sigma}\}\\
         &\quad=\{(\delta,\gamma,\sigma,(a,b))\para (\delta,\gamma,(a,b))\in \AtkeyDenotations{\boutt{y}{x}(P\para Q)\CPJud\Delta,\Gamma,x:A\Tensor B}\sigma \in \AtkeyDenotations{R\CPJud \Sigma}\}\\
         &\quad=\AtkeyDenotations{\boutt{y}{x}(P\para Q)\para R\CPJud \Delta,\Gamma,\Sigma, x:A\Tensor B}
    \end{align*}
        
    \end{itemize}
\end{proof}

We also need to extend $\TranLauP{-}$ (\Cref{def: Tranformation of processes}).
Given processes $P\CPJud\Delta$ and $Q\CPJud\Gamma$ (with their translations $\TranLauP{P}\CPJud\DualTranLau{\Delta},y:\unit$ and
$\TranLauP{Q}\CPJud\DualTranLau{\Gamma},x:\unit$, respectively), we define:
\begin{align*}
\TranLauP{P\para Q}=\cut{x}{\boutt{y}{x}(\TranLauP{P}\para \TranLauP{Q})}{M_x}    
\end{align*}
where $M_x=\impp{x}{y}\Emptyimpp{y}\forward{x}{m}$. It is easy to check that $M_x \CPJud x: \bot \parrType\bot, m:\unit$.
\begin{remark}
  The results about $\FDenotations{}{-}$ in \secRef{sec:Laurent's denotations} can be adapted to  $\CPMixZeroTwo$.  
\end{remark}

\subparagraph*{Transformers.}
We define \emph{transformer processes}, which
adapt the behavior of one session on a given name.
\begin{definition}[Transformers]\label{DefinitionTransformers}
 Given a type $A$ in $\CPMixZeroTwo$,
 we define 
  the \emph{transformer process} $\Transformer{x,x'}{A}\CPJud x:\Dual{A}, x':\DualTranLau{A}$ by induction on the type $A$ as in \Cref{f:trans}.
  With a slight abuse of notation, in the figure we write 
  $\Transformer{x,x'}{A} = P \CPJud \Delta$ to express that 
  $\Transformer{x,x'}{A} = P$ with $P \CPJud \Delta$.
\end{definition}
\begin{figure}[t]
\begin{align*}
\Transformer{x,x'}{\bot} & = \forward{x}{y}\CPJud x:\unit, x':\bot
\\
\Transformer{x,x'}{\unit} & = \boutt{y}{x'}(\forward{y}{x}\para \Emptyimpp{x'}\inact)\CPJud x:\bot, x':\unit\Tensor\bot 
\\
\Transformer{x,z}{A\Tensor B} & = \impp{x}{y}\boutt{z_2}{z}(\boutt{z_2}{z_1}(\impp{z_1}{y}(\Transformer{y,y'}{A}\para \emptyOut{z_{1}})\para \impp{z_2}{x'}(\Transformer{x,x'}{B} \para \emptyOut{z_{2}})\para \Emptyimpp{z}\inact \CPJud \Delta 
\\
& \qquad \text{(with $\Delta = x:A^\bot\parrType B^\bot,z:\DualTranLau{(A\Tensor B)}$)}
\\
    \Transformer{x,x'}{A\parrType B} & = \impp{x'}{y'}\boutt{y}{x}(\Transformer{y,y'}{A}\para \Transformer{x,x'}{B})\CPJud x:A^\bot\Tensor B^\bot, x':\DualTranLau{A\parrType B}
\\
    \Transformer{x,x'}{\ServerTypeSingle{A}} & = \boutt{w'}{x'}(\Server{w'}{w}\Client{x}{y}\impp{w}{y'}(\Transformer{y,y'}{A}\para  \emptyOut{w})\para\Emptyimpp{x'}\inact)\CPJud  x:\ClientType{A^\bot},x':\DualTranLau{(\ServerTypeSingle{A})}
\\
    \Transformer{x,x'}{\ClientTypeSingle{A}} & = \Server{x}{y}\Client{x'}{m}\boutt{z}{m}(\impp{z}{y'}(\Transformer{y,y'}{A}\para  \emptyOut{z}) \para \Emptyimpp{m}\inact)\CPJud x:\ServerType{A^\bot},x':\DualTranLau{(\ClientTypeSingle{A})}
\\
    \Transformer{x,x'}{A_1\caseType A_2} & = \case{x'}{\choice{x}{1}\Transformer{x,x'}{A_1}}{\choice{x}{2}\Transformer{x,x'}{A_2}}\CPJud x:A_1^\bot\choiceType A_2^\bot,x':\DualTranLau{(A_1\caseType A_2)}
    \\
        \Transformer{x,x'}{A_1\choiceType A_2} & = \case{y}{P_1}{P_2}\CPJud x:A_1^\bot\caseType A_2^\bot,x': \DualTranLau{(A_1\choiceType A_2)}
\\ & \qquad \text{(with $P_i=\boutt{w}{m}\big(\choice{w}{i}\impp{w}{y'}(\Transformer{x,x'}{A_i}\para\emptyOut{w})\para \Emptyimpp{m}\inact\big)$)}
\end{align*}
\vspace{-7mm} 
\caption{Transformer processes (\Cref{DefinitionTransformers}). \label{f:trans} }
\end{figure}

We now define \emph{transformer contexts}, which adapt an entire context $\Delta$ using transformer processes.
We first define \emph{typed  contexts}.

\begin{figure}[t]
\vspace{-2mm}
\begin{mathpar}
    \inferrule*[Right=KCut$_1$]{\TypedContext\CPJudContext{\Sigma}{ \Gamma,x:A} \and Q\CPJud{\Delta,x:A^\bot}}{\cut{x}{\TypedContext}{Q}\CPJudContext{\Sigma} \Gamma,\Delta}\\
    \and
    \inferrule*[Right=KMix]{\TypedContext\CPJudContext{\Sigma}{\Gamma}\and P\CPJud \Delta}{\TypedContext\para P\CPJudContext{\Sigma}{\Gamma,\Delta}}
    \and
    \inferrule*[Right=KHole]{ }{\Hole\CPJudContext{\Delta}{\Delta}}
    \end{mathpar}
\vspace{-7mm}
    \caption{Typed Contexts}
    \label{fig: Typed Contexts Rules}
\end{figure}

\subparagraph*{Typed Process Contexts.}
A typed context is a typed process with a typed hole.
We write $\TypedContext\CPJudContext{\Delta}{\Gamma}$ for a typed context which contains a typed hole $\Hole$, and which produces a process of type $\Gamma$.
That is, given a process $P \CPJud \Delta$, we can fill $\TypedContext$ as $\TypedContextKProcess{P}\CPJud \Gamma$, by replacing the unique occurrence of $\Hole$ with $P$.
\Cref{fig: Typed Contexts Rules} gives the rules for forming typed contexts.
As an example, consider the derivation for a parallel context $\cut{x}{\Hole}{\Transformer{x,y}{A}}$:
 \begin{prooftree}
    \AxiomC{}
     \UnaryInfC{$\Hole\CPJudContext{x:A}{x:A}$}
     \AxiomC{$\Transformer{x,y}{A}\CPJud{ x:A^\bot, y:\DualTranLau{A}}$}
  \BinaryInfC{$\cut{x}{\Hole}{\Transformer{x,y}{A}}\CPJudContext{x:A}{y:\DualTranLau{A}}$}
\end{prooftree}  
Above, we can replace the use of Rule KHole with a typing derivation for $P\CPJud x:A$,
thus obtaining $\cut{x}{P}{\Transformer{x,y}{A}}\CPJud{y:\DualTranLau{A}}$.
Such ``filling'' of contexts can be done in general:
\begin{lemmaapxrep}\label{lemmaTypedcontext-processes}
Given $\TypedContext\CPJudContext{\Delta}{\Gamma}$ and $P\CPJud\Delta$, we have that $\TypedContextKProcess{P}\CPJud \Gamma$ is derivable.
\end{lemmaapxrep}
 \begin{proof}
    By induction on the derivation of $\TypedContext\CPJudContext{\Delta}{\Gamma}$.
    \begin{itemize}
        \item If $\TypedContext= \Hole$, then $\TypedContextKProcess{P}=P$, thus the proof of 
        $\TypedContextKProcess{P}\CPJud\Gamma$ is the proof of $P\CPJud\Delta$.

        \item If $\TypedContextProcess{K'}{\Hole}
        =\cut{x}{\TypedContext}{Q}$

        \begin{prooftree}
\AxiomC{$\TypedContext\CPJudContext{\Delta}{\Gamma_1,x:A}$}
            \AxiomC{$Q\CPJud \Gamma_2,x:A^\bot$}
            \BinaryInfC{$\cut{x}{\TypedContext}{Q}\CPJudContext{\Delta}{\Gamma_1,\Gamma_2}$}
        \end{prooftree}

By I.H. $\TypedContextKProcess{P}\CPJud \Gamma_1,x:A$
is derivable, thus we finish by applying the rule Cut.

\item  If $\TypedContextProcess{K'}{\Hole}=\TypedContextKProcess{\Hole}\para P$.

\begin{prooftree}
    \AxiomC{$\TypedContext \CPJudContext{\Delta}{\Gamma_1}$}
    \AxiomC{$Q\CPJud \Gamma_2$}
    \BinaryInfC{$\TypedContext\para Q \CPJudContext{\Delta}{\Gamma_1,\Gamma_2}$}
\end{prooftree}

By I.H. $\TypedContextKProcess{P}\CPJud \Gamma_1$, thus we finish by applying the rule $Mix$.
    \end{itemize}
\end{proof}
\subparagraph*{Transformer contexts.}  As we have seen, contexts in our setting are hardly arbitrary: only type-compatible processes are inserted into holes. Based on this observation, and following the typing rules, we define 
\textit{transformer contexts} and \textit{transformer contexts with closing name}:

\begin{definition}
    Let $\Delta=x_1:A_1,\dots,x_n:A_n$ be a typing context.  We define:
    \begin{itemize}
    \item \textit{transformer contexts}:  
    $\TranProNameswone{\Hole}{\Delta}=\cut{x_n}{\cdots\cut{x_1}{\Hole}{\Transformer{x_1,y_1}{A_1}}}{\cdots)\para \Transformer{x_n,y_n}{A_n}}$
        \item  \textit{transformer contexts with closing name} ($z$ is fresh wrt $\Delta$): 
        $\TranProNames{\Hole}{\Delta}= \TranProNameswone{\Hole}{\Delta} \para \emptyOut{z}.$
    \end{itemize}
\end{definition}

Note that by \Cref{theorem: observational equivalences} the order of the cuts in $\TranProNameswone{-}{\Delta}$ does not matter.
We will show that  transformers are correct: the transformed process $\TranProNames{P}{\Delta}$ is equivalent to the translated process $\TranLauP{P}$ (\Cref{lemmaTransformers}).
We need auxiliary results about transformers.
\begin{lemmaapxrep}\label{lemma: observational equivalence of transformers}
    The following observational equivalences hold:
    \begin{itemize}
    	\item  $\impp{x'}{y'}\TranProNames{P}{\Delta,y:A,x:B}\ObsEquivAtk \TranProNames{\impp{x}{y}P}{\Delta,x:A\parrType B}$ 
    \item 
    $\boutt{w'}{x'}(\Server{w'}{w} \impp{w}{y'}\TranProNames{P}{\ClientType{\Delta},y:A}\para \Emptyimpp{x'}\inact)\para \Emptyoutt{n} \ObsEquivAtk \TranProNames{\Server{x}{y}P}{\Delta,x:\ServerTypeSingle{A}}$
    \item  
    $\Client{x}{m}\boutt{z}{m}( \impp{z}{y}\TranProNames{P}{\Delta,y:A}\para \Emptyimpp{m}\inact)\ObsEquivAtk \TranProNames{\Client{x}{y}P}{\Delta,x:\ClientTypeSingle{A}}$
    \item 
    $\boutt{w}{m}(\choice{w}{i}\impp{w}{y}\TranProNames{P}{\Delta,y:A}\para \Emptyimpp{m}\inact)\para \Emptyoutt{n}\ObsEquivAtk \TranProNames{\choice{y}{i}P}{y:A_1\choiceType A_2}$
    \item 
      $\case{y'}{\TranProNames{\textcolor{Black}{P_1}}{\Delta,y:A_1}}{\TranProNames{\textcolor{Black}{P_2}}{\Delta,y:A_2}}\ObsEquivAtk \TranProNames{\case{y}{\textcolor{Black}{P_1}}{\textcolor{Black}{P_2}}}{\Delta y:A_1\caseType A_2}$
      \item $
            \boutt{z_2}{z}( \boutt{z_1}{z_2}(\impp{z_1}{y'} \TranProNames{\textcolor{Black}{P_1}}{\Delta,y:A} \para\impp{z_2}{x'} \TranProNames{\textcolor{Black}{P_2}}{\Gamma,x:B} )\para \Emptyimpp{z}\inact)\para \Emptyoutt{w} \ObsEquivAtk \TranProNames{\boutt{y}{x}(\textcolor{Black}{P_1}\para\textcolor{Black}{P_2})}{\Delta,\Gamma,x:A\Tensor B}
        $
    \end{itemize}

\end{lemmaapxrep}
\begin{proofsketch}
The proof follows from \Cref{theorem: observational equivalences} and \Cref{lemma: observational equivalences rule mix}. 
\end{proofsketch}

\begin{proof}
We consider each item separately:
\begin{itemize} 
\item 
 \begin{align*}
     &\TranProNames{\impp{x}{y}P}{\Delta,x:A\parrType B}\\
     &=\cut{x}{\TranProNameswone{\impp{x}{y}P}{\Delta,x:A\parrType B}}{\Transformer{x,x'}{A\parrType B}}\para\Emptyoutt{w}\\
     &=\cut{x}{\TranProNameswone{\impp{x}{y}P}{\Delta,x:A\parrType B}}{\impp{x'}{y'}\boutt{y}{x}(\Transformer{y,y'}{A}\para \Transformer{x,x'}{B})}\para\Emptyoutt{w}\tag{by \Cref{DefinitionTransformers}}\\
     &\ObsEquivAtk^*\impp{x'}{y'}\cut{x}{\cut{y}{\TranProNameswone{P}{\Delta,y:A}}{\Transformer{y,y'}{A}}}{\Transformer{x,x'}{B}}\para\Emptyoutt{w} \tag{by \Cref{theorem: observational equivalences}}\\
     &=\impp{x'}{y'}\TranProNames{P}{\Delta,y:A,x:B}
     \end{align*}
 \item 
\begin{align*}
    &\TranProNames{\boutt{y}{x}(\textcolor{Black}{P_1}\para \textcolor{Black}{P_2})}{\Delta,\Gamma,x:A\Tensor B}\\
    &=\cut{x}{\TranProNameswone{\boutt{y}{x}(\textcolor{Black}{P_1}\para \textcolor{Black}{P_2})}{\Gamma,\Delta}}{\Transformer{x,z}{A\Tensor B}}\para \Emptyoutt{w}\\
    &= \cut{x}{\TranProNameswone{\boutt{y}{x}(\textcolor{Black}{P_1}\para \textcolor{Black}{P_2})}{\Gamma,\Delta}}{\impp{x}{y}\boutt{z_2}{z}(\boutt{z_2}{z_1}(\impp{z_1}{y}\TransformerHat{y,y'}{A}\para \impp{z_2}{x'}\TransformerHat{x,x'}{B})\para \Emptyimpp{z}}\para \Emptyoutt{w}\tag{by \Cref{DefinitionTransformers}}\\
&\ObsEquivAtk^*\boutt{z_2}{z}( \boutt{z_1}{z_2}(\impp{z_1}{y'}\TranProNames{\textcolor{Black}{P_1}}{\Delta,y:A} \para\impp{z_2}{x'} \TranProNames{\textcolor{Black}{P_2}}{\Gamma,x:B} )\para \Emptyimpp{z})\para \Emptyoutt{w} \tag{by \Cref{theorem: observational equivalences}}
\end{align*}

\item 
\begin{align*}
    &\TranProNames{\Server{x}{y}P}{\ClientTypeSingle{\Delta},x:\ServerTypeSingle{A}}\\
    &=\cut{x}{\TranProNameswone{\Server{x}{y}P}{\Delta}}{\Transformer{x,x'}{\ServerTypeSingle{A}}}\para\Emptyoutt{n}\\
    &=\cut{x}{\TranProNameswone{\Server{x}{y}P}{\ClientTypeSingle{\Delta}}}{\boutt{w'}{x'}(\Server{w'}{w}\Client{x}{y}\impp{w}{y'}\TransformerHat{y,y'}{A}\para\Emptyimpp{x'})}\para \Emptyoutt{n}\tag{by \Cref{DefinitionTransformers}}\\
    &\ObsEquivAtk^*\boutt{w'}{x'}(\Server{w'}{w} \impp{w}{y'}\cut{y}{\TranProNameswone{P}{\ClientTypeSingle{\Delta}}}{\TransformerHat{y,y'}{A}}\para \Emptyimpp{x'})\para \Emptyoutt{n} \tag{by \Cref{theorem: observational equivalences}}\\
    &=\boutt{w'}{x'}(\Server{w'}{w} \impp{w}{y'}\TranProNames{P}{\ClientTypeSingle{\Delta},y:A}\para \Emptyimpp{x'})\para \Emptyoutt{n} 
\end{align*}

\item 
\begin{align*}
    &\TranProNames{\Client{x}{y}P}{\Delta,x:\ClientTypeSingle{A}}\\
    &=\cut{x}{\TranProNameswone{\Client{x}{y}P}{\Delta}}{\Transformer{x,x'}{\ClientTypeSingle{A}}}\para \Emptyoutt{n}\\
    &=\cut{x}{\TranProNameswone{\Client{x}{y}P}{\Delta}}{\Server{x}{y}\Client{x'}{m}\boutt{z}{m}(\impp{z}{y'}\TransformerHat{y,y'}{A} \para \Emptyimpp{m})}\para \Emptyoutt{n}\tag{by \Cref{DefinitionTransformers}}\\
    &\ObsEquivAtk^* \Client{x'}{m}\boutt{z}{m}(\impp{z}{y'}\cut{y}{\TranProNameswone{P}{\Delta}}{\Transformer{y,y'}{A}}\para \Emptyimpp{m})\para \Emptyoutt{n} \tag{by \Cref{theorem: observational equivalences}}\\
    &=\Client{x'}{m}\boutt{z}{m}(\impp{z}{y'}\TranProNames{P}{\Delta,y:A}\para \Emptyimpp{m})\para \Emptyoutt{n} \\
\end{align*}

\item  Let
$P_i=\boutt{w}{m}(\choice{w}{i}\impp{w}{y'}\TransformerHat{y,y'}{A_i}\para \Emptyimpp{m}\inact)$, thus we have:
\begin{align*}
    &\TranProNames{\choice{y}{i}P}{\Delta,y:A_1\choiceType A_2}\\
    &=\cut{y}{\TranProNameswone{\choice{y}{i}P}{\Delta}}{\Transformer{y,m}{A_1\choiceType A_2}}\para\Emptyoutt{n}\\
    &=\cut{y}{\TranProNameswone{\choice{y}{i}P}{\Delta}}{\case{y}{P_1}{P_2}}\para\Emptyoutt{n}\tag{by \Cref{DefinitionTransformers}}\\
    &\ObsEquivAtk \cut{y}{\TranProNameswone{P}{\Delta}}{\boutt{w}{m}(\choice{w}{i}\impp{w}{y'}\TransformerHat{y,y'}{A_i}\para \Emptyimpp{m})}\para\Emptyoutt{n}\tag{by \Cref{theorem: observational equivalences}}\\
    &\ObsEquivAtk^* \boutt{w}{m}(\choice{w}{i}\impp{w}{y'}\cut{y}{\TranProNameswone{P}{\Delta}}{\TransformerHat{y,y'}{A_i}}\para \Emptyimpp{m})\para\Emptyoutt{n}\tag{by \Cref{theorem: observational equivalences}}\\
    &=\boutt{w}{m}(\choice{w}{i}\impp{w}{y'}\TranProNames{P}{\Delta,y:A_i}\para \Emptyimpp{m})\para\Emptyoutt{n}
\end{align*}

\item
The following observational equivalence follows by the application of the observational equivalences arising from the permutation of cuts and cut elimination (\Cref{theorem: observational equivalences}) and permutation of \MixTwo (\Cref{lemma: observational equivalences rule mix}).
\begin{align*}
    &\TranProNames{\case{y}{P_1}{P_2}}{\Delta,y:A_1\caseType A_2}\\
    &=\cut{y}{\TranProNameswone{\case{y}{P_1}{P_2}}{\Delta}}{\Transformer{y,y'}{A_1\caseType A_2}}\para \Emptyoutt{n}\\
    &=\cut{y}{\TranProNameswone{\case{y}{P_1}{P_2}}{\Delta}}{ \case{y'}{\choice{y}{1}\Transformer{y,y'}{A_1}}{\choice{y}{2}\Transformer{y,y'}{A_2}}}\para \Emptyoutt{n}\tag{by \Cref{DefinitionTransformers}}\\
   &\ObsEquivAtk \case{y'}
   {
    \cut{y}
    {\TranProNameswone{\case{y}{P_1}{P_2}}{\Delta}}
    {\choice{y}{1}\Transformer{y,y'}{A_1}}
    }{
    \cut{y}{\TranProNameswone{\case{y}{P_1}{P_2}}{\Delta}}
    {\choice{y}{2}\Transformer{y,y'}{A_2}}
    }\para \Emptyoutt{n}\\
    &\ObsEquivAtk^* \case{y'}{
    \cut{y}{\TranProNameswone{P_1}{\Delta}}
    {\Transformer{y,y'}{A_1}}}
    {\cut{y}{\TranProNameswone{P_2}{\Delta}}
    {\Transformer{y,y'}{A_2}}}\para \Emptyoutt{n}\\
    &= \case{y'}
   {\TranProNameswone{P_1}{\Delta,y:A_1}}
    {{\TranProNameswone{P_2}{\Delta,y:A_2}}
    }
    \para \Emptyoutt{n}\\
    &\ObsEquivAtk \case{y'}
   {\TranProNameswone{P_1}{\Delta,y:A_1}\para \Emptyoutt{n}}
    {{\TranProNameswone{P_2}{\Delta,y:A_2}\para \Emptyoutt{n}}
    }\\
    &= \case{y'}
   {\TranProNames{P_1}{\Delta,y:A_1}}
    {{\TranProNames{P_2}{\Delta,y:A_2}}
    }\\
\end{align*}
\end{itemize}
\end{proof}
\begin{lemmaapx}\label{lemma: observational equivalence forwarders}
    $\AtkeyDenotations{\forward{x}{y}\CPJud x:\unit,y:\bot}=\AtkeyDenotations{\Emptyoutt{x}\para \Emptyimpp{y}\inact\CPJud x:\unit,y:\bot}$
\end{lemmaapx}

\begin{lemmaapxrep}\label{lemma: observation transformer client}
Let $\Delta= x_1: \ServerTypeSingle{\Dual{A}}, x_1':\ClientTypeSingle{((A\parrType\unit)\Tensor\bot)}$. We have:
\begin{mathpar}
        \AtkeyDenotations{\Transformer{x_1,x_1'}{\ClientTypeSingle{A}}\CPJud \Delta}
    =\left\{
      \begin{aligned}
        &(\bag{a_1,\dots,a_k},\uplus_{j=1}^k\bag{((a',*),*)})\\[-1.5em]
         \para & \forall i\in\{1,\dots,k\}.(a_i,a_i')\in \AtkeyDenotations{\Transformer{y,y'}{A}\CPJud y:\Dual{A}, y':\DualTranLau{A}}
      \end{aligned}
\right\}
\end{mathpar}
\end{lemmaapxrep}

\begin{proof}
The proof follows straightforwardly  by applying the  rules from \Cref{fig:DenotationalSemantics} to calculate the denotations of $\Transformer{x_1,x_1'}{\ClientTypeSingle{A}}$. In the following calculation, in each steps we calculate the denotations of a larger subprocess of $\Transformer{x_1,x_1'}{\ClientTypeSingle{A}}$. We proceed as follows:
    \begin{align*}
       (1)\quad &\AtkeyDenotations{\Transformer{y,y'}{A}\para \Emptyoutt{z}\CPJud y:\Dual{A}, y':\DualTranLau{A},z:\unit}\\
        &=\{(a,a',*)\para (a,a') \in \AtkeyDenotations{\Transformer{y,y'}{A}\CPJud y:\Dual{A}, y':\DualTranLau{A}}\}\\
        (2)\quad
        &\AtkeyDenotations{\impp{z}{y'}(\Transformer{y,y'}{A}\para \Emptyoutt{z})\CPJud y:\Dual{A}, z:\DualTranLau{A}\parrType\unit}\\
        &=\{(a,(a',*)) \para (a,a',*)\in \AtkeyDenotations{\Transformer{y,y'}{A}\para \Emptyoutt{z}\CPJud y:\Dual{A}, y':\DualTranLau{A},z:\unit}\}\\
        (3)\quad 
        &\AtkeyDenotations{\boutt{z}{w}(\impp{z}{y'}(\Transformer{y,y'}{A}\para \Emptyoutt{z})\para \Emptyimpp{w}\inact)\CPJud y:\Dual{A}, w:(\DualTranLau{A}\parrType\unit)\Tensor\bot}\\
        &=\{(a,((a',*),*)) \para (a,(a',*)) \in \AtkeyDenotations{\impp{z}{y'}(\Transformer{y,y'}{A}\para \Emptyoutt{z})\CPJud y:\Dual{A}, z:\DualTranLau{A}\parrType\unit}\}\\
        (4)\quad
        &\AtkeyDenotations{\Client{x'}{w}\boutt{z}{w}(\impp{z}{y'}(\Transformer{y,y'}{A}\para \Emptyoutt{z})\CPJud y:\Dual{A}, x':\ClientType{((\DualTranLau{A}\parrType\unit)\Tensor\bot)}}\\
        &=\{(a,\bag{((a',*),*)}) \para\\
        &\quad\quad  (a,((a',*),*)) \in \AtkeyDenotations{\boutt{z}{w}(\impp{z}{y'}(\Transformer{y,y'}{A}\para \Emptyoutt{z})\para \Emptyimpp{w}\inact)\CPJud y:\Dual{A}, w:(\DualTranLau{A}\parrType\unit)\Tensor\bot}\}\\
        (5)\quad
        &\AtkeyDenotations{\overbrace{\Server{x}{y}\Client{x'}{w}\boutt{z}{w}(\impp{z}{y'}(\Transformer{y,y'}{A}\para \Emptyoutt{z})}^{\Transformer{x,x'}{\ClientTypeSingle{A}}}\CPJud x:\ServerType{\Dual{A}}, x':\ClientType{((\DualTranLau{A}\parrType\unit)\Tensor\bot)}}\\
        &=\{(\bag{a_1,\dots,a_k},\uplus_{j_1}^k\bag{((a_j',*),*)}) \para \forall i\in\{1,\dots,k\}.\\
        &\quad\quad (a_i,\bag{((a_i',*),*)}) \in \AtkeyDenotations{\Client{x'}{w}\boutt{z}{w}(\impp{z}{y'}(\Transformer{y,y'}{A}\para \Emptyoutt{z})\CPJud y:\Dual{A}, x':\ClientType{((\DualTranLau{A}\parrType\unit)\Tensor\bot)}}\}
    \end{align*}

    Thus, from the previous calculations we have:
    \begin{align*}
        &\AtkeyDenotations{\overbrace{\Server{x}{y}\Client{x'}{w}\boutt{z}{w}(\impp{z}{y'}(\Transformer{y,y'}{A}\para \Emptyoutt{z})}^{\Transformer{x,x'}{\ClientTypeSingle{A}}}\CPJud x:\ServerType{\Dual{A}}, x':\ClientType{((\DualTranLau{A}\parrType\unit)\Tensor\bot)}}\\
        &=\{(\bag{a_1,\dots,a_k},\uplus_{j_1}^k\bag{((a_j',*),*)}) \para \forall i\in\{1,\dots,k\}.\\
        &\quad\quad (a_i,a_i')\in \AtkeyDenotations{\Transformer{y,y'}{A}\CPJud y:\Dual{A}, y':\DualTranLau{A}}\}
    \end{align*}
\end{proof}

We may now establish the correctness of transformers:
\begin{lemmaapxrep}\label{lemmaTransformers} Suppose $P\CPJud \Gamma$. Then $\AtkeyDenotations{\TranLauP{P}\CPJud\DualTranLau{\Gamma},w:\unit}=\AtkeyDenotations{\TranProNames{P}{\Gamma}\CPJud\DualTranLau{\Gamma},w:\unit}$.
\end{lemmaapxrep}
\begin{proofsketch}
By induction on the structure of $P$. We consider a number of illustrative cases.
If $P=\Server{x}{y}P'$, then:
\begin{align*}
\TranLauP{\Server{x}{y}P'}&=\boutt{x}{m}(\Server{x}{w}\impp{w}{y}\TranLauP{P'}\para\forward{m}{n})\\
&\ObsEquivAtk
\boutt{x}{m}(\Server{x}{w}\impp{w}{y}\TranProNames{P'}{\Delta,y:A}\para\Emptyimpp{m}\inact)\para \Emptyoutt{n}\tag{by I.H. and  \Cref{lemma: observational equivalence forwarders} }\\
    &\ObsEquivAtk \TranProNames{P'}{\Delta,x:\ServerType{A}}\tag{by \Cref{lemma: observational equivalence of transformers}}
\end{align*}
If $P=\cut{x}{R}{Q}$, then we need to show:
\[
  \TranLauP{P} \ObsEquivAtk 
\cut{z}{\impp{z}{y}\TranProNames{Q}{\Gamma,x:\Dual{A}}}{\cut{w}{\impp{w}{x'}\TranProNames{R}{\Delta,x:A}}{\Synchronizer{z,w}{A}}}\ObsEquivAtk \TranProNames{\cut{x}{R}{Q}}{\Gamma,\Delta}\]
By I.H. and \Cref{theorem: laurent transformation on denotations} we know that: 
\begin{align*}
     &(\FDenotationsInd{\Delta}{\delta},\FDenotationsInd{\Gamma}{\gamma},*)\in \AtkeyDenotations{\cut{z}{\impp{z}{y}\TranProNames{Q}{\Gamma,x:\Dual{A}}}{\cut{w}{\impp{w}{x'}\TranProNames{R}{\Delta,x:A}}{\Synchronizer{z,w}{A}}}}\\
     &\quad\quad \Leftrightarrow\,(\delta,\gamma)\in\AtkeyDenotations{\cut{x}{R}{Q}}
\end{align*}
Thus, the observations on the left-hand side are exactly those from $\cut{x}{R}{Q}$ under some transformation; that transformation being the one induced by the transformers, having thus the same observation as $\TranProNames{\cut{x}{R}{Q}}{\Gamma,\Delta}$. When the last rule applied is either  \rulenamestyle{W} or \rulenamestyle{C}, we rely on the I.H. and \Cref{lemma: observation transformer client}.
\end{proofsketch} 
\begin{proof}
 By induction on the structure of $P$.
\begin{itemize}

\item Case $\inact$.
Note that $\TranLauP{\inact}=\Emptyoutt{x}$, and $\TranProNames{\inact}{\cdot}=\TranProNameswone{\inact}{\cdot}\para\Emptyoutt{x}=\Emptyoutt{x}$.

\item Case $\Emptyoutt{x}$.
By \Cref{def: Tranformation of processes}
$\TranLauP{\Emptyoutt{x}}=\boutt{x}{m}(\Emptyoutt{x}\para\fwd{m}{n})$.  By \cref{lemma: transformation on processes}, we have: 
$$\boutt{x}{m}(\Emptyoutt{x}\para\fwd{m}{n})\CPJud m:\unit\Tensor\bot,n:\unit$$ 
On the other hand, $\Transformer{x,m}{\unit}  = \boutt{y}{m}(\forward{y}{x}\para \Emptyimpp{m}\inact)\CPJud  m:\unit\Tensor\bot, x:\bot$, and $$\TranProNames{\Emptyoutt{x}}{m:\unit\Tensor\bot, x:\bot}=\cut{x}{\Emptyoutt{x}}{\boutt{y}{m}(\forward{y}{m}\para \Emptyimpp{x}\inact)}\para\Emptyoutt{n}\CPJud m:\unit\Tensor\bot,n:\unit$$

Thus, the result follows by straightforwardly calculating the denotations.
\item Case $\Emptyimpp{x}P$. We have:
\begin{prooftree}
    \AxiomC{$P\CPJud\Delta$}
    \UnaryInfC{$\Emptyimpp{x}P\CPJud\Delta,x:\bot$}
\end{prooftree}

By the I.H. we know $\AtkeyDenotations{\Emptyimpp{x}\TranLauP{P}\CPJud\DualTranLau{\Delta},x:\bot,w:\unit}=\AtkeyDenotations{\Emptyimpp{x}\TranProNames{P}{\Delta}\CPJud\DualTranLau{\Delta},x:\bot,w:\unit}$. By \Cref{lemma: observational equivalences rule mix,theorem: observational equivalences} we know that 
$\Emptyimpp{x}\TranProNames{P}{\Delta}\ObsEquivAtk \TranProNames{\Emptyimpp{x}P}{\Delta}$. 
This way, $\TranProNames{\Emptyimpp{x}P}{\Delta}\ObsEquivAtk \TranProNames{\Emptyimpp{x}P}{\Delta,x:\bot}$ follows from the fact that $\Transformer{x,x'}{\bot}$ is a forwarder.

\item Case $P\para Q$.
\begin{prooftree}
    \AxiomC{$P\CPJud\Gamma$}
    \AxiomC{$Q\CPJud \Delta$}
    \BinaryInfC{$P\para Q\CPJud \Gamma,\Delta$}
\end{prooftree}

By the I.H.: 
\begin{align*}
    \AtkeyDenotations{\TranLauP{P}\CPJud\DualTranLau{\Gamma},z:\unit}&=
    \AtkeyDenotations{\TranProNames{P}{\Gamma}\CPJud\DualTranLau{\Gamma},z:\unit}\\
    \AtkeyDenotations{\TranLauP{Q}\CPJud\DualTranLau{\Delta},w:\unit}&=
    \AtkeyDenotations{\TranProNames{Q}{\Delta}\CPJud\DualTranLau{\Delta},w:\unit}
\end{align*}
By \Cref{def: Tranformation of processes}:
\[\TranLauP{P\para Q}=\cut{w}{\boutt{z}{w}(\TranLauP{P}\para \TranLauP{Q})}{\impp{w}{z}\Emptyimpp{z}\forward{w}{m}}\]
Thus, we have the following equivalences:
\begin{align*}
    &\cut{w}{\boutt{z}{w}(\TranLauP{P}\para \TranLauP{Q})}{\impp{w}{z}\Emptyimpp{z}\forward{w}{m}}\\
    &\ObsEquivAtk\cut{w}{\boutt{z}{w}(\TranProNames{P}{\Gamma}\para \TranProNames{Q}{\Delta})}{\impp{w}{z}\Emptyimpp{z}\forward{w}{m}}\tag{by I.H.}\\
    &\ObsEquivAtk\cut{w}{\boutt{z}{w}(\TranProNames{P}{\Gamma}\para \TranProNames{Q}{\Delta})}{\impp{w}{z}\Emptyimpp{z}(\Emptyimpp{w}\inact\para\Emptyoutt{m})}\tag{by \Cref{lemma: observational equivalence forwarders}}\\
     &\ObsEquivAtk\cut{w}{\boutt{z}{w}(\TranProNames{P}{\Gamma}\para \TranProNames{Q}{\Delta})}{\impp{w}{z}\Emptyimpp{z}\Emptyimpp{w}\inact}\para\Emptyoutt{m}\tag{by \Cref{lemma: observational equivalences rule mix}}\\
      &=\cut{w}{\boutt{z}{w}((\TranProNameswone{P}{\Gamma}\para\Emptyoutt{z})\para (\TranProNameswone{Q}{\Delta}\para\Emptyoutt{w}))}{\impp{w}{z}\Emptyimpp{z}\Emptyimpp{w}\inact}\para\Emptyoutt{m}
\end{align*}

We finish by showing that:
\[\cut{w}{\boutt{z}{w}((\TranProNameswone{P}{\Gamma}\para\Emptyoutt{z})\para (\TranProNameswone{Q}{\Delta}\para\Emptyoutt{w}))}{\impp{w}{z}\Emptyimpp{z}\Emptyimpp{w}\inact}\para\Emptyoutt{m}\ObsEquivAtk \TranProNames{P\para Q}{\Gamma,\Delta}\]

However, note that: 
\begin{align*}
    \TranProNames{P\para Q}{\Gamma,\Delta}&= \TranProNameswone{P\para Q}{\Gamma,\Delta}\para \Emptyoutt{m}\\
    &\ObsEquivAtk\TranProNameswone{P}{\Gamma}\para\TranProNameswone{Q}{\Delta}\para \Emptyoutt{m}
\end{align*}

Thus, it is enough to show that:
\[\cut{w}{\boutt{z}{w}((\TranProNameswone{P}{\Gamma}\para\Emptyoutt{z})\para (\TranProNameswone{Q}{\Delta}\para\Emptyoutt{w}))}{\impp{w}{z}\Emptyimpp{z}\Emptyimpp{w}\inact}\para\Emptyoutt{m}\ObsEquivAtk \TranProNameswone{P}{\Gamma}\para\TranProNameswone{Q}{\Delta}\para \Emptyoutt{m}\]

which comes out straightforwardly  from the fact that the observations of both processes are exactly those transformed with $\TranProNameswone{P}{\Gamma}$ and $\TranProNameswone{Q}{\Delta}$.
\item Case $\cut{x}{P}{Q}$. We have:
\begin{prooftree}
    \AxiomC{$P\CPJud \Delta,x:A$}
    \AxiomC{$Q\CPJud \Gamma,x:\Dual{A}$}
    \BinaryInfC{$\cut{x}{P}{Q}\CPJud \Delta,\Gamma$}
\end{prooftree}

By \cref{fig:DenotationalSemantics}:
\begin{align*}
    \AtkeyDenotations{\cut{x}{P}{Q}\CPJud\Delta,\Gamma}
    &=\{(\delta,\gamma)\para (\delta,a)\in \AtkeyDenotations{P\CPJud\Delta,x:A}, (\gamma,a)\in \AtkeyDenotations{Q\CPJud \Gamma,x:A^\bot}\}
\end{align*}

By the I.H. we have:
\begin{align*}
\AtkeyDenotations{\TranLauP{Q}\CPJud\DualTranLau{\Gamma},x':\DualTranLau{\Dual{A}},z:\unit} &=\AtkeyDenotations{\TranProNames{Q}{\Gamma}\CPJud\DualTranLau{\Gamma},x':\DualTranLau{\Dual{A}},z:\unit}\\
    \AtkeyDenotations{\TranLauP{P}\CPJud \DualTranLau{\Delta},x':\DualTranLau{A},w:\unit}&=
    \AtkeyDenotations{\TranProNames{P}{\Delta,x:A}\CPJud \DualTranLau{\Delta},x':\DualTranLau{A},w:\unit}
\end{align*}
Thus, we have the following equality:
\begin{align*}
    &\AtkeyDenotations{\cut{z}{\impp{z}{x'}\TranLauP{Q}}{\cut{w}{\impp{w}{x'}\TranLauP{P}}{\Synchronizer{w,z}{A}}}\CPJud \DualTranLau{\Delta},\DualTranLau{\Gamma},s:\unit}\\
    &=\AtkeyDenotations{\cut{z}{\impp{z}{x'}\TranProNames{Q}{\Gamma,x:\Dual{A}}}{\cut{w}{\impp{w}{x'}\TranProNames{P}{\Delta,x:A}}{\Synchronizer{w,z}{A}}}\CPJud \DualTranLau{\Delta},\DualTranLau{\Gamma},s:\unit}
\end{align*}

Note that by \Cref{theorem: laurent transformation on denotations}:
\begin{align*}
    \FDenotations{\Delta,\Gamma}{\AtkeyDenotations{\cut{x}{P}{Q}\CPJud\Delta,\Gamma}}=\AtkeyDenotations{\TranLauP{(\cut{x}{P}{Q})}\CPJud\DualTranLau{\Delta},\DualTranLau{\Gamma},s:\unit}\label{case:cut}
\end{align*}
thus we have both:
\begin{align}
(\delta,\gamma)\in\AtkeyDenotations{\cut{x}{P}{Q}} \, &\Leftrightarrow \,(\FDenotationsInd{\Delta}{\delta},\FDenotationsInd{\Gamma}{\gamma},*)\in \AtkeyDenotations{\cut{w}{\impp{w}{x'}\TranLauP{P}}{\cut{z}{\impp{z}{y}\TranLauP{Q}}{\Synchronizer{z,w}{A}}}}\\
  \,&\Leftrightarrow \,(\FDenotationsInd{\Delta}{\delta},\FDenotationsInd{\Gamma}{\gamma},*)\in \AtkeyDenotations{\cut{w}{\impp{w}{x'}\TranProNames{P}{\Delta}}{\cut{z}{\impp{z}{y}\TranProNames{Q}{\Gamma}}{\Synchronizer{z,w}{A}}}}\label{tag: 2}
\end{align}
  
We finish the proof by showing that:
\[\AtkeyDenotations{\TranProNames{\cut{x}{P}{Q}}{\Delta,\Gamma}}=\AtkeyDenotations{\cut{w}{\impp{w}{x'}\TranProNames{P}{\Delta}}{\cut{z}{\impp{z}{y}\TranProNames{Q}{\Gamma}}{\Synchronizer{z,w}{A}}}}\]

Note that, by \cref{lemma: denotations of synchronizers}, the denotations in $\AtkeyDenotations{\cut{w}{\impp{w}{x'}\TranProNames{P}{\Delta}}{\cut{z}{\impp{z}{y}\TranProNames{Q}{\Gamma}}{\Synchronizer{z,w}{A}}}}$, are those that are composable between $\TranProNames{P}{\Delta}$ and $\TranProNames{Q}{\Gamma}$, and by \eqref{tag: 2}, those are the transformed  denotations of $\cut{x}{P}{Q}$. Moreover, the denotations of $\TranProNames{\cut{x}{P}{Q}}{\Delta,\Gamma}$ are also the transformed denotations of $\cut{x}{P}{Q}$.

\item Case $\impp{x}{y}P$.
By the I.H., $\TranLauP{P}\ObsEquivAtk \TranProNames{P}{\Delta,y:A,x:B}$. Also, by \Cref{def: Tranformation of processes}, $\TranLauP{\impp{x}{y}P.}=\impp{x'}{y}\TranLauP{P}$.
Thus, we have
\begin{align*}
    \impp{x'}{y}\TranLauP{P}&\ObsEquivAtk\impp{x'}{y}\TranProNames{P}{\Delta,y:A,x:B}\tag{by I.H.}\\
    &\ObsEquivAtk \TranProNames{\impp{x}{y}P}{\Delta,x':A\parrType B}
    \tag{by \Cref{lemma: observational equivalence of transformers} }
\end{align*}

\item Case $\boutt{y}{x}(\textcolor{Black}{P_1} \para \textcolor{Black}{P_2})$.
By the I.H., $\TranLauP{P_1}\ObsEquivAtk \TranProNames{P_1}{\Delta,y:A}$ and$\TranLauP{P_2}\ObsEquivAtk \TranProNames{P_2}{\Gamma,x:B}$.
By \Cref{def: Tranformation of processes}:
\[\TranLauP{\boutt{y}{x}(\textcolor{Black}{P_1} \para \textcolor{Black}{P_2})}=\boutt{z_2}{z}(\boutt{z_1}{z_2}(\impp{z_1}{y'} \TranLauP{\textcolor{Black}{P_1}}\para\impp{z_2}{x'} \TranLauP{\textcolor{Black}{P_2}})\para \fwd{z}{w})\]

Thus we have the following equivalences:
\begin{align*}
    &\boutt{z_2}{z}(\boutt{z_1}{z_2}(\impp{z_1}{y'} \TranLauP{\textcolor{Black}{P_1}}\para\impp{z_2}{x'} \TranLauP{\textcolor{Black}{P_2}})\para \fwd{z}{w})\\
    &\quad\ObsEquivAtk  \boutt{z_2}{z}(\boutt{z_1}{z_2}(\impp{z_1}{y'} \TranLauP{\textcolor{Black}{P_1}}\para\impp{z_2}{x'} \TranLauP{\textcolor{Black}{P_2}})\para (\Emptyimpp{m}\inact\para\Emptyoutt{n}))\tag{by \Cref{lemma: observational equivalence forwarders}}\\
     &\quad\ObsEquivAtk  \boutt{z_2}{z}(\boutt{z_1}{z_2}(\impp{z_1}{y'} \TranLauP{\textcolor{Black}{P_1}}\para\impp{z_2}{x'} \TranLauP{\textcolor{Black}{P_2}})\para \Emptyimpp{m}\inact)\para\Emptyoutt{n}\tag{by \Cref{lemma: observational equivalences rule mix}}\\
     &\quad\ObsEquivAtk  \boutt{z_2}{z}(\boutt{z_1}{z_2}(\impp{z_1}{y'} \TranProNames{P_1}{\Delta,y:A}\para\impp{z_2}{x'} \TranProNames{P_2}{\Gamma,x:B})\para \Emptyimpp{m}\inact)\para\Emptyoutt{n} \tag{by I.H.}\\
     &\quad\ObsEquivAtk \TranProNames{\boutt{y}{x}(\textcolor{Black}{P_1}\para \textcolor{Black}{P_2})}{\Delta,\Gamma,x:A\Tensor B}\tag{by \Cref{lemma: observational equivalence of transformers}}
\end{align*}

\item Case $\Server{x}{y}P$. We have:
\begin{prooftree}
    \AxiomC{$P\CPJud \ClientTypeSingle{\Delta},y:A$}
    \UnaryInfC{$\Server{x}{y}P\CPJud\ClientTypeSingle{\Delta},x:\ServerTypeSingle{A}$}
\end{prooftree}

By the I.H.:
\[\AtkeyDenotations{\TranLauP{P}\CPJud \ClientTypeSingle{\Delta},y:A}=\AtkeyDenotations{\TranProNames{P}{\ClientTypeSingle{\Delta},y:A}\CPJud \ClientTypeSingle{\Delta},y:A}\]

By \Cref{def: Tranformation of processes} we have
$\TranLauP{\Server{x}{y}P}=\boutt{x}{m}(\Server{x}{w}\impp{w}{y}\TranLauP{P}\para\fwd{m}{n})$, and so 
 we have the following equivalences:
\begin{align*}
&\boutt{x}{m}(\Server{x}{w}\impp{w}{y}\TranLauP{P}\para\fwd{m}{n})\\ &\ObsEquivAtk \boutt{x}{m}(\Server{x}{w}\impp{w}{y}\TranLauP{P}\para (\Emptyimpp{m}\inact \para \Emptyoutt{n})) \tag{by \Cref{lemma: observational equivalence forwarders}}\\
&\ObsEquivAtk \boutt{x}{m}(\Server{x}{w}\impp{w}{y}\TranLauP{P}\para \Emptyimpp{m}\inact )\para \Emptyoutt{n}\tag{by \Cref{lemma: observational equivalences rule mix}} \\
&\ObsEquivAtk \boutt{x}{m}(\Server{x}{w}\impp{w}{y}\TranProNames{P}{\Delta,y:A}\para\Emptyimpp{m}\inact)\para \Emptyoutt{n}\tag{by I.H.}\\
    &\ObsEquivAtk \TranProNames{P}{\Delta,x:\ServerTypeSingle{A}}\tag{by \Cref{lemma: observational equivalence of transformers}}
\end{align*}
 
    \item Case $\Client{x}{y}P$.
    By I.H.,
\[\TranLauP{P}\CPJud \DualTranLau{\Delta}, y: \DualTranLau{A}, z:\unit\ObsEquivAtk\TranProNames{P}{\Delta,y:A}\CPJud \DualTranLau{\Delta}, y: \DualTranLau{A}, z:\unit\]

By \Cref{def: Tranformation of processes}, $\TranLauP{\Client{x}{y}P} =\Client{x'}{m}\boutt{z}{m}( \impp{z}{y}\TranLauP{P}\para \fwd{n}{m})$. Then, we have the following equivalences:
\begin{align*}
&\Client{x'}{m}\boutt{z}{m}( \impp{z}{y}\TranLauP{P}\para \fwd{n}{m})\\
&\ObsEquivAtk \Client{x'}{m}\boutt{z}{m}( \impp{z}{y}\TranLauP{P}\para(\Emptyimpp{m}\inact\para \Emptyoutt{n}) )\tag{by \Cref{lemma: observational equivalence forwarders}}\\
&\ObsEquivAtk \Client{x}{m}\boutt{z}{m}( \impp{z}{y}\TranLauP{P}\para \Emptyimpp{m}\inact)\para \Emptyoutt{n}\tag{by \Cref{lemma: observational equivalences rule mix}}\\
&\ObsEquivAtk \Client{x}{m}\boutt{z}{m}( \impp{z}{y}\TranProNames{P}{\Delta,y:A}\para \Emptyimpp{m}\inact)\para \Emptyoutt{n}\tag{by I.H.}\\
    &\ObsEquivAtk \TranProNames{\Client{x}{y}P}{\Delta,x:\ClientTypeSingle{A}}\tag{by \Cref{lemma: observational equivalence of transformers}}
\end{align*}

\item Case $\choice{y}{i}P$.
 By the I.H.,
 $\TranLauP{P}\ObsEquivAtk\TranProNames{P}{\Delta,y:A} $, and by \Cref{def: Tranformation of processes}
$$\TranLauP{\choice{y}{i}P}=\boutt{w}{m}(\choice{w}{i}\impp{w}{y}\TranLauP{P}\para \fwd{m}{n})$$
Thus, we have:
   \begin{align*}
   &\boutt{w}{m}(\choice{w}{i}\impp{w}{y}\TranLauP{P}\para \fwd{m}{n})\\
   &\ObsEquivAtk \boutt{w}{m}(\choice{w}{i}\impp{w}{y}\TranLauP{P}\para (\Emptyimpp{m}\inact\para \Emptyoutt{n}) )\tag{by \Cref{lemma: observational equivalence forwarders}}\\
    &\ObsEquivAtk \boutt{w}{m}(\choice{w}{i}\impp{w}{y}\TranLauP{P}\para \Emptyimpp{m}\inact)\para \Emptyoutt{n}\tag{by \Cref{lemma: observational equivalences rule mix}}\\
      &\ObsEquivAtk \boutt{w}{m}(\choice{w}{i}\impp{w}{y}\TranProNames{P}{\Delta,y:A}\para \Emptyimpp{m})\para \Emptyoutt{n} \tag{by I.H.}\\
       &\ObsEquivAtk \TranProNames{\choice{y}{i}P}{y:A_1\choiceType A_2} \tag{by \Cref{lemma: observational equivalence of transformers}}
   \end{align*}

   \item Case $\case{x}{\textcolor{Black}{P_1}}{\textcolor{Black}{P_2}}$.
    By I.H. $\TranLauP{P_i}\ObsEquivAtk\TranProNames{P_i}{\Delta,x:A_i}$, and
    by \Cref{def: Tranformation of processes}
   $\TranLauP{\case{x}{\textcolor{Black}{P_1}}{\textcolor{Black}{P_2}}}=\case{x'}{\TranLauP{\textcolor{Black}{P_1}}}
        {\TranLauP{\textcolor{Black}{P_2}}}$,
 thus, we have:
     \begin{align*}
         &\case{x'}{\TranLauP{\textcolor{Black}{P_1}}}
        {\TranLauP{\textcolor{Black}{P_2}}}\\
        & \ObsEquivAtk\case{x}{\TranProNames{\textcolor{Black}{P_1}}{\Delta,x:A_1}}{\TranProNames{\textcolor{Black}{P_2}}{\Delta,x:A_2}}\tag{by I.H.}\\
         &\ObsEquivAtk \TranProNames{\case{x}{\textcolor{Black}{P_1}}{\textcolor{Black}{P_2}}}{\Delta x:A_1\caseType A_2}\tag{by \Cref{lemma: observational equivalence of transformers}}
     \end{align*}

     \item Case $P\Substitution{x_1}{x_2}$.
We have:
     \begin{prooftree}
         \AxiomC{$P\CPJud \Delta,x_1:\ClientTypeSingle{A},x_2:\ClientTypeSingle{A}$}
         \UnaryInfC{$P\Substitution{x_1}{x_2}\CPJud \Delta,x_1:\ClientTypeSingle{A}$}
         \end{prooftree}

By \Cref{def: Tranformation of processes}, $\TranLauP{P\Substitution{x_1}{x_2}} =\TranLauP{P}\Substitution{x_1'}{x_2'} $. By the I.H.:
     \begin{align*}
         &\AtkeyDenotations{\TranLauP{P}\CPJud\DualTranLau{A}, x_1':\DualTranLau{(\ClientTypeSingle{A})},x_2':\DualTranLau{(\ClientTypeSingle{A})},w:\unit}\\
         & =\AtkeyDenotations{\TranProNames{P}{ \Delta,x_1:\ClientTypeSingle{A},x_2:\ClientTypeSingle{A}}\CPJud\DualTranLau{A}, x_1':\DualTranLau{(\ClientTypeSingle{A})},x_2':\DualTranLau{(\ClientTypeSingle{A})},w:\unit}
     \end{align*}
Thus, as a consequence the following equalities hold:
     \begin{align*}
         &\AtkeyDenotations{\TranLauP{P}\Substitution{x_1'}{x_2'}\CPJud\DualTranLau{\Delta}, x_1':\DualTranLau{(\ClientTypeSingle{A})},w:\unit}\\
           &=\{(\delta',\alpha'_1\uplus\alpha'_2) \para (\delta',\alpha'_1,\alpha'_2) \in \AtkeyDenotations{\TranLauP{P}\CPJud\DualTranLau{\Delta}, x_1':\DualTranLau{(\ClientTypeSingle{A})},x_2':\DualTranLau{(\ClientTypeSingle{A})},w:\unit}\}\\
            &=\{(\delta',\alpha'_1\uplus\alpha'_2) \para (\delta',\alpha'_1,\alpha'_2) \in  \AtkeyDenotations{\TranProNames{P}{ \Delta,x_1:\ClientTypeSingle{A},x_2:\ClientTypeSingle{A}}\CPJud\DualTranLau{\Delta}, x_1':\DualTranLau{(\ClientTypeSingle{A})},x_2':\DualTranLau{(\ClientTypeSingle{A})},w:\unit} \}\\
            &=\AtkeyDenotations{\TranProNames{P}{ \Delta,x_1:\ClientTypeSingle{A},x_2:\ClientTypeSingle{A}}\Substitution{x_1'}{x_2'}\CPJud\DualTranLau{\Delta}, x_1':\DualTranLau{(\ClientTypeSingle{A})},w:\unit}
     \end{align*}

Thus, we finish the proof by showing that:
\[\TranProNames{P}{ \Delta,x_1:\ClientTypeSingle{A},x_2:\ClientTypeSingle{A}}\Substitution{x_1'}{x_2'}\ObsEquivAtk \TranProNames{P\Substitution{x_1}{x_2}}{ \Delta,x_1:\ClientTypeSingle{A},x_2:\ClientTypeSingle{A}}.\]

However, since transformers are modular, we have the following two equalities: 
\begin{align*}
    \TranProNames{P}{\Delta,x_1:\ClientTypeSingle{A},x_2:\ClientTypeSingle{A}}\Substitution{x_1'}{x_2'}&=\TranProNames{\TranProNameswone{P}{\Delta}}{x_1:\ClientTypeSingle{A},x_2:\ClientTypeSingle{A}}\Substitution{x_1'}{x_2'}\\
     \TranProNames{P\Substitution{x_1}{x_2}}{\Delta,x_1:\ClientTypeSingle{A}}&= \TranProNames{\TranProNameswone{P\Substitution{x_1}{x_2}}{\Delta}}{x_1:\ClientTypeSingle{A}}
\end{align*}

Thus, we proceed by showing that: 
\begin{align}
\AtkeyDenotations{\TranProNames{\TranProNameswone{P}{\Delta}}{x_1:\ClientTypeSingle{A},x_2:\ClientTypeSingle{A}}\Substitution{x_1'}{x_2'}}=\AtkeyDenotations{\TranProNames{\TranProNameswone{P\Substitution{x_1}{x_2}}{\Delta}}{x_1:\ClientTypeSingle{A}}}\label{equality proof client}
    \end{align}
By \Cref{lemma: observation transformer client}:
\begin{align*}&\AtkeyDenotations{\TranProNames{\TranProNameswone{P}{\Delta}}{x_1:\ClientTypeSingle{A},x_2:\ClientTypeSingle{A}}\CPJud\DualTranLau{\Delta}, x_1':\DualTranLau{(\ClientTypeSingle{A})}, x_2':\DualTranLau{(\ClientTypeSingle{A})},w:\unit}\\
    &=\{(\delta',\bag{((a,*),*)},\bag{((a',*),*)}) \para (\delta',\bag{a},\bag{a'},*) \in \AtkeyDenotations{\TranProNameswone{P}{\Delta}\CPJud\DualTranLau{\Delta},x_1:\ClientTypeSingle{A},x_2:\ClientTypeSingle{A}}
    \end{align*}

    For the left-hand side of \eqref{equality proof client} we have:
    \begin{align*}
    &\quad \AtkeyDenotations{\TranProNames{\TranProNameswone{P}{\Delta}}{x_1:\ClientTypeSingle{A},x_2:\ClientTypeSingle{A}}\Substitution{x_1'}{x_2'}\CPJud\DualTranLau{\Delta}, x_1':\DualTranLau{(\ClientTypeSingle{A})},w:\unit }\\
    &\quad=\{ (\delta',\bag{((a,*),*)}\uplus\bag{((a',*),*)}) \para (\delta',\bag{((a,*),*)},\bag{((a',*),*)}) \\
    &\qquad\in
    \AtkeyDenotations{\TranProNames{\TranProNameswone{P}{\Delta}}{x_1:\ClientTypeSingle{A},x_2:\ClientTypeSingle{A}}\CPJud\DualTranLau{\Delta}, x_1':\DualTranLau{(\ClientTypeSingle{A})}, x_2':\DualTranLau{(\ClientTypeSingle{A})},w:\unit}\}\\
     &\quad=\{ (\delta',\bag{((a,*),*),((a',*),*)}) \para (\delta',\bag{((a,*),*)},\bag{((a',*),*)}) \\
    &\qquad\in
    \AtkeyDenotations{\TranProNames{\TranProNameswone{P}{\Delta}}{x_1:\ClientTypeSingle{A},x_2:\ClientTypeSingle{A}}\CPJud\DualTranLau{\Delta}, x_1':\DualTranLau{(\ClientTypeSingle{A})}, x_2':\DualTranLau{(\ClientTypeSingle{A})},w:\unit}\}\tag{by \Cref{lemma: union of multisets}}\\
         &\quad=\{ (\delta',\bag{((a,*),*),((a',*),*)}) \para (\delta,\bag{a},\bag{a'}) \in
    \AtkeyDenotations{P\CPJud\Delta, x_1:\ClientTypeSingle{A}, x_2:\ClientTypeSingle{A}}\}
\end{align*}

And, for the right-hand side of \eqref{equality proof client}:
 \begin{align*}
     &\AtkeyDenotations{\TranProNames{\TranProNameswone{P\Substitution{x_1}{x_2}}{\Delta}}{x_1:\ClientTypeSingle{A}}\CPJud\DualTranLau{\Delta}, x_1':\DualTranLau{(\ClientTypeSingle{A})},w:\unit}\\
     &=\{(\delta',\bag{((a,*),*)}\uplus\bag{((a',*),*)}) \para (\delta',\bag{a}\uplus\bag{a'}) \in \AtkeyDenotations{\TranProNameswone{P\Substitution{x_1}{x_2}}{\Delta}\CPJud\DualTranLau{\Delta},x_1:\ClientTypeSingle{A}}\tag{by \Cref{lemma: observation transformer client}}\\
     &=\{(\delta',\bag{((a,*),*),((a',*),*)}) \para (\delta',\bag{a,a'}) \in \AtkeyDenotations{\TranProNameswone{P\Substitution{x_1}{x_2}}{\Delta}\CPJud\DualTranLau{\Delta},x_1:\ClientTypeSingle{A}}\\
        &=\{(\delta',\bag{((a,*),*),((a',*),*)}) \para (\delta,\bag{a,a'}) \in \AtkeyDenotations{P\Substitution{x_1}{x_2}\CPJud\Delta,x_1:\ClientTypeSingle{A}}\\
          &=\{ (\delta',\bag{((a,*),*),((a',*),*)}) \para (\delta,\bag{a},\bag{a'}) \in
    \AtkeyDenotations{P\CPJud\Delta, x_1:\ClientTypeSingle{A}, x_2:\ClientTypeSingle{A}}\}
 \end{align*}

 \item Case $P\CPJud\Delta,x:\ClientTypeSingle{A}$.
 We have:
 \begin{prooftree}
     \AxiomC{$P\CPJud\Delta$}
     \UnaryInfC{$P\CPJud\Delta,x:\ClientTypeSingle{A}$}
 \end{prooftree}

By I.H.  $\AtkeyDenotations{\TranProNames{P}{\Delta}\CPJud \DualTranLau{\Delta},w:\unit}=\AtkeyDenotations{\TranLauP{P}\CPJud \DualTranLau{\Delta},w:\unit}$. As a consequence we have:
\begin{align*}
    &\AtkeyDenotations{\TranLauP{P}\CPJud \DualTranLau{\Delta},x':\ClientType{((\DualTranLau{A}\parrType\unit)\Tensor\bot)},w:\unit}\\
    &=\{(\delta',\emptyset,*) \para (\delta',*) \in \AtkeyDenotations{\TranLauP{P}\CPJud \DualTranLau{\Delta},w:\unit}\}\\
    &=\{(\delta',\emptyset,*) \para (\delta',*) \in \AtkeyDenotations{\TranProNames{P}{\Delta}\CPJud \DualTranLau{\Delta},w:\unit}\}\\
    & =\AtkeyDenotations{\TranProNames{P}{\Delta}\CPJud \DualTranLau{\Delta},x':\ClientType{((\DualTranLau{A}\parrType\unit)\Tensor\bot)},w:\unit}
\end{align*}

 The proof finishes by showing that:
 \begin{align*}
    &\AtkeyDenotations{\TranProNames{P}{\Delta,x:\ClientTypeSingle{A}}\CPJud \DualTranLau{\Delta},x':\ClientType{((\DualTranLau{A}\parrType\unit)\Tensor\bot)},w:\unit}\\
    &=\AtkeyDenotations{\TranLauP{P}\CPJud \DualTranLau{\Delta},x':\ClientType{((\DualTranLau{A}\parrType\unit)\Tensor\bot)},w:\unit}
\end{align*}

 Since transformers are modular, we have that  $\TranProNames{P}{\Delta,x:\ClientTypeSingle{A}}\ObsEquivAtk \TranProNameswone{\TranProNames{P}{\Delta}}{x:\ClientTypeSingle{A}}$. Thus, calculating the denotations of $\TranProNameswone{\TranProNames{P}{\Delta}}{x:\ClientTypeSingle{A}}$, we obtain:
\begin{align*}
    &\AtkeyDenotations{\TranProNameswone{\TranProNames{P}{\Delta}}{x:\ClientTypeSingle{A}}\CPJud \DualTranLau{\Delta},x':\ClientType{((\DualTranLau{A}\parrType\unit)\Tensor\bot)},w:\unit}\\
    &=\{(\delta',\emptyset,*)\para (\delta',\emptyset,*) \in \AtkeyDenotations{\TranProNames{P}{\Delta}\CPJud \DualTranLau{\Delta},x:\ClientTypeSingle{A},w:\unit} \}
\end{align*}
what concludes the proof.
        \end{itemize}
    \end{proof}

\begin{corollaryapx}
 \label{coro:transformer_correct}
    Given $P\CPJud\Delta$, then $\TranLauP{P}\ObsEquivAtk \TranProNames{P}{\Delta}$.
\end{corollaryapx}
\begin{proof}
    The proof follows from  \Cref{ObservationalEquivalenceCorollary} and  \Cref{lemmaTransformers}.
\end{proof}

\subparagraph*{Transformer contexts and $\FDenotations{\Delta}{-}$.}
Transformer contexts induce a function on denotations, similar to $\FDenotations{\Delta}{-}$  (\Cref{def: transformation on denotations}).
In general, we have the following result, for any context.
\begin{definition}\label{def: Transformation On denotations typed contexts}
For $\TypedContext\CPJudContext{\Delta}{\Gamma}$, we define $\AtkeyDenotations{\TypedContext}: \PowerSet{\AtkeyDenotations{\Delta}}\mapsto \PowerSet{\AtkeyDenotations{\Gamma}}$ inductively:
\begin{align*}
&\AtkeyDenotations{\Hole\CPJudContext{\Delta}{\Delta}}(X)=X\\
&\AtkeyDenotations{\cut{x}{\TypedContextKProcess{\Hole}}{Q}\CPJudContext{\Delta}{\Sigma,\Gamma}}(X)
          =\left\{(\sigma,\gamma)\para
          \begin{gathered}[c]
                (\sigma,a)\in \AtkeyDenotations{\TypedContextKProcess{\Hole}\CPJudContext{\Delta}{\Sigma,x:A}}(X),\\
           (\gamma,a)\in \AtkeyDenotations{{Q}\CPJud{\Gamma,x:\Dual{A}}}
          \end{gathered}
         \right\}\\
& \AtkeyDenotations{\TypedContext\para Q\CPJudContext{\Delta}{\Sigma,\Gamma}}(X)
              =\left \{(\sigma,\gamma)\para 
             \sigma\in 
\AtkeyDenotations{\TypedContext\CPJudContext{\Delta}{\Sigma}}(X), ~~
\gamma \in \AtkeyDenotations{Q\CPJud \Gamma}
         \right\}
\end{align*}
\end{definition}
\begin{lemmaapxrep}\label{lemma: typed context denotation function}
    Let $\TypedContext\CPJudContext{\Delta} {\Gamma}$ and $P\CPJud\Delta$ be a typed context and process, respectively. Then
$\AtkeyDenotations{\TypedContext\CPJudContext{\Delta}{\Gamma}}(\AtkeyDenotations{P\CPJud\Delta})=\AtkeyDenotations{\TypedContextKProcess{P}\CPJud\Gamma}$.
\end{lemmaapxrep}

\begin{proof}
      By induction on the derivation of $\TypedContext\CPJudContext{\Delta}{\Gamma}$, with a case analysis in the last rule applied (\Cref{fig: Typed Contexts Rules}). There are three cases:
   \begin{itemize}
      \item Case $\Hole\CPJudContext{\Delta}{\Delta}$. We have
      $\AtkeyDenotations{\Hole\CPJudContext{\Delta}{\Delta}}(\AtkeyDenotations{P\CPJud\Delta})=\AtkeyDenotations{P\CPJud\Delta}$ and so the thesis follows by \Cref{def: Transformation On denotations typed contexts}.

      \item Case $\cut{x}{\TypedContextProcess{K}{\Hole}}{Q}\CPJudContext{\Delta}{\Gamma,\Sigma}$.
We have:
         \begin{align*}
            &\AtkeyDenotations{\cut{x}{\TypedContextProcess{K'}{\Hole}}{Q}\CPJudContext{\Delta}{\Gamma,\Sigma}}(\AtkeyDenotations{P\CPJud \Delta})\\
            &\quad=\{(\gamma,\sigma) \para (\sigma,a)\in\AtkeyDenotations{Q\CPJud{\Sigma,x:A}},\\
            &\quad\qquad (\gamma,a) \in  \AtkeyDenotations{\TypedContextProcess{K'}{\Hole}\CPJudContext{\Delta}{\Gamma,x:A^\bot}}(\AtkeyDenotations{P\CPJud \Delta}) \tag{by \Cref{def: Transformation On denotations typed contexts}}
            &\\
            &\quad=\{(\sigma,\gamma) \para (\sigma,a)\in\AtkeyDenotations{Q\CPJud{\Sigma,x:A}},\\
            &\quad\qquad (\gamma,a) \in  \AtkeyDenotations{\TypedContextProcess{K'}{P} \CPJud{\Gamma,x:A^\bot}}\} \tag{by the I.H.}\\
            &\quad =\AtkeyDenotations{\cut{x}{\TypedContextProcess{K'}{P}}{Q}\CPJud{\Gamma,\Sigma}} \tag{by \Cref{fig:DenotationalSemantics}}
        \end{align*}

        \item Case ${\TypedContextProcess{K'}{\Hole}}\para{Q}\CPJudContext{\Delta}{\Gamma,\Sigma}$. We have:
        \begin{align*}
            &\AtkeyDenotations{{\TypedContextProcess{K'}{\Hole}}\para{Q}\CPJudContext{\Delta}{\Gamma,\Sigma}}(\AtkeyDenotations{P\CPJud \Delta})\\
            &\quad=\{(\gamma,\sigma) \para (\sigma)\in\AtkeyDenotations{Q\CPJud{\Sigma}},\\
            &\quad\qquad (\gamma) \in  \AtkeyDenotations{\TypedContextProcess{K'}{\Hole}\CPJudContext{\Delta}{\Gamma}}(\AtkeyDenotations{P\CPJud \Delta}) \tag{by \Cref{def: Transformation On denotations typed contexts}}\\
            &\quad=\{(\sigma,\gamma) \para (\sigma)\in\AtkeyDenotations{Q\CPJud{\Sigma}},\\
            &\quad\qquad (\gamma) \in  \AtkeyDenotations{\TypedContextProcess{K'}{P} \CPJud{\Gamma}}\} \tag{by the I.H.}\\
            &\quad =\AtkeyDenotations{{\TypedContextProcess{K'}{P}}\para{Q}\CPJud{\Gamma,\Sigma}} \tag{by \Cref{fig:DenotationalSemantics}}
        \end{align*}
  \end{itemize}
\end{proof}

 \Cref{def: Transformation On denotations typed contexts} can be specialized to transformer contexts, so as to obtain the following function: $\AtkeyDenotations{\TranProNames{\Hole}{\Delta}}: \PowerSet{\AtkeyDenotations{\Delta}}\to \PowerSet{\AtkeyDenotations{\DualTranLau{\Delta},w:\unit}}$.
Putting all these elements together, we can show that transformers also internalize Laurent's translation on the level of denotations.
\begin{lemma}
  For any type $A \in \CPMixZeroTwo$, and for any $a \in \AtkeyDenotations{A}$, $b \in  \AtkeyDenotations{\DualTranLau{A}}$, we have
  $$(a,b) \in \AtkeyDenotations{\Transformer{x,y}{A}} \iff \FDenotationsInd{A}{a} = b$$
\end{lemma}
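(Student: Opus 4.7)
The plan is to prove the biconditional by structural induction on the type $A$, using \Cref{DefinitionTransformers} to unfold $\Transformer{x,y}{A}$ and then applying the denotational rules of \Cref{fig:DenotationalSemantics} to compute $\AtkeyDenotations{\Transformer{x,y}{A}}$ step by step. In each case the goal is to read off the computed set and compare it with the recursive clause for $\FDenotationsInd{A}{-}$ in \Cref{t:transden}.

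For the base cases, $\Transformer{x,y}{\bot}$ is a forwarder whose denotation is $\{(*,*)\}$, matching $\FDenotationsInd{\bot}{*}=*$; the case $\unit$ is a direct unfolding of $\boutt{y}{x'}(\forward{y}{x}\para \Emptyimpp{x'}\inact)$ whose only denotation is $(*,(*,*))$, matching $\FDenotationsInd{\unit}{*}=(*,*)$. For the multiplicative cases $A\Tensor B$ and $A\parrType B$, the induction hypothesis gives $(a,a')\in\AtkeyDenotations{\Transformer{y,y'}{A}}$ iff $a'=\FDenotationsInd{A}{a}$ and similarly for $B$. Peeling off the input/output prefixes in $\Transformer{x,z}{A\Tensor B}$ and $\Transformer{x,x'}{A\parrType B}$, using the denotational rules for $\parrType$ and $\Tensor$, yields pairs of the form $((a,b),((\FDenotationsInd{A}{a},*),(\FDenotationsInd{B}{b},*),*))$ and $((a',b'),(\FDenotationsInd{A}{a'},\FDenotationsInd{B}{b'}))$, which match \Cref{t:transden}. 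The additive cases $A_1\caseType A_2$ and $A_1\choiceType A_2$ are similar, using the denotational clauses for branch/selection to yield the indexed disjoint union structure.

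The main obstacles are the exponential cases $\ServerTypeSingle{A}$ and $\ClientTypeSingle{A}$, where the transformers wrap a replicated server/client around a recursive transformer call, and where the denotational rule for $\oc$ produces multisets via a $k$-fold union $\uplus_{j=1}^k$. Here the calculation closely parallels that of \Cref{lemma: observation transformer client}, which already works out one direction: for $\ClientTypeSingle{A}$ one shows $(a,b)\in\AtkeyDenotations{\Transformer{x,x'}{\ClientTypeSingle{A}}}$ iff $a=\bag{a_1,\dots,a_k}$ and $b=\uplus_{j=1}^k\bag{((\FDenotationsInd{A}{a_j},*),*)}$, which by \Cref{lemma: union of multisets} equals $\FDenotationsInd{\ClientTypeSingle{A}}{\bag{a_1,\dots,a_k}}$. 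The $\ServerTypeSingle{A}$ case is analogous, using the $\oc$ rule to aggregate the multiset of the inner $\AtkeyDenotations{\Transformer{y,y'}{A}}$ observations and then applying \Cref{lemma: union of multisets} again.

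For both directions of the biconditional, injectivity of $\FDenotationsInd{A}{-}$ (\Cref{lemma: injectivity Function L formula}) ensures that equality of $b$ with $\FDenotationsInd{A}{a}$ is uniquely determined by the computation; this is what lets us convert the set-theoretic characterization of the transformer's denotations into the functional statement claimed. The forward direction extracts, for each pair in $\AtkeyDenotations{\Transformer{x,y}{A}}$, witnesses matching the pattern dictated by the inductive unfolding, and the reverse direction builds such a pair explicitly from a given $a$ by running the transformer on it. The combinatorial bookkeeping in the exponential cases, in particular tracking how each occurrence of $*$ introduced by the $\bot$-suffixes in $\DualTranLau{(-)}$ is produced by a concrete subprocess of the transformer, is the most delicate step and is precisely where \Cref{lemma: union of multisets,lemma: observation transformer client} are applied.
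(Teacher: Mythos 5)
Your plan is correct and takes essentially the same route as the paper, whose entire proof of this lemma is the one-liner ``by induction on the type $A$'' --- you are simply spelling out that structural induction, unfolding $\Transformer{x,y}{A}$ via \Cref{DefinitionTransformers} and computing denotations case by case against \Cref{t:transden}. One small remark: the appeal to injectivity of $\FDenotationsInd{A}{-}$ is superfluous, since the statement only asserts that $\AtkeyDenotations{\Transformer{x,y}{A}}$ is the graph of $\FDenotationsInd{A}{-}$, and neither direction of the biconditional needs injectivity.
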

\begin{proof}
  By induction on the type $A$.
\end{proof}
\begin{theorem}
  \label{thm:transformers_L}
  For any typing context $\Delta$, and for any set $X \subseteq \AtkeyDenotations{\Delta}$,
  $$\AtkeyDenotations{\TranProNames{\Hole}{\Delta}\CPJudContext {\Delta}{\DualTranLau{\Delta},w:\unit}}(X) = \FDenotations{\Delta}{X}$$
\end{theorem}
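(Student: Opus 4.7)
The plan is to proceed by induction on the length $n$ of $\Delta = x_1:A_1,\ldots,x_n:A_n$, leveraging the preceding lemma (which states that $(a,b) \in \AtkeyDenotations{\Transformer{x,y}{A}}$ if and only if $\FDenotationsInd{A}{a} = b$) to reduce the denotation of each transformer to its corresponding component of $\FDenotations{\Delta}{-}$.

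For the base case $n = 0$, the transformer context collapses to $\Hole \para \Emptyoutt{z}$. Unfolding \Cref{def: Transformation On denotations typed contexts} and using that $\AtkeyDenotations{\Emptyoutt{z}\CPJud z:\unit} = \{(*)\}$, we obtain $\AtkeyDenotations{\TranProNames{\Hole}{\cdot}}(X) = \{(\gamma,*) \para \gamma \in X\}$, which matches $\FDenotations{\cdot}{X}$ by \Cref{def: transformation on denotations}.

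For the inductive step, write $\Delta = \Delta', x:A$. I would decompose $\TranProNameswone{\Hole}{\Delta}$ as $\cut{x}{K'}{\Transformer{x,y}{A}}$, where $K'$ is a typed context that places the transformers for $\Delta'$ around the hole while leaving $x:A$ available for the outer cut. This decomposition is valid up to the denotation-preserving cut-permutations recalled in \Cref{theorem: observational equivalences} together with the \MixTwo permutations of \Cref{lemma: observational equivalences rule mix}. Applying \Cref{def: Transformation On denotations typed contexts} at the outer cut, and then invoking the preceding lemma on $\AtkeyDenotations{\Transformer{x,y}{A}}$ together with the induction hypothesis on the sub-context of $K'$ handling $\Delta'$, yields precisely the tuples $(\FDenotationsInd{A_1}{a_1},\ldots,\FDenotationsInd{A_n}{a_n},\FDenotationsInd{A}{a},*)$ ranging over $(a_1,\ldots,a_n,a) \in X$, which by \Cref{def: transformation on denotations} is exactly $\FDenotations{\Delta', x:A}{X}$.

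The main technical obstacle is the bookkeeping around this decomposition, since the hole of $K'$ must carry the extra name $x:A$ as a pass-through, whereas the induction hypothesis is stated for a context whose hole is exactly $\Delta'$. This can be handled either by strengthening the inductive statement to allow the hole to contain additional free names (a parameterization already implicit in \Cref{def: Transformation On denotations typed contexts}), or by computing $\AtkeyDenotations{\TranProNameswone{\Hole}{\Delta}}(X)$ directly as an iterated application of \Cref{def: Transformation On denotations typed contexts} to the nested-cut shape prescribed in \Cref{DefinitionTransformers}, sidestepping the recursion on $K'$ and reducing the result to a single existential quantifying the transformer denotations over each $x_i:A_i$.
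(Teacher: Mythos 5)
Your proposal is correct and takes essentially the same route as the paper: the paper computes $\AtkeyDenotations{\TranProNameswone{\Hole}{\Delta}}(X)$ by directly unfolding the nested cuts one at a time via \Cref{def: Transformation On denotations typed contexts} and then applying the lemma $(a,b)\in\AtkeyDenotations{\Transformer{x,y}{A}} \iff \FDenotationsInd{A}{a}=b$ componentwise, which is exactly the ``iterated application'' option you give as your second way of handling the pass-through bookkeeping. The only cosmetic difference is that the paper phrases this as a single chain of equivalences rather than an explicit induction on the length of $\Delta$.
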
 
\begin{proof}
  Let $\Delta = x_{1}:A_{1},\dots,x_{n}:A_{n}$.
  Then, $\AtkeyDenotations{\DualTranLau{\Delta}, w:\unit} = \AtkeyDenotations{\DualTranLau{A_{1}}} \times \dots \times \AtkeyDenotations{\DualTranLau{A_{n}}} \times\{\ast\}$.
  Then, we reason as follows:
  \begin{align*}
    & (\delta_{1},\dots,\delta_{n},\ast)\in  \AtkeyDenotations{\TranProNames{\Hole}{\Delta}}(X) \\
    \iff & (\delta_{1},\dots,\delta_{n})\in \AtkeyDenotations{\TranProNameswone{\Hole}{\Delta}}(X) = \AtkeyDenotations{\cut{x_n}{\cdots\cut{x_1}{\Hole}{\Transformer{x_1,x'_1}{A_1}}}{\cdots)\para \Transformer{x_n,x'_n}{A_n}}}(X)\\
   \iff & \exists d_{n} \in \AtkeyDenotations{A_{n}}.
    \begin{array}[c]{l}
      (d_{n},\delta_{n})\in \AtkeyDenotations{\Transformer{x_{n},x'_{n}}{A_{n}}}\quad \land\\      
      (\delta_{1}, \dots, \delta_{n-1}, d_{n})\in \AtkeyDenotations{\cut{x_{n-1}}{\cdots\cut{x_1}{\Hole}{\Transformer{x_1,x'_1}{A_1}}}{\cdots)}}(X)
    \end{array}\\
   \iff & \exists d_{n} \in \AtkeyDenotations{A_{n}}, \dots, d_{1}\in\AtkeyDenotations{A_{1}}.
    \begin{array}[c]{l}
      (d_{n},\delta_{n})\in \AtkeyDenotations{\Transformer{x_{n},x'_{n}}{A_{n}}}\ \land \dots \land\ (d_{1},\delta_{1})\in \AtkeyDenotations{\Transformer{x_{1},x'_{1}}{A_{1}}}\land
\\      
      (d_{1}, \dots, d_{n})\in \AtkeyDenotations{\Hole}(X)
    \end{array}\\
    \iff&\exists (d_{1}, \dots, d_{n})\in X.\ \FDenotationsInd{A_{1}}{d_{1}}=\delta_{1}\land \dots\land \FDenotationsInd{A_{n}}{d_{n}}=\delta_{n}\\
    \iff&(\delta_{1},\dots,\delta_{n},\ast) \in \FDenotations{\Delta}{X}
\end{align*}
\end{proof}
Thus,  \Cref{thm:transformers_L}  shows that $\TranProNames{\Hole}{\Delta}$ is the proper internalization of Laurent's translation as a typed context in $\CPMixZeroTwo$.
In \Cref{sec:Laurent's denotations} we have shown that  Laurent's translation preserves and reflects equivalence of processes.
\Cref{thm:transformers_L} allows us to lift that result to processes with transformers, thus also obtaining a full abstraction result:
\begin{corollaryapx}[Full Abstraction (II)]\label{corollary: full abstraction transformers}
    For all $P\in \CPMixZeroTwo$,
\[P\ObsEquivAtk Q\CPJud \Delta \quad \Leftrightarrow\quad \TranProNames{P}{\Delta}\ObsEquivAtk\TranProNames{Q}{\Delta}\CPJud \DualTranLau{\Delta},w:\unit\]
\end{corollaryapx}
\begin{proof}
  Follows from the soundness of denotational semantics, \Cref{lemma: typed context denotation function,lemma: injectivity Function L formula,thm:transformers_L}.
\end{proof}

\section{Concluding Remarks}
\label{s:conc}

This paper has brought the  translation $\TranLau{(-)}: \CLL \to \ILL$ (due to Laurent~\cite{Laurent}), into the realm of concurrent interpretations of linear logic (\PaS). 
As we have seen, under the \PaS interpretation, the translation converts a classical process $P$ into an intuitionistic process $\TranLauP{P}$ (cf. \Cref{def: Tranformation of processes}); then, exploiting the fact that $\TranLauP{P}$ can be analyzed without changes in the classical setting, we contrast the  behavior of ${P}$ and $\TranLauP{P}$ using Atkey's observational semantics for \CP~\cite{ObservationsCP}.
Our two full abstraction results (\Cref{cor:fullabsden} and \Cref{corollary: full abstraction transformers}) give denotational and operational characterizations that extend the scope of Laurent's translation, and connect purely logical results with their corresponding computational interpretations. To our knowledge, ours is the first formal relationship of its kind. 

Differences between classical and intuitionistic variants of \PaS have already been observed by Caires and Pfenning~\cite{DBLP:conf/concur/CairesP10} and by Wadler~\cite{PropositionsAsSessions}.
There are superficial differences, such as the nature/reading of typing judgments (already discussed) and the number of typing rules---classical interpretations have one rule per connective, whereas intuitionistic ones have two: one for expressing the reliance on a behavior, another for expressing an offer.
But there are also more subtle differences, in particular the \emph{locality} principle, which, 
informally speaking, ensures that received names can only be used for sending. 
Intuitionistic interpretations enforce locality for shared names.
Consider, e.g., the process 
$P = \impp{x}{y}\Server{y}{z}Q$, which uses the name $y$ received on $x$ to define a server behavior.
Because $P$ does not respect locality, it is not typable in the intuitionistic system of~\cite{DBLP:conf/concur/CairesP10}.

Prior work by Van den Heuvel and P\'{e}rez~\cite{PiUll,DBLP:journals/corr/abs-2401-14763} studies this specific difference: they study the sets of processes typable under classical and intuitionistic interpretations, and use non-local processes such as $P$ to prove that the intuitionistic set is strictly included in the classical one.
Crucially, this prior work focuses on typing, and does not formally relate the behavioral equivalences in the two classes, as we achieve here by coupling $\TranLauP{-}$ with typed observational equivalences on processes.
In fact, our results go beyond~\cite{PiUll,DBLP:journals/corr/abs-2401-14763} in that \Cref{def: Tranformation of processes} stipulates how to translate a process $P$ with non-local servers into a corresponding process $\TranLauP{P}$ with localized servers.
This translation not only follows directly the logical translation by Laurent, but is also correct in a strong sense under the two different perspectives (denotational and operational) given by our full abstraction results.

The issue of translating non-local processes into local processes was studied, albeit in a different setting, by Boreale~\cite{Boreale}, who considers a calculus with locality as an intermediate language between the asynchronous $\pi$-calculus and the internal $\pi$-calculus.
His work makes heavy use of link processes, which are closely related to the forwarding process $\forward{x}{y}$ of \CP.
More fundamentally, because Borale's translations and results are framed in the untyped, asynchronous setting, comparisons with our work in the typed setting are difficult to draw.

We find it remarkable that our results leverage two separate, well-established technical ingredients, namely Laurent's translation and Atkey's observational equivalence and denotational semantics~\cite{ObservationsCP}.
In particular, Atkey's denotational semantics, based on the relational semantics of \CLL, is simple and  effective for our purposes, and also amenable to extensions (like incorporating support for \MixTwo).
Indeed, our denotational characterization  $\FDenotationsInd{A}{-}$ of Laurent's translation (\Cref{def: transformation on denotations}) benefits from this simplicity.

Our technical results make use of the mix principles---we use Rule~\MixZero in \secRef{sec:Laurent's denotations} and also and Rule~\MixTwo in \secRef{sec:Transformers}.
The use of mix principles in the context of \PaS has been  analyzed by Atkey et al.~\cite{Conflation}.
Already, one difference between Wadler's presentation in~\cite{PropositionsAsSessions} and Atkey's observational semantics in~\cite{ObservationsCP} is the use of Rule~\MixZero.
As we have briefly mentioned, our results in \secRef{sec:Laurent's denotations} hold also for \CPMixZeroTwo, the extension with both \MixZero and \MixTwo.

Finally, we note that Caires and Pfenning based their interpretation on Barber's Dual Intuitionistic Linear Logic (\DILL)~\cite{DILL}, which is based on sequents of the form $\Gamma;\Delta\DillSequent A$, where $\Gamma$ and $\Delta$ specify unrestricted and linear assignments, respectively.
This is a bit different from \ILL as considered by Laurent.
However, the two systems are equivalent (as logics), and so this difference does not jeopardize our results.
Barber~\cite{DILL} provides translations between \DILL and \ILL and shows that \ILL is isomorphic to the sub-system of \DILL with sequents of the form $\cdot\,; \Delta\DillSequent A$.
From the point of view of $\TranLau{(-)}$, this means that $\CPJud \Delta$ is a provable in \CLL iff $\cdot\,; \TranLau{\Delta}\DillSequent \unit$ is provable in \DILL.
Hence we can regard $\TranLauP{P}$ as a \DILL process.

This observation, together with the fact that  $\TranLauP{P}$ is typable with both $\TranLau{\Delta}\PiDiLLJud \unit$ and $\CPJud \DualTranLau{\Delta},\unit$, provides us with a solid groundwork for the computational interpretation of the translation.

\subparagraph*{Future Work}
We intend to adapt our approach to other denotational semantics for typed languages under \PaS, such as the one  by Kokke et al.~\cite{DBLP:journals/pacmpl/KokkeMP19}, whose  definition is inspired by Brzozowski derivatives and includes the polarity of names/channels. Also, we plan to study the potential of our full abstraction results as a tool for a better understanding of the locality principle for shared names in the session-typed setting. Moreover, it would be worthwhile exploring the consequences of varying the parameter $\myR$ in Laurent’s translation, which we currently instantiate with the simplest possible proposition/type.

From a more applied perspective, we believe that our work can shed light on connections between different existing implementation strategies for process calculi with session types based on linear logic. On the intuitionistic side, the work by Pfenning and Griffith develops SILL, a language based on \ILL~\cite{DBLP:conf/fossacs/PfenningG15}; on the classical side, recent work by Caires and Toninho develops a Session Abstract Machine based on \CLL~\cite{DBLP:conf/esop/CairesT24}. It would be interesting to establish to what extent our work can be applied to connect such language implementations.

\end{document}